\newcommand\BGfrom[1]{%
\AddEverypageHook{%
  \ifnum\value{page}>\numexpr#1-1\relax
    \backgroundsetup{
      contents={Over the limit},
      color=red,
      scale=10,
      opacity=0.5
    }%
  \fi
  \BgMaterial%
  }%
}
\newcolumntype{C}{>{$}c<{$}} % math-mode version of "c" column type
\newcolumntype{L}{>{$}l<{$}} % math-mode version of "l" column type
\patchcmd{\@sect}{\uppercase}{\MakeTextUppercase}{}{}
\patchcmd{\@sect}{\uppercase}{\MakeTextUppercase}{}{}
\newcommand{\tuple}[1]{\left(#1\right)}
\newcommand{\goesto}[1]{\xrightarrow{#1}}
\newcommand{\e}{\varepsilon}
\newcommand{\ignore}[1]{}
\newcommand{\wlg}{w.l.o.g.}
\newcommand{\Wlg}{W.l.o.g.}
\newcommand{\wrt}{w.r.t.}
\newcommand{\vs}{vs.}
\newcommand{\cf}{cf.}
\newcommand{\eg}{e.g.}
\newcommand{\Eg}{E.g.}
\newcommand{\aka}{a.k.a.}
\newcommand{\ie}{i.e.}
\newcommand{\Ie}{I.e.}
\newcommand{\rhs}{r.h.s.}
\newcommand{\lhs}{l.h.s.}
\newif\ifstartedinmathmode
\newcommand*{\st}{%s.t.~
  \relax\ifmmode\startedinmathmodetrue\else\startedinmathmodefalse\fi
  \ifstartedinmathmode{\;\cdot\;}\else{s.t.}\fi%
}
\newcommand{\A}{\mathbb A}
\newcommand{\C}{\mathbb C}
\newcommand{\N}{\mathbb N}
\newcommand{\Z}{\mathbb Z}
\newcommand{\Q}{\mathbb Q}
\renewcommand{\SS}{\mathcal S}
\newcommand{\FF}{\mathcal F}
\let\oldcirc\circ
\let\circ\relax
\DeclareMathOperator{\circ}{\oldcirc\,}
\let\oldshuffle\shuffle
\let\shuffle\relax
\DeclareMathOperator{\shuffle}{\oldshuffle}
\DeclareMathOperator{\hadamard}{\odot}
\DeclareMathOperator{\infiltration}{\uparrow}
\newcommand{\card}[1]{\left|#1\right|}
\newcommand{\set}[1]{\left\{#1\right\}}
\newcommand{\setof}[2]{\set{#1 \;\middle|\; #2}}
\newcommand{\poly}[2]{#1[#2]}
\newcommand{\polyof}[3]{\poly {#1} {#2 \mid #3}}
\newcommand{\ncpoly}[2]{#1\langle#2\rangle}
\newcommand{\powerseries}[2]{#1[\![#2]\!]}
\newcommand{\series}[2]{#1\langle\!\langle#2\rangle\!\rangle}
\newcommand{\commseries}[2]{#1\langle\!\langle#2\rangle\!\rangle^{\mathsf{comm}}}
\newcommand{\ideal}[1]{\langle#1\rangle}
\newcommand{\idealof}[2]{\ideal{#1 \mid #2}}
\newcommand{\sem}[1]{\left\llbracket#1\right\rrbracket}
\newcommand{\bigO}[1]{O\left(#1\right)}
\newcommand{\multiplicity}[2]{\#_{#1}{#2}}
\newcommand{\Parikh}[1]{\multiplicity {} {#1}}
\newcommand{\coefficient}[2]{[#1]#2}
\newcommand{\inner}[2]{\langle#1, #2\rangle}
\newcommand{\deriveleft}[1]{\delta^L_{#1}}
\newcommand{\deriveright}[1]{\delta^R_{#1}}
\let\oldpartial\partial
\renewcommand{\partial}[2]{\oldpartial_{#1}{#2}}
\newcommand{\shift}[1]{\sigma_{#1}}
\newcommand{\one}{\mathbbm 1}
\newcommand{\zero}{\mathbb 0}
\newcommand{\CDA}{\textsf{CDA}}
\newcommand{\EXPTIME}{\textsf{EXPTIME}}
\newcommand{\EXPSPACE}{\textsf{EXPSPACE}}
\newcommand{\TWOEXPTIME}{{2-\EXPTIME}}
\newcommand{\TWOEXPSPACE}{{2-\EXPSPACE}}
\newcolumntype{A}{
 >{$}r<{$}
 @{\extracolsep{0pt}}
 >{${}} l <{$}
% @{\extracolsep{\fill}}
} % A for "align"
\newcolumntype{M}{
 >{$}c<{$}
}
\theoremstyle{plain}
\newtheorem{theorem}{Theorem}
\newtheorem*{theorem*}{Theorem}
\newtheorem{lemma}{Lemma}
\newtheorem{claim}{Claim}
\newtheorem*{claim*}{Claim}
\newtheorem*{proposition*}{Proposition}
\newtheorem*{fact*}{Fact}
\theoremstyle{remark}
\newtheorem{remark}{Remark}
\newtheorem{example}{Example}
\theoremstyle{definition}
\newtheorem{definition}{Definition}
\theoremstyle{problem}
\newtheorem{problem}{Problem}
\newenvironment{claimproof}
  {\begin{proof}[Proof of the claim]}
    {\end{proof}}
\crefname{equation}{}{}%{Eq.}{Eqns.}
\crefname{definition}{Definition}{Definitions}
\crefname{claim}{Claim}{Claims}
\crefname{proposition}{Proposition}{Propositions}
\crefname{lemma}{Lemma}{Lemmas}
\crefname{corollary}{Corollary}{Corollaries}
\crefname{theorem}{Theorem}{Theorems}
\crefname{figure}{Figure}{Figures}
\crefname{fact}{Fact}{Facts}
\crefname{example}{Example}{Examples}
\crefname{remark}{Remark}{Remarks}
\crefname{section}{§}{§§}
\crefname{subsection}{§}{§§}
\crefname{subsubsection}{§}{§§}
\crefname{enumi}{}{}
\crefname{problem}{Problem}{Problems}
\begin{document}

\title{The commutativity problem for effective varieties of formal series, and applications
\thanks{Partially supported by the ERC grant INFSYS, agreement no. 950398,
and the Polish National Science Centre grant no. 2024/54/E/ST6/00287.}
}

\author{\IEEEauthorblockN{}
}

\author{\IEEEauthorblockN{Lorenzo Clemente}
    \IEEEauthorblockA{\textit{Department of Informatics} \\
    \textit{University of Warsaw} \\
    Warszawa, Poland \\
    \href{https://orcid.org/0000-0003-0578-9103}}
}

\maketitle

\begin{abstract}
    %Let $\F$ be a field.
    A \emph{formal series} in noncommuting variables $\Sigma$ over the rationals is a mapping $\Sigma^* \to \Q$
    from the free monoid generated by $\Sigma$ to the rationals.
    We say that a series is \emph{commutative} if the value in the output
    does not depend on the order of the symbols in the input.
    The \emph{commutativity problem} for a class of series
    takes as input a (finite presentation of) a series from the class
    and amounts to establishing whether it is commutative.
    This is a very natural, albeit nontrivial problem,
    which has not been considered before from an algorithmic perspective.
    %
    % It is nontrivial since it amounts to checking an infinite number of conditions.
    
    We show that commutativity is decidable
    for all classes of series that constitute a so-called \emph{effective prevariety},
    a notion generalising Reutenauer's varieties of formal series.
    For example, the class of \emph{rational series},
    introduced by Schützenberger in the 1960's,
    is well-known to be an effective (pre)variety,
    and thus commutativity is decidable for it.

    In order to showcase the applicability of our result,
    we consider classes of formal series generalising the rational ones.
    We consider \emph{polynomial automata},
    \emph{shuffle automata},
    and \emph{infiltration automata},
    and we show that each of these models
    recognises an effective prevariety of formal series.
    Consequently, their commutativity problem is decidable,
    which is a novel result.
    We find it remarkable that commutativity can be decided in a uniform way
    for such disparate computation models.
    
    Finally, we present applications of commutativity outside the theory of formal series.
    We show that we can decide \emph{solvability} in sequences and in power series
    for restricted classes of algebraic difference and differential equations,
    for which such problems are undecidable in full generality.
    Thanks to this, we can prove that the syntaxes of multivariate \emph{polynomial recursive sequences}
    and of \emph{constructible differentially algebraic power series} are effective,
    which are new results which were left open in previous work. 
    %
%    We conclude with an application in algorithmic control theory,
%    where we can use commutativity to decide whether a polynomial control system
%    is an \emph{analytic} function of its inputs.
\end{abstract}

\begin{IEEEkeywords}
    formal series, commutativity, weighted automata, number sequences, differential equations
\end{IEEEkeywords}

%!TEX root main.tex
\section*{Introduction}

A \emph{univariate formal series} over the rationals is a function $\N \to \Q$
(sometimes written $\powerseries \Q x$).
They arise in a variety of disciplines, \eg,
Taylor series in mathematical analysis~\cite{Rudin:Analysis:1976},
generating functions in combinatorics~\cite{Stanley:1986,Wilf:generatingfunctionology:2005},
and Z-transforms in engineering~\cite{OppenheimSchafer:DSP:1975}.
The denomination ``formal'' highlights the algebraic aspect, rather than the analytic one;
%which is convenient for the development of their theory;
for brevity, we will refer to formal series simply as series.

The theory of univariate series can be developed in several directions.
One direction is more popular in mathematics,
and considers \emph{multivariate series in commuting indeterminates $X = \set{x_1, \dots, x_d}$},
which are functions of the form $\N^d \to \Q$ for some $d \in \N_{\geq 1}$
(sometimes written $\powerseries \Q X$)~\cite{PemantleWilsonMelczer:CUP:2024}.
Another direction is more popular in theoretical computer science,
and considers \emph{multivariate series in noncommuting indeterminates $\Sigma = \set{a_1, \dots, a_d}$},
which are functions of the form $\Sigma^* \to \Q$
(sometimes called \emph{weighted languages} and written $\series \Q \Sigma$)~\cite{BerstelReutenauer:CUP:2010}.
In fact, noncommuting indeterminates are more general,
since we can recover commuting indeterminates as a special case:
Namely, as those series $f \in \series \Q \Sigma$ which are \emph{commutative},
\ie, $f(u) = f(v)$ for all words $u, v \in \Sigma^*$ which are permutations of each other.
This is the unifying view that we shall adopt.

In the context of formal languages and automata theory,
series in noncommuting indeterminates have been introduced by Schützenberger,
who considered the class of \emph{rational series}~\cite{Schutzenberger:IC:1961}
(\cf~\cite{BerstelReutenauer:CUP:2010} for an extensive introduction).
This class has since appeared in various areas of computer science and mathematics~\cite{HandbookWA},
such as speech recognition~\cite{MohriPereira:Riley:CSL:2002},
digital image compression~\cite[Part IV]{HandbookWA},
learning~\cite{BalleMohri:AI:2015},
quantum computing~\cite{BlondelJeandelKoiranPortier:SIAM:JoC:2005},
as well as in Bell and Smertnig's resolution~\cite{BellSmertnig:SM:2021}
of a conjecture of Reutenauer~\cite{Reutenauer:FCT:1979}.

Series have also been applied to control theory,
as put forward by the pioneering work of Fliess~\cite{Fliess:1981}.
Namely, he associated to a control system its \emph{characteristic series},
in such a way that the behaviour of the system as an input-output transformer
is uniquely determined by it~\cite[Lemme II.1]{Fliess:1981}.
He then recovered the rational series
as the characteristic series of the so-called \emph{bilinear systems}~\cite{Fliess:MST:1976}.
This is an extensively studied class~\cite{DAlessandroIsidoriRuberti:SIAMJoC:1974},
whose theory can be derived from well-known results for rational series
(\eg, decidability of equivalence checking and existence of minimal realisations).

It is in relation with control theory
that commutativity naturally enters the picture.
Fliess isolated an important subclass of bilinear systems,
namely those 
%, which we could call \emph{memoryless}
where the output at time $t$ depends only on $t$ and the inputs at time $t$ (but not at times $< t$),
and proved that their characteristic series coincide with the \emph{commutative} rational series~\cite[Proposition 3.6]{Fliess:MST:1976}.

Commutativity, besides arguably being a fundamental algebraic notion per se,
also appears in a number of other works.
For instance, it is used as a simplifying assumption
in Karhumäki's study on \emph{$\N$-rational series}~\cite{Karhumaki:TCS:1977}
(later corrected in~\cite{Lopez:STACS:2025}),
it is featured in Litow's short note on the undecidability of the multivariate \emph{Skolem problem}~\cite{Litow:IPL:2003},
in the study of \emph{DT0L systems}~\cite{Karhumaki:TCS:1979},
in quantum finite automata~\cite{HuangCao:CTISC:2021},
and in the formal verification of MapReduce frameworks~\cite{ChenSongWu:CAV:2016}.
Under the assumption of commutativity,
an elementary upper bound for the computation of \emph{algebraic invariants} of weighted finite automata
has been recently obtained~\cite{NosanAmaurySchmitzShirmohammadiWorrell:ISSAC:2022}
(the complexity of the general problem being open~\cite{HrushovskiOuakninePoulyWorrell:LICS:2018,HrushovskiOuakninePouly:Worrell:JACM:2023}).

All these works leveraging commutativity offer no indication
as whether one can decide it algorithmically. %for a given class of series.
Moreover, most decision problems for rational series are undecidable
\eg, \emph{containment} and \emph{nonemptiness}~\cite[Theorem 21]{NasuHonda:IC:1969}\-
(\cf~also~\cite[Theorem 6.17]{Paz:1971}),
% therefore whether commutativity is decidable is nontrivial.
therefore finding problems which are decidable for the whole class is challenging.

\subsection*{Contributions}

We study the \emph{commutativity problem} for classes of series from a computational perspective.
For rational series, thanks to strong structural properties~\cite[Proposition 3.1]{Fliess:1971-1972}
commutativity reduces to whether the linear maps in a \emph{minimal linear representation} commute pairwise.
Thanks to Schützenberger's seminal result,
such minimal representations can be computed efficiently\-
~\cite[III.B.1.]{Schutzenberger:IC:1961},
and thus we can check commutativity in polynomial time by a direct numerical computation.
This is in line with the fact that decidable problems on rational series
often rely on minimisation in an essential way\-
\cite{BellSmertnig:LICS:2023,Kostolanyi:IC:2023,JeckerMazowieckiPurser:LICS:2024}.
For more general classes where minimisation is not available,
commutativity is an open problem.

\subsubsection*{First contribution}
%
%In our first contribution,
We show that commutativity is decidable for \emph{effective prevarieties} of series,
a generalisation of Reutenauer's \emph{varieties}~\cite[Sec.~III.1]{Reutenauer_1980}.
\emph{Prevarieties} are classes of series closed under linear combinations,
right, and left derivatives (\cf~\cref{sec:preliminaries}).
\emph{Effectiveness} demands that series in the class can be manipulated by algorithms
working on finite representations.
This achieves a delicate balance:
It is sufficiently strong to ensure decidability of commutativity,
but sufficiently weak to be applicable to several classes of series
not known to have minimal representations.
\begin{restatable}{theorem}{thmCommutativityForPrevarieties}
    \label{thm:commutativity for effective prevarieties}
    Let $\SS$ be an effective prevariety of series.
    The commutativity problem for $\SS$ is decidable.
\end{restatable}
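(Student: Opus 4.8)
The plan is to reduce commutativity to the pairwise commutation of the left‑derivative operators on the stable space of the series, mirroring the classical criterion for rational series (commuting matrices in a minimal representation) but phrased without any minimisation, and then to decide this commutation using the effectiveness of $\SS$. First I would fix the characterisation. Writing $\derleft{u}{f}$ for the iterated left derivative, so that $\derleft{u}{f}(w) = f(uw)$, a direct computation gives, for all $a,b\in\Sigma$ and $u\in\Sigma^*$,
\[
\bigl(\deriveleft{a}\deriveleft{b} - \deriveleft{b}\deriveleft{a}\bigr)\,\derleft{u}{f}\,(w) \;=\; f(u\,ba\,w) - f(u\,ab\,w)\qquad(w\in\Sigma^*),
\]
whose vanishing is exactly the invariance of $f$ under the adjacent transposition at position $|u|{+}1$. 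Since adjacent transpositions at all positions generate all permutations, $f$ is commutative if and only if the operators $\set{\deriveleft a}_{a\in\Sigma}$ pairwise commute on the stable space $W := \linearspan{\derleft u f : u\in\Sigma^*}$, that is, $\deriveleft a\deriveleft b\,\derleft u f = \deriveleft b\deriveleft a\,\derleft u f$ for every $u$ and every $a,b$. This is the exact generalisation of the ``linear maps commute pairwise'' criterion, and it makes sense for any prevariety, since each commutator $\deriveleft a\deriveleft b\,\derleft u f - \deriveleft b\deriveleft a\,\derleft u f$ again lies in $\SS$ by closure under left derivatives and linear combinations.

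The easy (co-semidecidable) direction comes next. Effectiveness lets me build a finite representation of each such commutator and test it for zeroness. I would therefore enumerate all $u\in\Sigma^*$ and all pairs $a,b\in\Sigma$, testing each commutator; if $f$ is \emph{not} commutative then some test fails after finitely many steps, witnessing non-commutativity. This semi-algorithm always halts on non-commutative inputs, using nothing beyond computability of left derivatives, linear combinations, and zeroness.

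The real work lies in certifying commutativity in finite time, i.e.\ verifying that the \emph{whole} infinite family of commutators vanishes. For a series whose stable space $W$ is finite dimensional (the rational-like case) this is routine: I would saturate the ascending chain $W_n := \linearspan{\derleft u f : \length u \le n}$ until $W_{n+1} = W_n$, detectable by zeroness and linear-independence tests; at that point $W_n$ is closed under all $\deriveleft a$, and it suffices to check commutation on its finite spanning set $\set{\derleft u f : \length u \le n}$, which certifies commutation on all of $W$.

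The hard part will be that the classes showcased in the paper (polynomial, shuffle, and infiltration automata) recognise non-rational series, for which $W$ is infinite dimensional and the chain $W_0 \subsetneq W_1 \subsetneq \cdots$ never stabilises, so the naïve saturation does not terminate. Here I expect to use effectivity in its full strength: instead of finite dimension, the finite presentation of $\SS$ should give a finite, effective description of the left-quotient dynamics on which pairwise commutation becomes a single decidable condition quantifying over all positions $u$ at once (by a Noetherian/linear-algebraic argument on the presentation together with the zeroness oracle), so that checking it on the finitely many generators of the presentation certifies commutativity even when $W$ is infinite dimensional. Running this certifier in parallel with the non-commutativity search of the second paragraph then yields a terminating decision procedure. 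The crux of the proof — and the step I expect to be the main obstacle — is precisely to show that effectiveness forces the commutativity certifier to terminate in the infinite-dimensional case, where no finite stable subspace is available.
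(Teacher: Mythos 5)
Your reduction to the vanishing of the commutators $(\deriveleft a\deriveleft b-\deriveleft b\deriveleft a)\,\derleft u f$ for \emph{all} prefixes $u\in\Sigma^*$ is correct but leaves you with infinitely many conditions, and — as you yourself concede in the last paragraph — you have no argument that the certifier terminates when the stable space $\linearspan{\derleft u f : u\in\Sigma^*}$ is infinite dimensional. That is precisely the situation for the classes the theorem is meant to cover (polynomial, shuffle, infiltration automata), so the proposal as written does not prove the statement: the "Noetherian/linear-algebraic argument on the presentation" you hope for is not supplied by the definition of an effective prevariety, which only guarantees closure under linear combinations and derivatives plus decidable zeroness, not any finiteness of the quotient dynamics.

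The missing idea is to use the \emph{right} derivatives, which you never invoke even though closure under them is half of the prevariety axiom \textbf{(V.2)}. The paper characterises commutativity by the \emph{finite} system
\begin{align*}
\deriveleft a\deriveleft b f=\deriveleft b\deriveleft a f
\qquad\text{and}\qquad
\deriveleft a f=\deriveright a f
\qquad\text{for all }a,b\in\Sigma ,
\end{align*}
where the first family swaps only the \emph{first two} letters ($abw\sim baw$) and the second encodes the cyclic rotation $aw\sim wa$. A transposition of the first two positions together with an $n$-cycle generates all of $S_n$, so these $\card\Sigma^2+\card\Sigma$ equations already imply invariance under every permutation — no quantification over $u$ is needed. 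Each equation is between two series that lie in $\SS$ by \textbf{(V.1)}--\textbf{(V.2)}, and effectiveness decides each equality, giving a terminating algorithm outright. Your position-indexed swaps are mathematically equivalent to this as a characterisation, but only the rotation trick collapses them to a finite test; without it the "hard part" you flag is a genuine, unfilled gap rather than a technicality.
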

\noindent
This gives us a uniform framework
in which many classes of series can be shown to have decidable commutativity.
For example, rational series are effective (pre)varieties~\cite[Theorem III.1.1]{Reutenauer_1980},
giving an alternative decidability proof.

\subsubsection*{Second contribution}
We apply~\cref{thm:commutativity for effective prevarieties} to decide commutativity
for three, mutually incomparable classes of series, generalising the rational ones.
Namely, we consider series recognised by \emph{polynomial automata}\-
\cite{BenediktDuffSharadWorrell:PolyAut:2017}~(\cref{sec:hadamard}),
\emph{shuffle automata}~\cite{Clemente:CONCUR:2024}~(\cref{sec:shuffle}),
and \emph{infiltration automata}~(\cref{sec:infiltration}).
%
% All these models generalise weighted finite automata, which have a linear dynamics,
% to a nonlinear setting.
%
Polynomial automata (also known as \emph{cost register automata} over the rationals\-
\cite{AlurDAntoniDeshmukhRaghothamanYuan:LICS:2013})
and shuffle automata are known to have a decidable equality problem\-
\cite{BenediktDuffSharadWorrell:PolyAut:2017,Clemente:CONCUR:2024},
however they have very different computational properties,
\eg, Ackermannian for the former while elementary for the latter.
They were not known to be prevarieties.
The notion of infiltration automata is novel,
as well as decidability of their equality problem.
Thanks to the notion of effective prevariety,
we can decide commutativity for these classes in a common framework,
which we find remarkable.
As far as we know, no other natural problem beyond equality
was known to be decidable for these classes.
\begin{theorem}%[Main theorem]
    \label{thm:main theorem}
    Polynomial automata, shuffle automata, and infiltration automata
    recognise effective prevarieties.
    As a consequence, the commutativity problem is decidable for them.
\end{theorem}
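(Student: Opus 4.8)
The \emph{consequence} is immediate from \cref{thm:commutativity for effective prevarieties}: once each of the three classes of series is shown to be an effective prevariety, decidability of commutativity follows at once. So the plan is to verify, for each model separately, the defining closure properties of a prevariety — closure under linear combinations and under the left and right derivatives $\deriveleft a, \deriveright a$ — together with effectiveness, i.e.\ that these operations are computable on the finite representations and that zeroness is decidable (equivalently equality, since the class is closed under subtraction).

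Closure under linear combinations is the routine part and should follow directly from the definitions of the three automaton models by a disjoint-sum construction on representations, which is plainly computable. The heart of the matter is closure under derivatives, and here I would exploit the bilinear product $\diamond \in \{\hadamard, \shuffle, \infiltration\}$ underlying each model, via a Leibniz-type identity describing how a derivative interacts with the product. For the Hadamard product the derivative is multiplicative, $\derive a (f \hadamard g) = (\derive a f) \hadamard (\derive a g)$; for the shuffle product it is a genuine derivation, $\derive a (f \shuffle g) = (\derive a f) \shuffle g + f \shuffle (\derive a g)$; and for the infiltration product it acquires an extra diagonal term, $\derive a (f \infiltration g) = (\derive a f) \infiltration g + f \infiltration (\derive a g) + (\derive a f) \infiltration (\derive a g)$. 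The same identities hold for both the left and the right derivative. Since the atomic series generating each model are (at worst) rational, and rational series form an effective variety closed under derivatives by Reutenauer's classical result, a structural induction on the expression recognised by the automaton shows that every derivative of a recognised series is again recognised; the identities being constructive yields a computable representation of it.

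From the automaton perspective these identities have a clean operational reading that makes effectiveness transparent. Viewing the registers as series in the input word, the left derivative $\deriveleft a f$ is computed by re-initialising the configuration to the one reached after reading $a$, while the right derivative $\deriveright a f$ is computed by composing the output function with the $a$-transition; both transformations stay within the model. I expect the right derivative to be the main obstacle: unlike the left derivative, which merely advances the initial configuration and leaves updates and output untouched, it must be absorbed into the output, and for the infiltration product the extra diagonal term forces the construction to introduce cross-products $(\deriveright a f) \infiltration (\deriveright a g)$. The induction nevertheless closes, precisely because each class is by definition closed under its product $\diamond$ and under sums, so these new terms remain inside the class.

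Finally, effectiveness also requires decidable zeroness. For polynomial automata and for shuffle automata this is already known~\cite{BenediktDuffSharadWorrell:PolyAut:2017,Clemente:CONCUR:2024} and may be invoked directly; for infiltration automata it is new, and I would establish it as a separate ingredient, presumably by a reduction to, or an adaptation of, the argument for one of the other two models. Combining computable linear combinations and derivatives with decidable zeroness shows that all three classes are effective prevarieties, and \cref{thm:commutativity for effective prevarieties} then delivers decidability of their commutativity problem.
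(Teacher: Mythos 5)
Your overall architecture matches the paper's (linear combinations are routine, left derivatives come for free from the transition structure, right derivatives are the crux, and zeroness is imported for polynomial and shuffle automata and must be proved afresh for infiltration automata), and you correctly identify the three Leibniz-type identities and the fact that they hold for right derivatives as well. But the step that actually closes the right-derivative argument is missing, and the two ways you propose to close it both fail. First, the base case of your structural induction is wrong: the generators $f_1, \dots, f_k$ of a Hadamard-/shuffle-/infiltration-finite series are \emph{not} rational in general --- they are arbitrary series defined corecursively by the system $\deriveleft a f_i = p^{(a)}_i(f_1, \dots, f_k)$ --- so you cannot appeal to Reutenauer's result to conclude that $\deriveright a f_i$ is recognised. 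Second, your operational reading ``$\deriveright a f$ is computed by composing the output function with the $a$-transition'' only stays within the model for Hadamard automata, where $\Delta_a$ is an algebra endomorphism and hence $F \circ \Delta_a$ is again multiplicative; for shuffle and infiltration automata $\Delta_a$ is a ($\sigma$-)derivation, $F \circ \Delta_a$ is not multiplicative, and the resulting object is not an automaton of the model (the paper explicitly remarks that it is unclear how to define the new automaton directly).

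What the paper does instead, and what your sketch lacks, is to \emph{adjoin} $\deriveright a f_1, \dots, \deriveright a f_k$ as new generators and then verify that the enlarged algebra is still closed under left derivatives. This requires two ingredients you never invoke: that left and right derivatives commute, $\deriveleft b \deriveright a = \deriveright a \deriveleft b$ (\cref{lem:derive left right commutativity}), and that $\deriveright a$ is an endomorphism, resp.\ derivation, resp.\ $\sigma^R_a$-derivation of the relevant product. One then computes $\deriveleft b \deriveright a f_i = \deriveright a \deriveleft b f_i = \deriveright a\bigl(p^{(b)}_i(f_1, \dots, f_k)\bigr)$ and expands the right-hand side as a polynomial in the old and new generators (\cref{lem:Hadamard closure properties}, \cref{lem:shuffle finite right derivative}). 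Without the commutation lemma there is no handle on $\deriveleft b \deriveright a f_i$ and the system of equations does not close. Finally, decidability of equality for infiltration automata is part of the claim of effectiveness and cannot simply be deferred; the paper proves it by a stabilising chain of polynomial ideals $I_0 \subseteq I_1 \subseteq \cdots$ and Hilbert's basis theorem, an argument that does not follow from the polynomial- or shuffle-automaton cases by a reduction.
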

\noindent
We reduce commutativity to checking finitely many equalities (\cref{lem:commutativity}).
In turn, we show that equality reduces to commutativity,
yielding a complete picture.

\begin{example}
    \label{ex:intro:1}
    As an application of~\cref{thm:main theorem},
    consider a polynomial automaton over alphabet $\Sigma = \set{a_1, a_2}$ with one register $r$.
    Initially, the register starts at some rational value $r := c \in \Q$.
    Upon reading $a_1$, it is updated with the rule $r := r^2$
    and upon reading $a_2$ with $r := 1 - r^2$.
    After reading the input, the automaton outputs the value of $r$.
    For instance, reading $a_1 a_2$, the automaton outputs $1 - c^4$,
    while over $a_2 a_1$ it outputs $(1 - c^2)^2$.
    For $c = -1$, these two outputs coincide,
    and in fact our algorithms determines that
    the semantics of the automaton is commutative for this initial value.
    For other values, \eg, $c = 2$, the semantics is not commutative:
    $a_1 a_2 \mapsto 1 - 2^4 = -15$,
    but $a_2 a_1 \mapsto (1 - 2^2)^2 = 9$.
\end{example}

\subsubsection*{Third contribution}
%In our last contribution,
We apply \cref{thm:main theorem}
to two problems on series in commuting variables.
We find it intriguing that we can solve problems about series in commuting variables
using an algorithmic result in the theory of noncommuting variables.
This corroborates the hypothesis that the latter
are a more fundamental object than the former.

\paragraph{Multivariate polyrec sequences}
In the first problem, we propose a notion of \emph{multivariate polynomial recursive sequences} (\emph{polyrec}),
which are a subset of $\N^d \to \Q$ generalising the univariate polyrec sequences
from~\cite{Cadilhac:Mazowiecki:Paperman:Pilipczuk:Senizergues:ToCS:2021}.
Multivariate polyrec sequences are defined via systems of \emph{polynomial difference equations} of a certain form
(\cf~\cref{sec:multivariate polyrec} for a precise definition).
However, such systems may not have any solution at all,
and it is not clear whether this can be decided.
Solvability of polynomial difference equations is in general undecidable
(\cref{rem:undecidability of polynomial difference equations}),
which poses an obstacle to the development of a computational theory of multivariate polyrec sequences.
We overcome this difficulty by showing that commutativity of polynomial automata
can be used to decide whether polyrec systems are solvable~(\cref{thm:polyrec consistency}).
Thus, the syntax of multivariate polyrec sequences is effective.

\begin{example}
    \label{ex:intro:2}
    Consider the bivariate polyrec recursions
    $f(n_1 + 1, n_2) := f(n_1, n_2)^2$ and
    $f(n_1, n_2 + 1) := 1 - f(n_1, n_2)^2$, for all $n_1, n_2 \in \N$.
    As shown in~\cref{fig:polyrec constraints},
    $f(2, 2)$ can be computed in six possible ways,
    all of which must give the same value.
    By reducing to commutativity for the polynomial automaton of~\cref{ex:intro:1},
    our algorithm determines the existence of
    a sequence solution $f : \N^2 \to \Q$ starting at $f(0, 0) = -1$,
\end{example}

\newcommand{\thegrid}{
    \draw[step=7mm] (0,0) grid (2,2);
    \foreach \i in {0,...,2}
        \foreach \j in {0,...,2}
            \fill (\i,\j) circle (1pt);
}

\begin{figure}
    \begin{center}
        \tikzset{x=7mm, y=7mm, every node/.append style={font=\footnotesize}}
        \begin{tabular}{ccc}
            \begin{tikzpicture}
                \thegrid
                \draw[ultra thick, blue, ->] (0,0) -- (0,1);
                \draw[ultra thick, blue, ->] (0,1) -- (0,2);
                \draw[ultra thick, blue, ->] (0,2) -- (1,2);
                \draw[ultra thick, blue, ->] (1,2) -- (2,2);
                \fill[black]
                    (0,0) circle (2pt) node[below] {$f(0,0)$}
                    (2,2) circle (2pt) node[above] {$f(2,2)$};
            \end{tikzpicture}&
            \begin{tikzpicture}
                \thegrid
                \draw[ultra thick, blue, ->] (0,0) -- (0,1);
                \draw[ultra thick, blue, ->] (0,1) -- (1,1);
                \draw[ultra thick, blue, ->] (1,1) -- (1,2);
                \draw[ultra thick, blue, ->] (1,2) -- (2,2);
                \fill[black]
                    (0,0) circle (2pt) node[below] {$f(0,0)$}
                    (2,2) circle (2pt) node[above] {$f(2,2)$};
            \end{tikzpicture}&
            \begin{tikzpicture}
                \thegrid
                \draw[ultra thick, blue, ->] (0,0) -- (0,1);
                \draw[ultra thick, blue, ->] (0,1) -- (1,1);
                \draw[ultra thick, blue, ->] (1,1) -- (2,1);
                \draw[ultra thick, blue, ->] (2,1) -- (2,2);
                \fill[black]
                    (0,0) circle (2pt) node[below] {$f(0,0)$}
                    (2,2) circle (2pt) node[above] {$f(2,2)$};
            \end{tikzpicture}\\
            \begin{tikzpicture}
                \thegrid
                \draw[ultra thick, blue, ->] (0,0) -- (1,0);
                \draw[ultra thick, blue, ->] (1,0) -- (1,1);
                \draw[ultra thick, blue, ->] (1,1) -- (1,2);
                \draw[ultra thick, blue, ->] (1,2) -- (2,2);
                \fill[black]
                    (0,0) circle (2pt) node[below] {$f(0,0)$}
                    (2,2) circle (2pt) node[above] {$f(2,2)$};
            \end{tikzpicture}&
            \begin{tikzpicture}
                \thegrid
                \draw[ultra thick, blue, ->] (0,0) -- (1,0);
                \draw[ultra thick, blue, ->] (1,0) -- (1,1);
                \draw[ultra thick, blue, ->] (1,1) -- (2,1);
                \draw[ultra thick, blue, ->] (2,1) -- (2,2);
                \fill[black]
                    (0,0) circle (2pt) node[below] {$f(0,0)$}
                    (2,2) circle (2pt) node[above] {$f(2,2)$};
            \end{tikzpicture}&
            \begin{tikzpicture}
                \thegrid
                \draw[ultra thick, blue, ->] (0,0) -- (1,0);
                \draw[ultra thick, blue, ->] (1,0) -- (2,0);
                \draw[ultra thick, blue, ->] (2,0) -- (2,1);
                \draw[ultra thick, blue, ->] (2,1) -- (2,2);[]
                \fill[black]
                    (0,0) circle (2pt) node[below] {$f(0,0)$}
                    (2,2) circle (2pt) node[above] {$f(2,2)$};
            \end{tikzpicture}
        \end{tabular}
    \end{center}
    \caption{The six constraints on the computation of $f(2, 2)$.
%    $\shift 2 \shift 2 \shift 1 \shift 1 f = \shift 2 \shift 1 \shift 2 \shift 1 f = \shift 2 \shift 1 \shift 1 \shift 2 f = \shift 1 \shift 2 \shift 2 \shift 1 f = \shift 1 \shift 2 \shift 1 \shift 2 f = \shift 1 \shift 1 \shift 2 \shift 2 f$.
    }
    \label{fig:polyrec constraints}
\end{figure}
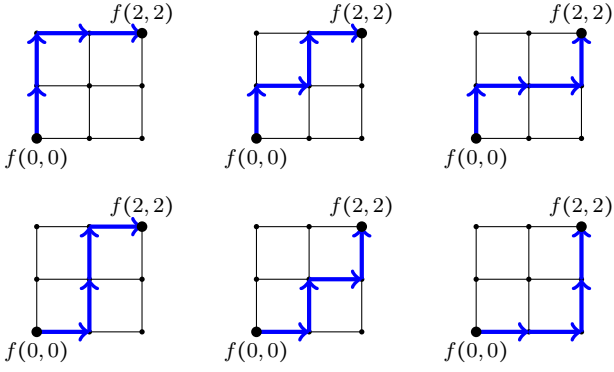

\paragraph{Multivariate \CDA~series}
In the second problem, we consider the notion of
\emph{constructible differentially algebraic} (\CDA) series in commuting variables~\cite{BergeronReutenauer:EJC:1990}.
They were introduced in the multivariate context by Bergeron and Sattler~\cite{BergeronSattler:TCS:1995},
as solutions of systems of \emph{polynomial differential equations} of a restricted form (\cf~\cref{sec:CDA algebra} for details).
Such systems may not have solutions
and deciding whether this is the case is open~\cite[Remark 27]{Clemente:CONCUR:2024}.
Solvability of polynomial differential equations is undecidable
by a result of Denef and Lipshitz~\cite[Theorem 4.11]{DenefLipshitz:1984}
(\cref{rem:undecidability of solvability}),
and thus it is not clear whether \CDA~series have a decidable syntax.
We overcome this issue by showing that commutativity of shuffle automata
can be used to decide whether \CDA~systems are solvable~(\cref{thm:decidability of CDA solvability}),
thus answering the open question positively.

\subsubsection*{Organisation}
In the next section~\cref{sec:preliminaries} we present some necessary preliminaries,
followed in~\cref{sec:commutativity} by a proof of~\cref{thm:commutativity for effective prevarieties}.
Polynomial automata will be studied in~\cref{sec:hadamard},
followed by shuffle automata in~\cref{sec:shuffle}.
In \cref{sec:conclusions} we discuss infiltration automata and other extensions.
More details about infiltration automata, and full proofs, can be found in the appendix.

\section{Preliminaries}
\label{sec:preliminaries}

% To simplify notation,
% we assume that function application binds tighter than any other operation.
% %
% Sometimes we use redundant parentheses for clarity or emphasis.
%
% Let $\Q$ be the field of rational numbers.
%
We say that a structure (such as a class of series, a class of languages, an algebra, etc...)
is \emph{effective} if its elements can be finitely represented (with a decidable syntax),
equality between elements can be decided based on their representations,
and all the structure operations can be carried over algorithmically.
For instance the class of regular languages is effective.
Most results in the paper hold for any field,
however for computability considerations we restrict our presentation to $\Q$,
which is an effective field.
%
% Let $\Sigma = \set{a_1, \dots, a_d}$ be a finite input alphabet.
% We denote letters from $\Sigma$ by $a, b, c$.
% %
% We denote by $\Sigma^*$ the set of \emph{finite words} over $\Sigma$,
% a monoid under the operation of concatenation,
% with neutral element the empty word $\e$.
% %
% We use $u, v, w$ to denote words in $\Sigma^*$.
% %
% The main object of study in this paper are \emph{series} over $\Q$,
% which are mappings $\Sigma^* \to \Q$.
% %
% These are sometimes called \emph{weighted languages}
% and are denoted by $\series \Q \Sigma$.
% %
% We refer to~\cite{BerstelReutenauer:CUP:2010} for a general introduction.
% %
We use $f, g, h$ to denote series in $\series \Q \Sigma$,
and we write a series $f$ as $\sum_{w \in \Sigma^*} f_w \cdot w$,
where the value of $f$ at $w$ is $f_w \in \Q$.
We also write $\coefficient w f$ for $f_w$.
The \emph{support} of a series $f$
is the set of words $w$ such that $f_w \neq 0$.
A \emph{polynomial} is a series with finite support. 
Thus, $3aab - \frac 5 2 bc$ is a polynomial
and $1+a+a^2 + \cdots$ is a series.
The set of series is equipped with a variety of operations.
It carries the structure of a vector space,
with \emph{zero} $\zero$,
\emph{scalar multiplication} $c \cdot f$ with $c \in \Q$,
and \emph{addition} $f + g$ defined element-wise
by $\coefficient w \zero := 0$,
$\coefficient w {(c \cdot f)} := c \cdot \coefficient w f$,
and $\coefficient w {(f + g)} := \coefficient w f + \coefficient w g$,
for every $w \in \Sigma^*$.
This vector space is equipped with two important families of linear maps,
called \emph{left} and \emph{right derivatives}
$\deriveleft a, \deriveright a : \series \Q \Sigma \to \series \Q \Sigma$,
for every $a \in \Sigma$.
They are the series analogues of language quotients:
For every series $f \in \series \Q \Sigma$,
they are defined by $\coefficient w {\deriveleft a f} := \coefficient {a \cdot w} f$,
resp., $\coefficient w {\deriveright a f} := \coefficient {w \cdot a} f$,
for every $w \in \Sigma^*$.
The following elementary observation will be very useful.
\begin{restatable}[Left and right derivatives commute]{lemma}{leftRightComm}
    \label{lem:derive left right commutativity}
    For every two letters $a, b \in \Sigma$,
    we have $\deriveleft a \circ \deriveright b = \deriveright b \circ \deriveleft a$.
\end{restatable}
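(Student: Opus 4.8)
The plan is to prove that the two composite linear maps $\deriveleft a \circ \deriveright b$ and $\deriveright b \circ \deriveleft a$ are equal as functions $\series \Q \Sigma \to \series \Q \Sigma$ by a direct coefficient computation. Since a series is completely determined by its coefficients, it suffices to fix an arbitrary series $f \in \series \Q \Sigma$ and an arbitrary word $w \in \Sigma^*$, and to verify that both maps assign the same coefficient to $w$ in the resulting series.

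First I would unfold the definitions for the composition $\deriveleft a \circ \deriveright b$, applying $\deriveright b$ first: by the defining identity $\coefficient w {\deriveleft a g} = \coefficient {a w} g$ followed by $\coefficient u {\deriveright b f} = \coefficient {u b} f$ (instantiated at $u = a w$), we obtain
\[
    \coefficient w {\deriveleft a (\deriveright b f)} = \coefficient {a w} {\deriveright b f} = \coefficient {a w b} f.
\]
Then I would do the same for the opposite order $\deriveright b \circ \deriveleft a$, now applying $\deriveleft a$ first:
\[
    \coefficient w {\deriveright b (\deriveleft a f)} = \coefficient {w b} {\deriveleft a f} = \coefficient {a w b} f.
\]
Both sides reduce to $\coefficient {a w b} f$, so they agree. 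As $w$ and $f$ were arbitrary, the two linear maps coincide.

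There is no real obstacle here: the statement is a one-line calculation. The only point requiring care is the bookkeeping of conventions, namely that $\deriveleft a$ strips (resp.\ prepends) the letter $a$ at the \emph{left} end of a word while $\deriveright b$ acts at the \emph{right} end with $b$. Because the two derivatives operate on disjoint ends of the word, they never interfere, which is the conceptual reason behind the commutation; the explicit computation above simply makes this precise.
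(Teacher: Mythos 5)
Your proof is correct and is essentially the paper's own argument: both proceed by evaluating the coefficient of an arbitrary word $w$ under each composite and observing that both reduce to $\coefficient{awb}{f}$, the commutation being a consequence of the two derivatives acting on opposite ends of the word. Nothing further is needed.
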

\noindent
(By $\deriveleft a \circ \deriveright b$ we mean the \emph{composition} of the two derivatives,
which is the function mapping a series $f$ to the series $\deriveleft a \deriveright b f$.
We avoid writing parenthesis for function application when possible.)
We will later equip series with several multiplication operations,
turning them into algebras.
%
% These operations will be defined later in the releavant sections.

% We don't need this
% \paragraph{Concatenation product (Cauchy product)}

% The \emph{concatenation product} of two series $f, g$
% is the series $f \cdot g$ \st~for every input word $w$ we have
% %
% \begin{align*}
%     \coefficient w {(f \cdot g)} := \sum_{w = u \cdot v} f_u \cdot g_v,
% \end{align*}
% %
% where the sum runs over all finite words $u, v \in \Sigma^*$.
% %
% Note that we have overloaded the symbol ``$\cdot$''
% to denote both the concatenation of words and of series.

% \paragraph{Shuffle product}
% \label{sec:shuffle product}
% The \emph{shuffle} of two words $u, v$ is the polynomial $u \shuffle v \in \ncpoly \Q \Sigma$,
% which is defined by induction on the length of words as follows:
% %
% \begin{align*}
%     u \shuffle \e
%         &:= \e \shuffle u := u, \\
%     (a \cdot u) \shuffle (b \cdot v)
%         &:= a \cdot (u \shuffle (b\cdot v)) + b \cdot ((a \cdot u) \shuffle v).
% \end{align*}
% %
% For instance, $ab \shuffle c = cab + acb + abc$ and $a \shuffle a = 2aa$.
% %
% This operation is lifted on series $f, g$ by distributivity:
% $f \shuffle g := \sum_{u, v \in \Sigma^*} f_u \cdot g_v \cdot (u \shuffle v)$.
% %
% We will come back to the shuffle product and its properties in \cref{sec:shuffle}.

% We now recall an important notion from the theory of series.
%
Let $\SS$ be a class of series,
and for an alphabet $\Sigma$ let $\SS_\Sigma$ the subset of $\SS$
consisting of series over input alphabet $\Sigma$. %of the form $\series \Q \Sigma$.
A class of series $\SS$ is a \emph{prevariety} %(\cf~\cite[Sec.~III.1]{Reutenauer_1980})
if the following two conditions hold, for every alphabet $\Sigma$:
\begin{enumerate}[label=\textbf{(V.\arabic*)}]
    
    \item \label{prevariety:A}
    $\SS_\Sigma$ is a vector space over $\Q$.

    \item \label{prevariety:B}
    For every series $f \in \SS_\Sigma$ and letter $a \in \Sigma$,
    the series $\deriveleft a f$ and $\deriveright a f$ are in $\SS_\Sigma$.

\end{enumerate}
(A \emph{variety}, as introduced by Reutenauer,
is a prevariety which satisfies an additional condition (\cf~\cite[Sec.~III.1]{Reutenauer_1980});
we will not need this notion in the rest of the paper.)
% \begin{enumerate}[label=\textbf{(V.\arabic*)}]
%     \setcounter{enumi}{2}
%     \item \label{prevariety:C}
%     For every series $f \in \SS_\Sigma$
%     and algebra homomorphism%
%     \footnote{
%         The corresponding requirement in \cite[Sec.~III.1]{Reutenauer_1980}
%         demands closure \wrt~algebra homomorphisms of the form $\varphi \in \series \Q \Gamma \to \series \Q \Sigma$.
%         %
%         However $f \circ \varphi$ may not be defined when $\varphi$ produces series with infinite supports.
%         For instance take $f = \varphi(a) = 1 + a + a^2 + \cdots$.
%         Then $(f \circ \varphi)(a) = \inner f {\varphi(a)} = 1 + 1 + \cdots$ is not defined.
%         %
%         For this reason $\varphi$ needs to be restricted to be a homomorphism of series with finite supports.
%     }
%     $\varphi : \ncpoly \Q \Gamma \to \ncpoly \Q \Sigma$,
%     % $\varphi \in \series \Q \Gamma \to \series \Q \Sigma$,
%     % $\varphi : \Gamma^* \to \series \Q \Sigma$,
%     the series $f \circ \varphi$ is in $\SS_\Gamma$.
% \end{enumerate}
% %
% In the rest of the paper we will need only the notion of prevariety.
%
A class of series $\SS$ is an \emph{effective prevariety} if it is a prevariety,
and moreover
\begin{enumerate}
    \item every series in $\SS$ admits a finite presentation,
    \item the closure properties \cref{prevariety:A} and \cref{prevariety:B} %in the definition of prevariety
    can be carried over algorithmically by manipulating finite presentations, and
    \item the equality problem %(\aka~\emph{word problem})
    is decidable for series in $\SS$.
\end{enumerate}
%
% The notion of effective variety is defined in the expected way.
%
\noindent
% In the first point, we do not care about the specific finite representation of a series,
% but only that it can be finitely represented.
% For instance, regular languages are finitely represented by finite automata, regular expressions, or formulas in monadic second-order logic.
% %
(By a \emph{finite presentation} for series we mean a concrete syntax
in the same sense, \eg, finite automata are finite presentations for regular languages.)
Thanks to the equivalence $f = g \iff f - g = \zero$ and~\cref{prevariety:A},
the notion of effective prevariety does not change
if we replace the third condition above with
\begin{enumerate}
    \item[3')] the zeroness problem is decidable for series in $\SS$.
\end{enumerate}
\noindent
For instance, the class of rational series is an effective prevariety~\cite[Sec.~III.2.b]{Reutenauer_1980}:
A rational series is finitely represented by a finite weighted automaton,
closure under the vector space operations and both derivatives can be carried over by algorithms manipulating automata,
and the equality problem is decidable~\cite{Schutzenberger:IC:1961}.
%!TEX root main.tex
\section{Commutativity problem}
\label{sec:commutativity}

We introduce the main problem that we study.
The \emph{commutative image} of a word $w \in \Sigma^*$ (sometimes called \emph{Parikh image})
over a totally ordered alphabet $\Sigma = \set{a_1, \dots, a_d}$
is the vector $\Parikh w := \tuple{\multiplicity {a_1} w, \dots, \multiplicity {a_d} w} \in \N^d$,
where $\multiplicity {a_j} w$ is the number of occurrences of $a_j$ in $w$.
For two words $u, v$ we write $u \sim v$ if they have the same commutative image,
\ie, one word can be obtained from the other by permuting the positions of letters.
For instance, $a_1a_1a_2 \sim a_1a_2a_1 \sim a_2a_1a_1$,
since the commutative image of all three words is $\tuple{2, 1} \in \N^2$,
but $a_1 \not\sim a_2$.
A series $f \in \series \Q \Sigma$ is \emph{commutative} 
(called \emph{échangeable} in~\cite{Fliess:1981})
if for every two words $u, v \in \Sigma^*$ with the same commutative image $u \sim v$,
we have $f_u = f_v$.
Let $\commseries \Q \Sigma$ be the set of commutative series.
It is easy to check that all operations defined in this paper preserve commutativity.
% \subsection{Preservation of commutativity}

% Let $\Sigma = \set{a_1, \dots, a_d}$ be a finite alphabet. %and $X = \tuple{X_1, \dots, X_d}$
% A commutative series $f \in \series \Q \Sigma$ contains the same information
% as a multivariate number sequence $\N^d \to \Q$
% (equivalently, formal power series in commuting variables $\powerseries \Q {X_1,  \dots, X_d}$).
% The question arises whether we can identify commutative series and number sequences
% in a way that preserves common series and sequences operations.
% %
% Towards this end, we observe that the basic series operations preserve commutativity.
% %
% \begin{lemma}
%     Let $f, g \in \series \Q \Sigma$ be two commutative series.
%     Then the following series are also commutative, for all $a \in \Sigma$ and $c \in \Q$:
%     $\deriveleft a f$, $\deriveright a f$, $c \cdot f$, and $f + g$.
% \end{lemma}
% %
% We will see in the next sections more examples
% of commutativity-preserving operations on series.

% \subsection{Characterisation of commutativity}

On the face of it, commutativity demands to check infinitely many equalities.
Our simple, but crucial observation is that it can be characterised by finitely many equations.
Clearly, if $f$ is commutative, then for every input symbols $a, b \in \Sigma$ we have
\begin{align}
    \label{eq:swap}
    \tag{\textsf{swap}}
    \deriveleft a \deriveleft b f &= \deriveleft b \deriveleft a f, \quad \text{ and } \\
    \label{eq:rotate}
    \tag{\textsf{rotate}}
    \deriveleft a f &= \deriveright a f.
\end{align}
Indeed, the first equation corresponds to exchanging the first two input letters $abw \sim baw$,
and the second equation corresponds to a rotation $aw \sim wa$.
Since swaps and rotations suffice to generate all commutatively equivalent words,
commutativity admits the following characterisation.
\begin{lemma}[Characterisation of commutativity]
    \label{lem:commutativity}
    A series $f \in \series \Q \Sigma$ is commutative if, and only if,
    it satisfies equations~\cref{eq:swap} and \cref{eq:rotate},
    for every $a, b \in \Sigma$.
\end{lemma}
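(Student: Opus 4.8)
The plan is to prove the nontrivial direction, that \cref{eq:swap} and \cref{eq:rotate} together imply commutativity; the converse is exactly the implication already observed just before the statement. First I would unfold the two equations into statements about coefficients using the definitions of the derivatives. Since
$\coefficient{w}{\deriveleft a \deriveleft b f} = \coefficient{aw}{\deriveleft b f} = \coefficient{baw}{f}$
and symmetrically $\coefficient{w}{\deriveleft b \deriveleft a f} = \coefficient{abw}{f}$, equation~\cref{eq:swap} is equivalent to $\coefficient{abw}{f} = \coefficient{baw}{f}$ for all $a, b \in \Sigma$ and $w \in \Sigma^*$. Likewise, since $\coefficient{w}{\deriveleft a f} = \coefficient{aw}{f}$ and $\coefficient{w}{\deriveright a f} = \coefficient{wa}{f}$, equation~\cref{eq:rotate} is equivalent to $\coefficient{aw}{f} = \coefficient{wa}{f}$ for all $a \in \Sigma$, $w \in \Sigma^*$. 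Thus the hypotheses say precisely that the coefficient of $f$ is unchanged under swapping the first two letters of a word and under cyclically rotating a word.

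Next I would introduce the equivalence relation $\equiv$ on $\Sigma^*$ generated by the two elementary rewrites $abw \leftrightarrow baw$ and $aw \leftrightarrow wa$ (for all letters and all suffixes $w$). By the previous paragraph each single rewrite preserves the coefficient of $f$, so $\coefficient u f = \coefficient v f$ whenever $u \equiv v$. Both rewrites manifestly preserve the commutative image, so $\equiv\ \subseteq\ \sim$; the crux, and the only genuinely combinatorial point, is the reverse inclusion $\sim\ \subseteq\ \equiv$.

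To establish $\sim\ \subseteq\ \equiv$ I would show that $\equiv$ realises an arbitrary adjacent transposition: for any decomposition $w = x\,a\,b\,y$ with $x, y \in \Sigma^*$ and $a, b \in \Sigma$, one has $x\,a\,b\,y \equiv x\,b\,a\,y$. This follows by rotating the prefix $x$ to the back via $|x|$ applications of the rotation rule (each moving the current first letter to the end), so that $w \equiv a\,b\,y\,x$; then applying the swap rule to the two leading letters to obtain $b\,a\,y\,x$; and finally rotating $|x|$ times in the opposite direction (each moving the current last letter to the front, which is legitimate since $\equiv$ is symmetric), restoring the prefix and yielding $x\,b\,a\,y$. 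Since the adjacent transpositions generate the full symmetric group, iterating this fact lets us reorder the letters of any word into any prescribed order, in particular into a fixed canonical ordering depending only on the commutative image. Hence any two words with the same commutative image are $\equiv$-equivalent to the same canonical word, and therefore to each other, giving $\sim\ \subseteq\ \equiv$.

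Combining the two inclusions yields $\equiv\ =\ \sim$, and since $f$ is constant on $\equiv$-classes it is constant on $\sim$-classes, i.e.\ $\coefficient u f = \coefficient v f$ whenever $u \sim v$, which is commutativity of $f$. I expect the combinatorial inclusion $\sim\ \subseteq\ \equiv$ to be the main obstacle, though it is elementary once the rotate--swap--rotate manoeuvre is isolated. The one point to check carefully in the write-up is that the swap rule really applies at the intermediate step $a\,b\,(y\,x) \leftrightarrow b\,a\,(y\,x)$ for the \emph{current} suffix $y\,x$; this is immediate because \cref{eq:swap} is assumed for every suffix $w$, so no uniformity issue arises.
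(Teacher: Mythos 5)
Your proposal is correct and follows the same route the paper takes: the paper offers no detailed proof beyond the remark that \cref{eq:swap} realises the exchange $abw \sim baw$, \cref{eq:rotate} realises the rotation $aw \sim wa$, and that swaps and rotations generate all of $\sim$. Your rotate--swap--rotate manoeuvre for realising an arbitrary adjacent transposition is precisely the missing combinatorial detail behind that last assertion, and it is carried out correctly.
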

\noindent
Thanks to \cref{lem:commutativity},
commutativity is decidable for effective prevarieties of series.
This simple but fruitful observation
will allow us to decide commutativity for several classes of series
for which the commutativity problem appears to be nontrivial.
\thmCommutativityForPrevarieties*
\begin{proof}
    Thanks to~\cref{lem:commutativity},
    commutativity reduces to finitely many equality tests~\cref{eq:swap} and~\cref{eq:rotate}.
    Since $\SS$ is a prevariety,
    if $f \in \SS$ then $\deriveleft a f, \deriveright a f \in \SS$ as well.
    By effectiveness, we can construct finite representations for these series
    and decide the equality tests.
\end{proof}
%
% For instance, since rational series are an effective prevariety,
% their commutativity problem is decidable.
% %
% This is a rather simple case, in fact commutative rational series
% are known to have very strong structural properties~\cite[Proposition~2.2.7]{Fliess:JMPA:1974}.
%
\begin{remark}
    \Cref{thm:commutativity for effective prevarieties} allows us
    to reduce commutativity to finitely many equality tests.
    The observation that commutativity can be characterised by finitely many equations
    has been exploited before~\cite[Prop.~2]{ChenSongWu:CAV:2016},
    albeit in a different context.
\end{remark}
Thanks to the following observation,
commutativity generalises equality in all our cases.
%
% This relies on the following observation.
% This will be used in the next sections to show computational hardness for the commutativity problem.
%
\begin{restatable}{lemma}{lemShuffleReflectsCommutativity}
    \label{lem:shuffle reflects commutativity}
    For every $f \in \series \Q \Sigma$
    and fresh input symbols $a, b \not \in \Sigma$,
    the series $ab \shuffle f$ is commutative if, and only if, $f = \zero$.
\end{restatable}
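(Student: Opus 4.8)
The plan is to prove the two directions separately, treating the forward implication as immediate and reducing the reverse implication to a direct coefficient computation that hinges on the freshness of $a$ and $b$. For the easy direction, if $f = \zero$ then $ab \shuffle f = \zero$, and the zero series is trivially commutative. So the content is entirely in the converse: assuming $ab \shuffle f$ is commutative, I would show that every coefficient $f_w$ vanishes.

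First I would fix an arbitrary word $w \in \Sigma^*$ and consider the two words $abw$ and $baw$. Both contain exactly one $a$, one $b$, and the letters of $w$, so they have the same commutative image, i.e. $abw \sim baw$. Commutativity of $ab \shuffle f$ therefore forces the equality of coefficients $\coefficient{abw}{(ab \shuffle f)} = \coefficient{baw}{(ab \shuffle f)}$. The remaining task is to evaluate these two coefficients, and this is where freshness does all the work.

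The key structural observation is that, since $a, b \notin \Sigma$, every word occurring in $ab \shuffle w'$ (for any $w' \in \Sigma^*$) contains exactly one $a$ and one $b$, with the $a$ necessarily appearing \emph{before} the $b$, because the shuffle preserves the relative order of the letters originating from the word $ab$. This immediately gives $\coefficient{baw}{(ab \shuffle f)} = 0$: no summand $ab \shuffle w'$ can produce a word in which $b$ precedes $a$. For the other coefficient, in $abw$ the unique occurrences of $a$ and $b$ identify the $\Sigma$-letters as exactly $w$, so the only contributing term is $w' = w$, and the unique interleaving realising $abw$ places $a$ and $b$ at the front; hence $\coefficient{abw}{(ab \shuffle f)} = f_w$. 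Combining the two computations yields $f_w = 0$, and since $w$ was arbitrary, $f = \zero$.

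I do not expect a genuine obstacle here; the only point demanding care is the justification of the two coefficient evaluations. Concretely, freshness of $a, b$ is used twice and in an essential way: once to guarantee that the single $a$ and single $b$ in $abw$ pin the shuffle decomposition to $w' = w$ with multiplicity one, and once to invoke the order constraint $a$-before-$b$ that annihilates the coefficient of $baw$. These are the two facts carrying the argument, and everything else is routine.
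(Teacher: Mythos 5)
Your proof is correct, but it takes a genuinely different route from the paper's. The paper argues by contraposition: assuming $f \neq \zero$, it invokes the fact that the shuffle ring is an integral domain (Ree's theorem) to conclude that $g := ab \shuffle f \neq \zero$, picks an arbitrary word $w = xaybz$ in the support of $g$, and observes that the swapped word $xbyaz$ — commutatively equivalent to $w$ — cannot lie in the support because every word produced by $ab \shuffle (\cdot)$ has its unique $a$ preceding its unique $b$. You instead work directly: for arbitrary $w \in \Sigma^*$ you evaluate $\coefficient{abw}{(ab\shuffle f)} = f_w$ (freshness pins the decomposition to $w' = w$ with a unique interleaving) and $\coefficient{baw}{(ab\shuffle f)} = 0$ (the order constraint), so commutativity of $ab \shuffle f$ together with $abw \sim baw$ forces $f_w = 0$. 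Both arguments hinge on the same ``$a$ before $b$'' order observation, but yours is entirely self-contained and avoids the appeal to the integral-domain property of the shuffle algebra, which is the one nontrivial external ingredient in the paper's version; as a bonus, your witness pair $abw \sim baw$ is precisely a \textsf{swap} at the front of the word, which connects the hardness direction neatly to the characterisation in \cref{lem:commutativity}. The paper's proof is marginally shorter modulo the cited result and does not need to exhibit explicit coefficients. Your two coefficient evaluations are correctly justified, so there is no gap.
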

\noindent
(The expression ``$ab \shuffle f$'' denotes the \emph{shuffle product} of $ab$ and $f$,
which will be formally introduced in~\cref{sec:shuffle algebra}.)
%Intuitively, it denotes the series obtained by inserting first $a$ and then $b$ in all possible ways into $f$; \eg~$ab \shuffle c = cab + acb + abc$.)
%
% \begin{remark}
%     We conjecture that~\cref{lem:shuffle reflects commutativity} holds also when $a, b$ are in $\Sigma$,
%     however we will not need this more general fact.
% \end{remark}
%
In~\cref{sec:hadamard} and~\cref{sec:shuffle}
we leverage~\cref{thm:commutativity for effective prevarieties}
for two expressive classes of series, together with respective applications.
Extensions will be mentioned in the last section~\cref{sec:conclusions}.

%\input{rational}
%!TEX root main.tex
\section{Hadamard automata and series}
\label{sec:hadamard}

In this section, we define a weighted model of computation called \emph{Hadamard automata},
strongly related to polynomial automata.
We show that the series they recognise constitute an effective prevariety\-
(\cref{thm:Hadamard prevariety}),
and thus the commutativity problem is decidable for this model\-
(\cref{thm:Hadamard-finite series - decidable commutativity}).
%
%Both results are novel.
%
We conclude the section by an application to a class of multivariate number sequences\-
(\cref{thm:polyrec consistency}).

\subsection{Preliminaries}

\subsubsection{Difference algebra}
\label{sec:difference algebra}

Let $\A = \tuple{A; {+}, {*}}$ be an \emph{algebra} (over the field of rational numbers),
that is a vector space equipped with a bilinear product~\cite[Ch.~3, §1]{Bourbaki:Algebra:I:1974}.
An algebra \emph{endomorphism} is a linear function $F : A \to A$ \st
\begin{align}
    F(\alpha * \beta) = F \alpha * F \beta,
    \quad\text{for all } \alpha, \beta \in A.
\end{align}
A \emph{difference algebra}~\cite{RMCohn:DifferenceAlgebra:1965,Levin:DifferenceAlgebra:2008}
is an algebra $\tuple{A; {+}, {*}, (F_j)_{1\leq j\leq d}}$
together with finitely many endomorphisms $F_1, \dots, F_d : A \to A$.
We do not require the $F_i$'s to pairwise commute.
%unlike what is usually done in difference algebra.
%
Difference algebras will provide the semantic background for the rest of the section.

As an example, consider the algebra of multivariate polynomials
$\tuple{\poly \Q X; {+}, {\cdot}}$,
where $X = \tuple{X_1, \dots, X_k}$ is a tuple of commuting variables.
When the variable names do not matter,
we sometimes write $\poly \Q k$.
An endomorphism $F$ acts on a polynomial $\alpha \in \poly \Q X$
just by evaluation $F \alpha = \alpha(F(X_1), \dots, F(X_k))$,
and thus it is uniquely defined once we have fixed $F(X_1), \dots, F(X_k) \in \poly \Q X$.
%
% This presentation choice will make it easier to compare homomorphisms
% with shuffles and infiltrations in the upcoming sections.
%
If we fix distinguished endomorphisms $F_1, \dots, F_d$,
then the algebra of polynomials acquires the structure of a difference algebra
$\tuple{\poly \Q X; {+}, {\cdot}, (F_j)_{1 \leq j \leq d}}$.
Other examples of difference algebras will be presented in the next sections
\cref{sec:Hadamard algebra} and \cref{sec:shift algebra}.

\subsubsection{Hadamard difference algebra}
\label{sec:Hadamard algebra}

We define a product operation on series
which will be central for the understanding of Hadamard automata.
The \emph{Hadamard product}~\cite{Fliess:1974} of two series $f, g \in \series \Q \Sigma$,
denoted by $f \hadamard g$,
is defined element-wise:
For every $w \in \Sigma^*$,
$\coefficient w (f \hadamard g) := \coefficient w f \cdot \coefficient w g$.
For instance, if we have two polynomials $2aab - c$ and $3aab + b$ (in noncommuting variables),
their Hadamard product is $6aab$.
This is an associative and commutative operation, it distributes over sum,
and the identity element is the series $\one = \sum_{w\in \Sigma} 1 \cdot w$ mapping every word to $1$.
The Hadamard product for series is reminiscent of the notion of \emph{intersection} in language theory
and of \emph{fully synchronous composition} in concurrency theory.

In order to make it easier to compare the Hadamard product
to the shuffle and infiltration products studied later,
we find it convenient to provide the following alternative,
coinductive definition
(\cf~\cite[Definition 8.1]{BasoldHansenPinRutten:MSCS:2017}):
For any two series $f, g \in \series \Q \Sigma$,
their Hadamard product $f \hadamard g$
is the unique series \st
\begin{align}
    \tag{${\hadamard}$-$\e$}
    \label{eq:hadamard:base}
    \coefficient \e {(f \hadamard g)}
        &= f_\e \cdot g_\e, \\
    \tag{${\hadamard}$-$\deriveleft a$}
    \label{eq:hadamard:step}
        \deriveleft a (f \hadamard g)
        &= \deriveleft a f \hadamard \deriveleft a g,
        \ \forall a \in \Sigma.
\end{align}
%
%Thus it gives rise to the \emph{Hadamard algebra} of series
%$\series \Q \Sigma_{\hadamard} = \tuple{\series \Q \Sigma, {+}, {\hadamard}}$.
%
Equation \cref{eq:hadamard:step} states that left derivatives are endomorphisms \wrt~the Hadamard product.
We thus obtain the \emph{Hadamard difference algebra} of series
%\begin{align*}
    $\series \Q \Sigma_{\hadamard} := \tuple{\series \Q \Sigma; {+}, {\hadamard}, (\deriveleft a)_{a \in \Sigma}}$,
%\end{align*}
which will provide the semantic background
for the forthcoming Hadamard automata.
By dropping the difference structure,
we obtain the \emph{Hadamard algebra} $\tuple{\series \Q \Sigma; {+}, {\hadamard}}$.
At this point, it is worth noting that right derivatives are also Hadamard algebra endomorphisms.
This will be useful in the proof of~\cref{lem:Hadamard closure properties}.
\begin{restatable}{lemma}{lemRightDerivationHadamardEndo}
    \label{lem:right derivation Hadamard endomorphism}
    Right derivatives $\deriveright a$ (with $a \in \Sigma$) are Hadamard algebra endomorphisms.
    I.e., they are linear and commute with Hadamard product,
    \begin{align}
        \label{eq:derive hadamard right}
        \deriveright a (f \hadamard g) = \deriveright a f \hadamard \deriveright a g,
        \quad\forall a \in \Sigma.
    \end{align}
\end{restatable}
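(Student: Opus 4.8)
The plan is to bypass the coinductive characterisation \cref{eq:hadamard:base}--\cref{eq:hadamard:step} entirely and argue directly from the element-wise definitions of $\deriveright a$ and $\hadamard$. Recall that $\coefficient w {\deriveright a f} = \coefficient {wa} f$ and $\coefficient w {(f \hadamard g)} = \coefficient w f \cdot \coefficient w g$ for every $w \in \Sigma^*$. Linearity of $\deriveright a$ is then immediate: since scalar multiplication and addition of series are defined coefficient-wise, and the right derivative merely reindexes coefficients along $w \mapsto wa$, for every $w \in \Sigma^*$ we have $\coefficient w {\deriveright a (c \cdot f + g)} = \coefficient{wa}{(c \cdot f + g)} = c \cdot \coefficient{wa}{f} + \coefficient{wa}{g} = \coefficient w {(c \cdot \deriveright a f + \deriveright a g)}$, so $\deriveright a$ is linear.

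For the multiplicativity identity \cref{eq:derive hadamard right}, I would fix an arbitrary word $w \in \Sigma^*$ and compare the two series coefficient by coefficient. On the left, using first the definition of the right derivative and then that of the Hadamard product,
\[
    \coefficient w {\deriveright a (f \hadamard g)}
        = \coefficient{wa}{(f \hadamard g)}
        = \coefficient{wa}{f} \cdot \coefficient{wa}{g}.
\]
On the right, using the definition of the Hadamard product and then of the right derivative,
\[
    \coefficient w {(\deriveright a f \hadamard \deriveright a g)}
        = \coefficient w {\deriveright a f} \cdot \coefficient w {\deriveright a g}
        = \coefficient{wa}{f} \cdot \coefficient{wa}{g}.
\]
The two expressions coincide for every $w$, hence the two series are equal, which is exactly \cref{eq:derive hadamard right}.

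Since both the linearity and the product identity reduce to the same elementary reindexing observation, there is genuinely no obstacle here; the only point worth flagging is presentational. The coinductive definition of $\hadamard$ is phrased through \emph{left} derivatives, so a coinductive proof of the \emph{right} derivative statement would force one to interleave \cref{lem:derive left right commutativity} and turn a one-line coefficient computation into a needlessly indirect argument. Working with the raw element-wise definition sidesteps this altogether, which is why I would choose it.
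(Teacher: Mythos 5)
Your proposal is correct and follows essentially the same route as the paper's own proof, which likewise notes that a coinductive argument is possible but opts for the direct element-wise computation $\coefficient w {\deriveright a (f \hadamard g)} = \coefficient {wa} f \cdot \coefficient {wa} g = \coefficient w {(\deriveright a f \hadamard \deriveright a g)}$. The only difference is that you spell out linearity explicitly, which the paper dismisses as immediate from the definition.
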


\subsection{Hadamard automata}
\label{sec:Hadamard automata}

We can now define the central computation model of the section.
%\paragraph{Syntax and semantics}
A \emph{Hadamard automaton} is a tuple
$A = \tuple{\Sigma, X, F, \Delta}$
where $\Sigma$ is a finite \emph{input alphabet},
$X = \set{X_1,  \dots X_k}$ is a finite set of \emph{nonterminal symbols},
$F : X \to \Q$ is the \emph{output function},
and $\Delta : \Sigma \to X \to \poly \Q X$ is the \emph{transition function}.
A \emph{configuration} of a Hadamard automaton is a polynomial $\alpha \in \poly \Q X$.
For every input symbol $a \in \Sigma$,
the transition function $\Delta_a : X \to \poly \Q X$
is extended to a unique endomorphism $\Delta_a : \poly \Q X \to \poly \Q X$ of the polynomial algebra of configurations.
% satisfying the following product rule, for all $\alpha, \beta \in \poly \Q X$:
% %:
% %
% \begin{align}
%     \label{eq:Hadamard}
%     \tag{Hadamard}
%     \Delta_a(\alpha \cdot \beta) =
%         \Delta_a \alpha \cdot \Delta_a \beta,
%     \quad \text{for all } \alpha, \beta \in \poly \Q X.
% \end{align}
% %
In other words, $\Delta_a$ acts by evaluation:
$\Delta_a(\alpha) = \alpha(\Delta_a(X_1), \dots, \Delta_a(X_k))$.
Finally, transitions from single letters
are extended to all finite input words homomorphically:
For every configuration $\alpha \in \poly \Q X$,
input word $w \in \Sigma^*$, and letter $a \in \Sigma$,
we have $\Delta_\e \alpha := \alpha$
and $\Delta_{a \cdot w} \alpha := \Delta_w \Delta_a \alpha$.
We have all ingredients to define the \emph{semantics} of a configuration $\alpha \in \poly \Q X$,
which is the series $\sem \alpha \in \series \Q \Sigma$ \st:
\begin{align}
    \label{eq:semantics}
    \sem \alpha_w &:= F \Delta_w \alpha,
    \quad \text{for all } w \in \Sigma^*.
\end{align}
Here, $F$ is extended from nonterminals to configurations also endomorphically,
by $F \beta = \beta(F X_1, \dots, F X_k)$. % (i.e., polynomial evaluation).
The semantics of a Hadamard automaton $A$
is the series recognised by nonterminal $X_1$.

\begin{remark}
    When the transition function is \emph{linear}
    (i.e., $\Delta_a X_i$ is a polynomial of degree $\leq 1$ for all $a \in \Delta$ and $1 \leq i \leq k$),
    we recover Schützenberger's \emph{weighted finite automata}~\cite{Schutzenberger:IC:1961}.
    Thus Hadamard automata generalise weighted automata.
\end{remark}

\begin{example}
    \label{ex:Hadamard automaton 1}
    Revisiting~\cref{ex:intro:1} in the syntax of Hadamard automata,
    we have a single nonterminal $X = \set{A}$,
    output function $F_c$ defined by $F_c A := c$ ($c \in \Q$) and transitions by
    \begin{align*}
        \Delta_{a_1} A := A^2
            \quad\text{and}\quad
                \Delta_{a_2} A := 1 - A^2.
    \end{align*}
    For instance, when $c = 0$ starting from the initial configuration $A$ and reading input $u = a_1 a_2$
    we obtain configuration $\Delta_{a_2} \Delta_{a_1} A = \Delta_{a_2} A^2 = (1 - A^2)^2$
    and thus $\sem A_u = 1$.
    Reading $v = a_2 a_1$ yields a different configuration
    $\Delta_{a_1} \Delta_{a_2} A = 1 - A^4$,
    however $\sem A_v = 1$.
    In fact, $\sem A_w$ is the count modulo $2$ of the number of $a_2$'s in $w$,
    and thus $\sem A$ is a commutative series.

    It can be checked that also the output functions $F_1, F_{-1}$
    defined by $F_1 A := 1$, resp., $F_{-1} A := -1$
    give rise to commutative series.
    However, no other choice does.
    For instance, $F_2$ defined by $F_2 A := 2$ does not yield a commutative series,
    since $\sem A_{a_1 a_2} = (1 - 2^2)^2 = 9$,
    but $\sem A_{a_2 a_1} = 1 - 2^4 = -15$.
\end{example}

A Hadamard automaton endows the algebra of configurations
with the structure of a difference algebra
$\tuple{\poly \Q X; +, \cdot, (\Delta_a)_{a \in \Sigma}}$.
This allows us to connect the difference algebra of configurations
with the difference algebra of Hadamard series they recognise.
% In the next lemma we show that the semantics function connects
% the difference algebra of configurations with the difference algebra of series.
%
\begin{restatable}[Properties of the semantics]{lemma}{lemPropertiesofSemantics}
    \label{lem:Hadamard automata - properties of the semantics}
    The semantics of a Hadamard automaton is a \emph{homomorphism} from
    the difference algebra of polynomials
        $\tuple{\poly \Q X; +, \cdot, (\Delta_a)_{a \in \Sigma}}$
    to the difference Hadamard algebra of series 
        $\tuple{\series \Q \Sigma; +, \hadamard, (\deriveleft a)_{a \in \Sigma}}$.
    In other words, $\sem 0 = \zero$, $\sem 1 = \one$,
    and, for all $\alpha, \beta \in \poly \Q X$,
    \begin{align}
        \label{eq:hadamard:sem:scalar-product}
        \sem {c \cdot \alpha}
            &= c \cdot \sem \alpha 
            && \forall c \in \Q, \\
        \label{eq:hadamard:sem:sum}
            \sem {\alpha + \beta}
            &= \sem \alpha + \sem \beta, \\
        \label{eq:hadamard:sem:product}
            \sem {\alpha \cdot \beta}
            &= \sem \alpha \hadamard \sem \beta, \\
        \label{eq:hadamard:sem:derivation}
        \sem {\Delta_a \alpha}
            &= \deriveleft a \sem \alpha
            && \forall a \in \Sigma.
    \end{align}
\end{restatable}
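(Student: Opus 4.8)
The plan is to isolate the single structural fact that, for every fixed word $w \in \Sigma^*$, the coefficient map $\alpha \mapsto \sem\alpha_w = F \Delta_w \alpha$ is an algebra homomorphism from the polynomial algebra $\tuple{\poly\Q X; +, \cdot}$ into the ground field $\Q$; once this is available, every identity except the last reads off pointwise via \cref{eq:semantics}.

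First I would check that $\Delta_w$ is an algebra endomorphism of $\poly\Q X$ for every $w$, by induction on the length of $w$. The base case $\Delta_\e = \mathrm{id}$ is immediate, and the inductive step uses the recursion $\Delta_{a\cdot w} = \Delta_w \circ \Delta_a$ together with the observation that a composition of algebra endomorphisms is again one (both linearity and multiplicativity are preserved under composition). Each $\Delta_a$ is an endomorphism by construction, since it acts by evaluation $\Delta_a \alpha = \alpha(\Delta_a X_1, \dots, \Delta_a X_k)$. Next, the output map $F$ is itself an algebra homomorphism $\poly\Q X \to \Q$, being evaluation at the point $(F X_1, \dots, F X_k) \in \Q^k$. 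Composing, $F \Delta_w : \poly\Q X \to \Q$ is an algebra homomorphism for each $w$, which is precisely the structural fact above.

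With this in hand I would read off the identities. Applying $F\Delta_w$ to the polynomials $0$ and $1$ gives $0$ and $1$ for every $w$, hence $\sem 0 = \zero$ and $\sem 1 = \one$. Linearity of $F\Delta_w$ gives \cref{eq:hadamard:sem:scalar-product} and \cref{eq:hadamard:sem:sum} coefficient by coefficient. For the product \cref{eq:hadamard:sem:product}, multiplicativity of $F\Delta_w$ yields
\begin{align*}
    \sem{\alpha\cdot\beta}_w
        &= F\Delta_w(\alpha\cdot\beta)
        = F\Delta_w\alpha \cdot F\Delta_w\beta
        = \sem\alpha_w \cdot \sem\beta_w,
\end{align*}
which is exactly the element-wise defining equation of the Hadamard coefficient $\coefficient w{(\sem\alpha \hadamard \sem\beta)}$. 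The last identity \cref{eq:hadamard:sem:derivation} is of a different nature and does not follow from the homomorphism property; instead it is a direct unfolding using the recursion $\Delta_{a\cdot w} = \Delta_w \Delta_a$: for every $w$ we have $\sem{\Delta_a\alpha}_w = F\Delta_w\Delta_a\alpha = F\Delta_{a\cdot w}\alpha = \sem\alpha_{a\cdot w}$, and $\sem\alpha_{a\cdot w} = \coefficient w{\deriveleft a \sem\alpha}$ by the very definition of the left derivative.

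I expect the only delicate point to be the multiplicativity transferring through the composition $F\Delta_w$ — confirming that the evaluation endomorphisms $\Delta_a$ compose to an endomorphism and that $F$ respects products — and matching the resulting pointwise product with the element-wise description of $\hadamard$; the remaining identities are then routine.
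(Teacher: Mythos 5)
Your proof is correct, but it is organised differently from the paper's. The paper proves \cref{eq:hadamard:sem:product} \emph{coinductively}: it checks that the constant terms of $\sem{\alpha\cdot\beta}$ and $\sem\alpha\hadamard\sem\beta$ agree and that their left derivatives agree (using $\Delta_a(\alpha\cdot\beta)=\Delta_a\alpha\cdot\Delta_a\beta$ together with the coinductive defining equations \cref{eq:hadamard:base} and \cref{eq:hadamard:step} of the Hadamard product), which pins down the series uniquely. You instead reduce everything to the single observation that, for each fixed word $w$, the coefficient functional $\alpha\mapsto F\Delta_w\alpha$ is an algebra homomorphism $\poly\Q X\to\Q$ (evaluation maps compose to evaluation maps), and then read off every identity coefficient-by-coefficient, matching the product case against the \emph{element-wise} definition of $\hadamard$. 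The underlying algebraic fact is the same in both proofs --- $\Delta_a$, hence $\Delta_w$, is an endomorphism --- but your pointwise route is more elementary and avoids coinduction entirely. What the paper's coinductive phrasing buys is uniformity: the analogous lemmas for shuffle and infiltration automata (\cref{lem:properties of semantics - shuffle automata} and its infiltration counterpart) concern products that are \emph{not} element-wise, so there the coefficient-by-coefficient shortcut is unavailable and the coinductive template is the one that carries over. Your argument exploits a genuine special feature of the Hadamard product and is perfectly sound for this lemma.
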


\begin{remark}[Hadamard vs.~polynomial automata]
    \label{rem:Hadamard and polynomial automata}
    Hadamard automata are very similar to~\emph{polynomial automata}~\cite{BenediktDuffSharadWorrell:PolyAut:2017},
    a common generalisation of weighted automata~\cite{Schutzenberger:IC:1961}
    and vector addition systems~\cite{Mayr:STOC:1981}.
    %
%    Polynomial automata start from an initial vector of rational numbers
%    and modify it by polynomial updates.
    %
%    At the end a polynomial output function is applied.
    %
 %   Hadamard automata start from an initial polynomial
 %   and modify it by polynomial endomorphisms.
    %
    The two models are equivalent, up to reversal of the input:
    A series $f$ is recognised by a Hadamard automaton
    iff its reversal $f^R$ is recognised by a polynomial automaton.
    (The \emph{reversal} of $f$ is the series $f^R$ mapping $a_1 \cdots a_n$ to $f(a_n \cdots a_1)$.)
    Since commutativity is invariant under reversal,
    our results also apply to polynomial automata.
    %
%    Moreover, the more complex shuffle~(\cref{sec:shuffle automata}) and infiltration automata~(\cref{sec:infiltration automata})
%    cannot be defined \emph{\`a la} polynomial automata by directly manipulating vectors of rational numbers.
    % and thus we have chosen to define Hadamard automata
    % in a style close to shuffle and infiltration automata.

    Over more general commutative semirings,
    % (which are more general than the field of rational numbers considered in this paper),
    polynomial automata have been introduced as
    \emph{alternating weighted automata}~\cite{KostolanyiMisun:TCS:2018},
    further studied in~\cite{Grabolle:EPTCS:2021}
    (albeit without algorithmic results).
    In the special case of an input alphabet with just one letter $\Sigma = \set a$,
    Hadamard automata have been considered in~\cite{BorealeGorla:CONCUR:2021}
    in a coalgebraic setting, where equality is shown to be decidable\-
    \cite[Theorem 4.1]{BorealeGorla:CONCUR:2021} (without complexity analysis).
\end{remark}

\subsection{Hadamard-finite series}

Hadamard automata provide an operational way to syntactically describe a class of series.
In this section, we explore a semantic way,
which will give us a very short proof of the effective prevariety property
(\cf~\cref{thm:Hadamard prevariety}).

For series $f_1, \dots, f_k \in \series \Q \Sigma$,
let $\poly \Q {f_1, \dots, f_k}_{\hadamard}$ be the smallest Hadamard algebra containing $f_1, \dots, f_k$.
Hadamard algebras of this form are called~\emph{finitely generated},
with \emph{generators} $f_1, \dots, f_k$.
It is a \emph{difference Hadamard algebra}
if it is additionally closed under the endomorphisms $\deriveleft a$, $a \in \Sigma$.
\begin{definition}[Hadamard-finite series]
    \label{def:Hadamard finite}
    A series is \emph{Hadamard finite} if it belongs to a
    finitely generated difference Hadamard algebra.
\end{definition}
The following lemma will be our working definition of Hadamard-finite series.
\begin{restatable}{lemma}{lemHadamardFiniteWorkingDefinition}
    \label{lem:Hadamard finite - working definition}
    A series $f$ is Hadamard finite iff there are generators $f_1, \dots, f_k$ \st:
    \begin{enumerate}
        \item $f \in \poly \Q {f_1, \dots, f_k}_{\hadamard}$.
        %\Ie, there exists a polynomial $p \in \poly \Q k$
        %\st~$f = p(f_1, \dots, f_k)$.

        \item $\deriveleft a f_i \in \poly \Q {f_1, \dots, f_k}_{\hadamard}$
        for every $a \in \Sigma$ and $1 \leq i \leq k$.
        \Ie, there are polynomials $p^{(a)}_i \in \poly \Q k$ \st
        \begin{align}
            \label{eq:Hadamard finite equations}
            \deriveleft a f_i = p^{(a)}_i(f_1, \dots, f_k),
            \quad\text{for all } a \in \Sigma, 1 \leq i \leq k.
        \end{align}
    \end{enumerate}
    Moreover, we can assume without loss of generality that $f$ equals one of the generators $f = f_1$.
\end{restatable}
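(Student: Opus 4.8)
The plan is to prove both implications of the equivalence by unfolding \cref{def:Hadamard finite}, the single crucial ingredient being that each left derivative $\deriveleft a$ is an \emph{algebra endomorphism} of the Hadamard algebra: it is linear and, by \cref{eq:hadamard:step}, satisfies $\deriveleft a (g \hadamard h) = \deriveleft a g \hadamard \deriveleft a h$. Consequently $\deriveleft a$ commutes with every Hadamard-polynomial expression in the generators, so that for any $p \in \poly \Q k$ one has $\deriveleft a\, p(f_1, \dots, f_k) = p(\deriveleft a f_1, \dots, \deriveleft a f_k)$, where $p$ is evaluated using $\hadamard$ on both sides. This identity is what lets closure data stated only at the level of generators propagate to the whole generated algebra, and it is the backbone of every step below.

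For the forward direction, suppose $f$ is Hadamard finite, \ie, $f$ lies in some finitely generated difference Hadamard algebra $\poly \Q {g_1, \dots, g_m}_{\hadamard}$ closed under all $\deriveleft a$. I would take the generators $f_1, \dots, f_k$ to be exactly $g_1, \dots, g_m$. Condition~(1) is then immediate. For condition~(2), closure under $\deriveleft a$ gives $\deriveleft a g_i \in \poly \Q {g_1, \dots, g_m}_{\hadamard}$, and since every element of a finitely generated Hadamard algebra is by construction a Hadamard polynomial in its generators, there is a polynomial $p_i^{(a)} \in \poly \Q m$ with $\deriveleft a g_i = p_i^{(a)}(g_1, \dots, g_m)$, which is precisely \cref{eq:Hadamard finite equations}.

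For the backward direction I would show that conditions~(1) and~(2) force $\poly \Q {f_1, \dots, f_k}_{\hadamard}$ to be closed under each $\deriveleft a$, hence to be a difference Hadamard algebra containing $f$, so that $f$ is Hadamard finite. Closure is checked on the spanning Hadamard monomials $f_{i_1} \hadamard \cdots \hadamard f_{i_m}$: applying the endomorphism property repeatedly yields $\deriveleft a (f_{i_1} \hadamard \cdots \hadamard f_{i_m}) = \deriveleft a f_{i_1} \hadamard \cdots \hadamard \deriveleft a f_{i_m}$, and by condition~(2) each factor $\deriveleft a f_{i_j} = p_{i_j}^{(a)}(f_1, \dots, f_k)$ already lies in the algebra; closure under $\hadamard$ then keeps the whole product inside, and linearity of $\deriveleft a$ extends closure from monomials to arbitrary elements.

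Finally, for the ``moreover'' clause I would adjoin $f$ itself as a fresh generator and relabel so that $f = f_1$ in the enlarged list $f, f_1, \dots, f_k$. Condition~(1) becomes trivial since $f$ is now literally a generator, and the old generators' equations survive because $\poly \Q {f_1, \dots, f_k}_{\hadamard} \subseteq \poly \Q {f, f_1, \dots, f_k}_{\hadamard}$. The only new obligation is $\deriveleft a f \in \poly \Q {f, f_1, \dots, f_k}_{\hadamard}$: writing $f = q(f_1, \dots, f_k)$ from the original condition~(1), the endomorphism identity gives $\deriveleft a f = q(\deriveleft a f_1, \dots, \deriveleft a f_k)$, which after substituting \cref{eq:Hadamard finite equations} is again a Hadamard polynomial in $f_1, \dots, f_k$. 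I do not expect a genuine obstacle in any of this; the one point demanding care throughout is verifying that ``being a Hadamard polynomial in the generators'' is preserved under $\deriveleft a$, which is exactly the endomorphism property isolated at the outset.
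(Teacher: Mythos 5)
Your proposal is correct and follows essentially the same route as the paper's proof: both directions rest on the fact that each $\deriveleft a$ is a (unital) Hadamard algebra endomorphism, so $\deriveleft a\, p(f_1,\dots,f_k) = p(\deriveleft a f_1,\dots,\deriveleft a f_k)$, which lets closure stated on the generators propagate to the whole generated algebra, and the ``moreover'' clause is handled identically by adjoining $f$ as a generator.
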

\noindent
We call~\cref{eq:Hadamard finite equations}
\emph{systems of Hadamard-finite equations}.
The following lemma states that Hadamard automata recognise precisely the Hadamard-finite series.
Its proof relies on~\cref{lem:Hadamard automata - properties of the semantics,lem:Hadamard finite - working definition}.
\begin{restatable}{lemma}{lemHadamardAutomataAndSeries}
    \label{lem:Hadamard automata and series}
    A series is recognised by a Hadamard automaton if, and only if,
    it is Hadamard finite.
\end{restatable}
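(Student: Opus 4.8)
The plan is to prove both implications by transporting information across the semantics homomorphism of~\cref{lem:Hadamard automata - properties of the semantics}, using the working characterisation of Hadamard-finite series supplied by~\cref{lem:Hadamard finite - working definition}. In both directions the nonterminals of the automaton play the role of the generators of a finitely generated difference Hadamard algebra, and the two descriptions of ``closure under $\deriveleft a$'' — operationally via $\Delta_a$, semantically via~\cref{eq:hadamard:sem:derivation} — are matched up against each other.

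For the forward direction, I would start from a Hadamard automaton $A = \tuple{\Sigma, X, F, \Delta}$ with nonterminals $X = \set{X_1, \dots, X_k}$ recognising $f = \sem{X_1}$, and propose the series $f_i := \sem{X_i}$ as generators. Writing each polynomial $\Delta_a X_i$ as $p^{(a)}_i(X_1, \dots, X_k)$, the homomorphism property sends every polynomial product to a Hadamard product, so combining~\cref{eq:hadamard:sem:derivation} with~\cref{eq:hadamard:sem:scalar-product,eq:hadamard:sem:sum,eq:hadamard:sem:product} yields
\begin{align*}
    \deriveleft a f_i = \sem{\Delta_a X_i} = p^{(a)}_i(f_1, \dots, f_k),
\end{align*}
where the right-hand side is evaluated in the Hadamard algebra. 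This exhibits $\deriveleft a f_i$ as a Hadamard polynomial in $f_1, \dots, f_k$; together with the trivial fact $f = f_1$, both conditions of~\cref{lem:Hadamard finite - working definition} are met, so $f$ is Hadamard finite.

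For the converse, I would invoke~\cref{lem:Hadamard finite - working definition} to obtain generators $f_1, \dots, f_k$ with $f = f_1$ together with polynomials $p^{(a)}_i \in \poly \Q k$ satisfying $\deriveleft a f_i = p^{(a)}_i(f_1, \dots, f_k)$ (products read as Hadamard). From these I would build the automaton with nonterminals $X = \set{X_1, \dots, X_k}$, transitions $\Delta_a X_i := p^{(a)}_i(X_1, \dots, X_k)$, and output $F X_i := \coefficient \e {f_i}$. It then remains to verify that $\sem{X_i} = f_i$ for every $i$, for then the recognised series is $\sem{X_1} = f_1 = f$, as required.

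The crux — and the step I expect to be the main obstacle — is this last verification. I would argue that both families $\tuple{\sem{X_i}}_i$ and $\tuple{f_i}_i$ satisfy one and the same recursive system: they agree at $\e$ by the choice of $F$, and they obey the same derivative equations $\deriveleft a g_i = p^{(a)}_i(g_1, \dots, g_k)$ — for $\sem{X_i}$ this is again~\cref{eq:hadamard:sem:derivation}, and for $f_i$ it holds by assumption. The key observation is that a series is determined by its value at $\e$ together with all of its left derivatives, since $\coefficient {a w} g = \coefficient w {\deriveleft a g}$; hence by induction on the length of $w$ such a system has at most one solution. Consequently $\sem{X_i} = f_i$ for every $i$, which completes the proof.
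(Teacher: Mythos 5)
Your proposal is correct and follows essentially the same route as the paper: the forward direction uses the semantics homomorphism to turn $\Delta_a X_i$ into Hadamard-polynomial expressions for $\deriveleft a \sem{X_i}$, and the converse builds the automaton from the generators with output $F X_i := \coefficient \e {f_i}$ and verifies $\sem{X_i} = f_i$. The paper phrases that last verification as a coinductive chain of equalities, while you justify it by uniqueness of solutions to the recursive system via induction on word length — these are the same argument, and your explicit remark that Hadamard products act coefficient-wise (so the induction goes through) is exactly what makes the paper's coinduction sound.
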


\begin{remark}
    Hadamard-finite equations have been considered under the name of
    \emph{polynomial recurrent relations} by Sénizergues\-
    \cite[Definition 5]{Senizergues:CSR:2007},
    where they are claimed, without proof, to coincide
    with those recognised by deterministic pushdown automata of level 3\-
    \cite[Corollary 3]{Senizergues:CSR:2007},
    and to have a decidable equality problem\-
    \cite[Theorem 5]{Senizergues:CSR:2007}.
    In the context of arbitrary semirings,
    they also appear under the name of \emph{Hadamard-polynomial equations}~\cite[Sec.~6]{KostolanyiMisun:TCS:2018}.
\end{remark}

\begin{example}
    \label{ex:Hadamard finite equations 1}
    We illustrate~\cref{lem:Hadamard automata and series}
    on the Hadamard automaton from~\cref{ex:Hadamard automaton 1}.
    The corresponding Hadamard-finite equations are
    %
%    \begin{align*}
    $\deriveleft {a_1} \sem A = \sem A^2$ and
    $\deriveleft {a_2} \sem A = \one - \sem A^2$.
    % \end{align*}
    %
    (The square operation is to be understood as Hadamard square,
    \ie, $\sem A^2 = \sem A \hadamard \sem A$;
    recall that $\one = \sum_{w \in \Sigma^*} 1 \cdot w$ is the Hadamard identity.)
    It follows that $\poly \Q {\sem A}_{\hadamard}$
    is a difference Hadamard algebra, and thus $\sem A$ is Hadamard finite.
\end{example}

\subsection{Closure properties of Hadamard-finite series}

We recall some basic closure properties of Hadamard-finite series.
All of them follow directly from the definitions,
except closure under right derivative, which is a novel result.

\begin{restatable}{lemma}{lemHadamardClosureProperties}
    \label{lem:Hadamard closure properties}
    The class of Hadamard-finite series is an effective difference algebra.  % of the difference Hadamard algebra.
    Moreover, if $f$ is a Hadamard finite,
    then $\deriveright a f$ is also effectively Hadamard finite.
\end{restatable}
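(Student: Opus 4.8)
The plan is to verify the defining conditions of an effective difference algebra one by one, working throughout with the characterisation of \cref{lem:Hadamard finite - working definition}: a series $f$ is Hadamard finite exactly when $f \in \poly \Q {f_1, \dots, f_k}_{\hadamard}$ for some generators satisfying a system $\deriveleft a f_i = p^{(a)}_i(f_1, \dots, f_k)$. I would take as finite presentation such a system of polynomials together with the rational initial values $\coefficient \e {f_i}$ (equivalently a Hadamard automaton, via \cref{lem:Hadamard automata and series}).

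For the algebra and difference structure the arguments are routine. Closure under scalar multiplication, addition, and Hadamard product follows by taking the union of two generating sets: if $f, g$ are generated by $f_1, \dots, f_k$ and $g_1, \dots, g_m$, then $c \cdot f$, $f + g$, and $f \hadamard g$ are all polynomials (under $\hadamard$) in the combined generators, which still satisfy their derivative equations. Closure under each left derivative $\deriveleft a$ is where the difference structure enters: writing $f = P(f_1, \dots, f_k)$ and using that $\deriveleft a$ is linear, a Hadamard endomorphism (\cref{eq:hadamard:step}), and fixes $\one$, I obtain $\deriveleft a f = P(\deriveleft a f_1, \dots, \deriveleft a f_k) = P(p^{(a)}_1(f_1, \dots, f_k), \dots, p^{(a)}_k(f_1, \dots, f_k))$, again a polynomial in the same generators. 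All these manipulations are algorithmic, and decidability of equality I would import from the equivalence with polynomial automata (\cref{rem:Hadamard and polynomial automata}) and their known equality test~\cite{BenediktDuffSharadWorrell:PolyAut:2017}. Together these give the effective difference algebra claim.

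The substantive part is closure under right derivative. Given $f$ with generators $f_1, \dots, f_k$ as above, the key idea is to use the right-derived generators $g_i := \deriveright b f_i$. I would show they satisfy the very same system: using that left and right derivatives commute (\cref{lem:derive left right commutativity}) and that $\deriveright b$ is itself a Hadamard endomorphism (\cref{lem:right derivation Hadamard endomorphism}),
\[
    \deriveleft a g_i = \deriveleft a \deriveright b f_i = \deriveright b \deriveleft a f_i = \deriveright b p^{(a)}_i(f_1, \dots, f_k) = p^{(a)}_i(g_1, \dots, g_k).
\]
Moreover $\deriveright b f = \deriveright b P(f_1, \dots, f_k) = P(g_1, \dots, g_k) \in \poly \Q {g_1, \dots, g_k}_{\hadamard}$, so $\set{g_1, \dots, g_k}$ is a legitimate generating set witnessing that $\deriveright b f$ is Hadamard finite, with the \emph{same} defining polynomials $p^{(a)}_i$ and output polynomial $P$. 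The construction is effective because only the initial values change, and they are computable as $\coefficient \e {g_i} = \coefficient b {f_i} = p^{(b)}_i(\coefficient \e {f_1}, \dots, \coefficient \e {f_k})$ by evaluating the derivative polynomial at rationals (using \cref{eq:hadamard:base}); in automaton terms this amounts to replacing the output function $F$ by $F \circ \Delta_b$.

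The main obstacle is precisely this right-derivative closure, since a priori right derivatives are a different kind of operation than the left derivatives around which Hadamard-finiteness is built. The plan avoids any reindexing of the system by exploiting the two structural facts that make right derivatives behave well: their commutation with left derivatives and their being Hadamard endomorphisms. Once these are in hand the argument reduces to a direct substitution, so I expect no further difficulty beyond bookkeeping of initial values to maintain effectiveness.
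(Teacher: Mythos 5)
Your proposal is correct and follows essentially the same route as the paper: the nontrivial step is closure under right derivatives, handled by taking the right-derived generators and combining \cref{lem:derive left right commutativity} with \cref{lem:right derivation Hadamard endomorphism} to show they satisfy the same defining system. Your explicit computation of the new initial values (replacing $F$ by $F \circ \Delta_b$) makes the effectiveness slightly more concrete than the paper's version, but the argument is the same.
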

\noindent
When we say that $\deriveright a f$ is ``effectively Hadamard finite''
we mean that there is an algorithm that, given in input a finite presentation for $f$ and an input symbol $a \in \Sigma$,
produces in output a finite presentation for $\deriveright a f$.
\begin{proof}
    We have to show that Hadamard-finite series contain $\zero, \one$ (which is obvious),
    and are effectively closed under
    scalar product $c \cdot f$ (with $c \in \Q$), sum $f + g$,
    Hadamard product $f \hadamard g$, left and right derivatives $\deriveleft a f, \deriveright a f$
    (with $a \in \Sigma$).
    Simple manipulations of the generators suffice in each case.
    We show closure under right derivative,
    since it is the only property that is not immediate from the definitions.
    %
    % For instance, consider left derivative $\deriveleft a f$.
    % %
    % Let $f_1, \dots, f_k$ be generators for $f$.
    % Since $f$ is in the Hadamard algebra generated by the $f_i$'s
    % and left derivation is a homomorphism of Hadamard algebras~(\cf~\cref{eq:hadamard:step}),
    % $\deriveleft a f$ is in the Hadamard algebra generated by the $\deriveleft a f_i$'s:
    % %
    % \begin{align*}
    %     \deriveleft a f \in \poly \Q {\deriveleft a f_1, \dots, \deriveleft a f_k}_{\hadamard}.
    % \end{align*}
    % %
    % But each $\deriveleft a f_i$ is in the Hadamard algebra generated by the $f_i$'s,
    % and thus 
    % %
    % \begin{align*}
    %     \deriveleft a f \in \poly \Q {f_1, \dots, f_k}_{\hadamard}.
    %     \qedhere
    % \end{align*}
    %
    Let $f_1, \dots, f_k$ be generators for $f$ and fix $a \in \Sigma$.
    We consider new generators $\deriveright a f_i$, for every $1 \leq i \leq k$,
    yielding the finitely generated Hadamard algebra 
    %
    %\begin{align*}
        $A := \poly \Q {\deriveright a f_1, \dots, \deriveright a f_k}_{\hadamard}$.
    %\end{align*}
    %
    By~\cref{lem:Hadamard finite - working definition},
    the proof is concluded by the following two claims.
    \begin{claim}
        $\deriveright a f \in A$.
    \end{claim}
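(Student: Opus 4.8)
The plan is to exploit that $f$ lies in the Hadamard algebra generated by $f_1, \dots, f_k$ and that right derivatives are endomorphisms of this algebra, so that applying $\deriveright a$ merely transports the polynomial expression witnessing $f \in \poly \Q {f_1, \dots, f_k}_{\hadamard}$ into the very same expression in the new generators $\deriveright a f_1, \dots, \deriveright a f_k$.

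Concretely, since $f \in \poly \Q {f_1, \dots, f_k}_{\hadamard}$, I would first write $f = P(f_1, \dots, f_k)$ for some $\Q$-polynomial $P$, where the products are understood as Hadamard products and the constant term is a scalar multiple of the Hadamard unit $\one$. Next, I would invoke~\cref{lem:right derivation Hadamard endomorphism}, which tells us that $\deriveright a$ is linear and satisfies $\deriveright a(g \hadamard h) = \deriveright a g \hadamard \deriveright a h$; together with the elementary computation $\deriveright a \one = \one$ (since $\coefficient w {\deriveright a \one} = \coefficient {wa} \one = 1$ for all $w$), this makes $\deriveright a$ a \emph{unital} homomorphism of the Hadamard algebra.

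Applying this homomorphism to $f = P(f_1, \dots, f_k)$ and distributing over sums, scalar multiples, and Hadamard products then yields $\deriveright a f = P(\deriveright a f_1, \dots, \deriveright a f_k)$, which is by definition an element of $A = \poly \Q {\deriveright a f_1, \dots, \deriveright a f_k}_{\hadamard}$, establishing the claim. I expect no genuine obstacle here: the only point requiring care is the bookkeeping of the constant term, i.e.\ verifying that $\deriveright a$ fixes the unit $\one$, which ensures that the constant part of $P$ is preserved and remains inside $A$. The substantive ingredient is~\cref{lem:right derivation Hadamard endomorphism}; once that is available, the claim follows purely formally from the endomorphism property.
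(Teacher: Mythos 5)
Your proof is correct and takes essentially the same route as the paper: both arguments rest on \cref{lem:right derivation Hadamard endomorphism} to transport the polynomial expression witnessing $f \in \poly \Q {f_1, \dots, f_k}_{\hadamard}$ into the same expression in the generators $\deriveright a f_1, \dots, \deriveright a f_k$. The only difference is that you additionally spell out the (correct) verification that $\deriveright a \one = \one$, so that the constant term of the expression stays in $A$ --- a detail the paper leaves implicit.
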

    \begin{claimproof}
        Since $f$ is in the Hadamard algebra generated by the $f_i$'s
        and right derivative is an endomorphism by~\cref{lem:right derivation Hadamard endomorphism},
        $\deriveright a f$ is in the Hadamard algebra generated by the $\deriveright a f_i$'s,
        \ie, $\deriveright a f \in A$.
    \end{claimproof}
    \begin{claim}
        For every $b \in \Sigma$ and $1 \leq i \leq k$,
        $\deriveleft b \deriveright a f_i \in A$.
    \end{claim}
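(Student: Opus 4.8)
The plan is to reduce this to the two structural facts already in hand: that left and right derivatives commute (\cref{lem:derive left right commutativity}), and that each right derivative $\deriveright a$ is a Hadamard algebra endomorphism (\cref{lem:right derivation Hadamard endomorphism}). The guiding intuition is that the awkward composition $\deriveleft b \deriveright a f_i$ can be rearranged so that the right derivative is applied \emph{last}, precisely where it interacts cleanly with the Hadamard algebra $A$ generated by the $\deriveright a f_j$'s.

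First I would swap the order of the two derivatives using \cref{lem:derive left right commutativity}, obtaining
\[
\deriveleft b \deriveright a f_i = \deriveright a \deriveleft b f_i.
\]
Next I would feed in the Hadamard-finite equations \cref{eq:Hadamard finite equations} supplied by \cref{lem:Hadamard finite - working definition}, which for the letter $b$ give a polynomial $p^{(b)}_i \in \poly \Q k$ with $\deriveleft b f_i = p^{(b)}_i(f_1, \dots, f_k)$, all products being Hadamard products. Substituting turns the goal into computing $\deriveright a\, p^{(b)}_i(f_1, \dots, f_k)$.

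The crux is then to push $\deriveright a$ through this polynomial expression. Since $\deriveright a$ is linear and satisfies $\deriveright a(f \hadamard g) = \deriveright a f \hadamard \deriveright a g$ by \cref{lem:right derivation Hadamard endomorphism}, it distributes over scalar multiples, sums, and Hadamard products, and therefore carries a polynomial in the $f_j$'s to the \emph{same} polynomial evaluated at the $\deriveright a f_j$'s:
\[
\deriveright a\, p^{(b)}_i(f_1, \dots, f_k) = p^{(b)}_i(\deriveright a f_1, \dots, \deriveright a f_k).
\]
The right-hand side belongs to $A = \poly \Q {\deriveright a f_1, \dots, \deriveright a f_k}_{\hadamard}$ by the very definition of $A$, which is exactly the claim.

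The only delicate point, and hence the main (if mild) obstacle, is the constant term of $p^{(b)}_i$, which is read as a scalar multiple of the Hadamard identity $\one$. Distributing $\deriveright a$ across it requires knowing that $\deriveright a \one = \one$, which I would check directly from the definition: for every $w \in \Sigma^*$ one has $\coefficient w {\deriveright a \one} = \coefficient {wa} \one = 1$. Once this is recorded, the endomorphism identity applies verbatim to the whole polynomial and the claim follows without further work.
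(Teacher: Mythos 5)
Your proposal is correct and follows exactly the paper's own argument: commute the two derivatives via \cref{lem:derive left right commutativity}, then use the fact that $\deriveright a$ is a Hadamard algebra endomorphism (\cref{lem:right derivation Hadamard endomorphism}) to carry the polynomial expression for $\deriveleft b f_i$ in the $f_j$'s to the same polynomial in the $\deriveright a f_j$'s, which lies in $A$ by construction. Your extra check that $\deriveright a \one = \one$ is a correct and worthwhile detail that the paper leaves implicit.
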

    \begin{claimproof}
        Since left and right derivatives commute by~\cref{lem:derive left right commutativity},
        we have $\deriveleft b \deriveright a f_i = \deriveright a \deriveleft b f_i$.
        But $\deriveleft b f_i$ is in the Hadamard algebra generated by the $f_i$'s
        and since right derivative is an endomorphism by~\cref{lem:right derivation Hadamard endomorphism},
        $\deriveright a \deriveleft b f_i \in A$.
    \end{claimproof}
\end{proof}

\begin{theorem}
    \label{thm:Hadamard prevariety}
    The class of Hadamard-finite series is an effective prevariety.
\end{theorem}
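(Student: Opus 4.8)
The plan is to verify each clause in the definition of an effective prevariety, drawing almost entirely on the closure results already established in~\cref{lem:Hadamard closure properties}. Recall that an effective prevariety must (i) be a prevariety, \ie, satisfy~\cref{prevariety:A} (vector space structure) and~\cref{prevariety:B} (closure under left and right derivatives), and (ii) be effective, \ie, every series admits a finite presentation, the closure operations can be carried out algorithmically, and the equality problem is decidable. The strategy is thus to match these requirements one by one against what we already have.

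For the prevariety conditions, I would observe that~\cref{lem:Hadamard closure properties} already asserts that Hadamard-finite series form an \emph{effective difference algebra}. In particular they are closed under sum $f + g$ and scalar product $c \cdot f$, which yields~\cref{prevariety:A}, and the difference structure $(\deriveleft a)_{a \in \Sigma}$ witnesses that left derivatives $\deriveleft a f$ of a Hadamard-finite series are again Hadamard finite. Closure under right derivatives $\deriveright a f$ is precisely the ``moreover'' clause of the same lemma. Taken together, these give~\cref{prevariety:B}, so the class is a prevariety.

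For effectiveness, finite presentations are supplied by Hadamard automata: by~\cref{lem:Hadamard automata and series} a series is Hadamard finite if, and only if, it is recognised by a Hadamard automaton, so automata---equivalently, systems of Hadamard-finite equations~\cref{eq:Hadamard finite equations}---serve as the concrete syntax. That the vector space operations and both derivatives can be transported algorithmically along this syntax is again the effectiveness content of~\cref{lem:Hadamard closure properties} and the paragraph following it (recall that ``effectively Hadamard finite'' means an algorithm produces a finite presentation of the result).

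The only remaining ingredient, and the only genuinely nontrivial one, is decidability of equality (equivalently, zeroness, using condition 3' of the definition). Here I would invoke the correspondence recorded in~\cref{rem:Hadamard and polynomial automata}: a series $f$ is recognised by a Hadamard automaton iff its reversal $f^R$ is recognised by a polynomial automaton, and equality is manifestly preserved under reversal. Since the equality problem is decidable for polynomial automata~\cite{BenediktDuffSharadWorrell:PolyAut:2017}, it is decidable for Hadamard automata as well. This transfer of decidability is the crux of the argument; everything else is bookkeeping over~\cref{lem:Hadamard closure properties}. With all clauses verified, the class of Hadamard-finite series is an effective prevariety.
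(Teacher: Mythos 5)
Your proposal is correct and follows the paper's own argument exactly: the prevariety conditions \cref{prevariety:A} and \cref{prevariety:B} come from \cref{lem:Hadamard closure properties}, finite presentations and effectiveness come from \cref{lem:Hadamard automata and series}, and decidability of equality is transferred from polynomial automata via the reversal correspondence in \cref{rem:Hadamard and polynomial automata} together with the result of \cite{BenediktDuffSharadWorrell:PolyAut:2017}. The only difference is that you spell out the bookkeeping more explicitly than the paper does.
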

\begin{proof}
    The two closure conditions~\cref{prevariety:A} and~\cref{prevariety:B}
    follow from~\cref{lem:Hadamard closure properties}.
    By~\cref{lem:Hadamard automata and series} and~\cref{rem:Hadamard and polynomial automata},
    the equality problem for Hadamard-finite series
    is (effectively) equivalent to the equality problem for polynomial automata,
    and the latter is decidable by~\cite[Corollary 1]{BenediktDuffSharadWorrell:PolyAut:2017}.
\end{proof}

We conclude this section by showing that Hadamard-finite series over disjoint alphabets are closed under shuffle product.
This is a surprising observation,
since it is mixing together products of a very different nature.
It will be used in~\cref{thm:Hadamard-finite series - decidable commutativity}
to provide a hardness result for the commutativity problem.
\begin{restatable}{lemma}{lemHadamardShuffle}
    \label{lem:Hadamard shuffle}
    Let $\Sigma, \Gamma$ be two finite and disjoint alphabets $\Sigma \cap \Gamma = \emptyset$.
    If $f \in \series \Q \Sigma$ and $g \in \series \Q \Gamma$ are Hadamard finite,
    then $f \shuffle g$ is effectively Hadamard finite.
\end{restatable}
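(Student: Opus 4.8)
The plan is to exploit the disjointness of $\Sigma$ and $\Gamma$ to obtain a clean description of the shuffle product, and then to reuse the working definition of Hadamard-finiteness (\cref{lem:Hadamard finite - working definition}). The shuffle product is governed by the Leibniz-style rule $\deriveleft a (f \shuffle g) = \deriveleft a f \shuffle g + f \shuffle \deriveleft a g$ with base case $\coefficient \e {(f \shuffle g)} = f_\e \cdot g_\e$. Since $g$ is supported on $\Gamma^*$ and $a \notin \Gamma$ for $a \in \Sigma$, we have $\deriveleft a g = \zero$, so the rule collapses to $\deriveleft a (f \shuffle g) = \deriveleft a f \shuffle g$; symmetrically $\deriveleft b (f \shuffle g) = f \shuffle \deriveleft b g$ for $b \in \Gamma$. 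Unwinding the coinduction also yields the coefficientwise formula
\begin{align}
\coefficient w {(f \shuffle g)} = \coefficient {\pi_\Sigma(w)} f \cdot \coefficient {\pi_\Gamma(w)} g,
\end{align}
where $\pi_\Sigma(w)$ (resp.\ $\pi_\Gamma(w)$) is the subword of $w$ retaining only the letters of $\Sigma$ (resp.\ $\Gamma$); disjointness guarantees this is the only surviving interleaving.

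The key algebraic fact, immediate from this formula by splitting each coefficient product into its $\Sigma$- and $\Gamma$-parts, is that over disjoint alphabets shuffle and Hadamard products intertwine:
\begin{align}
\label{eq:shuffle-hadamard-intertwine}
(f \shuffle g) \hadamard (f' \shuffle g') = (f \hadamard f') \shuffle (g \hadamard g').
\end{align}
I expect this identity to be the crux of the argument. With it in hand, the construction is as follows. Let $f_1, \dots, f_k$ and $g_1, \dots, g_l$ be Hadamard-finite generators for $f$ and $g$, so that $\deriveleft a f_i = p^{(a)}_i(f_1, \dots, f_k)$ for $a \in \Sigma$ and $\deriveleft b g_j = q^{(b)}_j(g_1, \dots, g_l)$ for $b \in \Gamma$, as in \cref{eq:Hadamard finite equations}. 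I may assume $f$ and the Hadamard identity $\one$ over $\Sigma^*$ both occur among the $f_i$ (adding $\one$ is harmless since $\deriveleft a \one = \one$), and likewise that $g$ and the Hadamard identity over $\Gamma^*$ occur among the $g_j$. I then take as generators for $f \shuffle g$ all products $f_i \shuffle g_j$ (for all $i, j$), spanning the finitely generated Hadamard algebra $\poly \Q {f_i \shuffle g_j}_{\hadamard}$.

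It remains to verify the two conditions of \cref{lem:Hadamard finite - working definition}. First, $f \shuffle g$ is itself one of the generators. Second, each left derivative of a generator lies in the algebra: for $a \in \Sigma$ we have $\deriveleft a (f_i \shuffle g_j) = p^{(a)}_i(f_1, \dots, f_k) \shuffle g_j$, and since shuffle is bilinear it suffices to rewrite a single Hadamard monomial. Using identity~\eqref{eq:shuffle-hadamard-intertwine} and the fact that $\one$ is the Hadamard unit over the relevant alphabet,
\begin{align}
(f_{i_1} \hadamard \cdots \hadamard f_{i_r}) \shuffle g_j = (f_{i_1} \shuffle g_j) \hadamard (f_{i_2} \shuffle \one) \hadamard \cdots \hadamard (f_{i_r} \shuffle \one),
\end{align}
which is a Hadamard polynomial in the generators, each factor being some $f_{i_s} \shuffle g_{j'}$. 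The case $b \in \Gamma$ is symmetric. Hence $f \shuffle g$ is Hadamard finite. Since every step---forming the $kl$ generators and computing the polynomials witnessing closure from the $p^{(a)}_i$ and $q^{(b)}_j$---is a concrete manipulation of finite presentations, it is \emph{effectively} Hadamard finite. The only genuine obstacle is establishing~\eqref{eq:shuffle-hadamard-intertwine}; everything downstream is bookkeeping on top of the working definition.
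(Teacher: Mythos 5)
Your proof is correct, and although it lands on the same generating set $\set{f_i \shuffle g_j}$ and verifies the same two conditions of \cref{lem:Hadamard finite - working definition}, the key identity you use is genuinely different from the paper's. The paper proves, by coinduction, a one-sided distributivity law $f \shuffle (g \hadamard h) = (f \shuffle g) \hadamard (f \shuffle h)$ under the normalisation $f(\e) \in \set{0,1}$, lifts it to polynomials without constant term, and rescales the generators so their constant terms lie in $\set{0,1}$. You instead prove the symmetric intertwining law $(f \shuffle g) \hadamard (f' \shuffle g') = (f \hadamard f') \shuffle (g \hadamard g')$, read off directly from the coefficient formula $\coefficient w {(f \shuffle g)} = \coefficient {\pi_\Sigma(w)} f \cdot \coefficient {\pi_\Gamma(w)} g$, and you absorb constant terms by adjoining the Hadamard units $\one_\Sigma$ and $\one_\Gamma$ as extra generators. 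What your route buys is substantial: your identity holds with no side condition, whereas the paper's distributivity claim is in fact false under its stated hypothesis. Taking $f = 2a$ with $a \in \Sigma$ and $g = h = b$ with $b \in \Gamma$, one gets $f \shuffle (g \hadamard h) = 2ab + 2ba$ but $(f \shuffle g) \hadamard (f \shuffle h) = 4ab + 4ba$; the coinductive step replaces $f$ by $\deriveleft a f$, whose constant term $f_a = 2$ escapes $\set{0,1}$, so the hypothesis is not preserved (the identity actually requires \emph{all} coefficients of $f$ to lie in $\set{0,1}$, which rescaling does not ensure). Your factorisation $(f_{i_1} \hadamard \cdots \hadamard f_{i_r}) \shuffle g_j = (f_{i_1} \shuffle g_j) \hadamard (f_{i_2} \shuffle \one_\Gamma) \hadamard \cdots \hadamard (f_{i_r} \shuffle \one_\Gamma)$ compensates each extra copy of $g_j$ with a unit and is the correct way to carry out the rewriting; the effectiveness claim is also fine, since the whole construction is a syntactic manipulation of the closure polynomials. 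In short, your argument is not only valid but also repairs a gap in the paper's own proof of this lemma.
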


\subsection{Commutativity problem for Hadamard-finite series}

We now have all ingredients to prove our main result for Hadamard-finite series.
\begin{theorem}
    \label{thm:Hadamard-finite series - decidable commutativity}
    The commutativity problem for Hadamard-finite series (equivalently, Hadamard and polynomial automata) is Ackermann-complete.
\end{theorem}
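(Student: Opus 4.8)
The plan is to prove matching upper and lower bounds. For the upper bound, I would first invoke \cref{thm:Hadamard prevariety} to record that Hadamard-finite series form an effective prevariety, so that \cref{thm:commutativity for effective prevarieties} already yields decidability. To pin down the complexity, I would follow the reduction of \cref{lem:commutativity}: commutativity of $f$ is equivalent to the finitely many equalities $\deriveleft a \deriveleft b f = \deriveleft b \deriveleft a f$ and $\deriveleft a f = \deriveright a f$ ranging over $a, b \in \Sigma$. Each of the series involved is effectively Hadamard finite---the left derivatives trivially, and the right derivative $\deriveright a f$ by \cref{lem:Hadamard closure properties}---so each equality is an instance of the equality problem for polynomial automata, solvable in Ackermann time by~\cite{BenediktDuffSharadWorrell:PolyAut:2017}. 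Since we perform only $O(\card{\Sigma}^2)$ such tests and constructing the derivatives causes at most an elementary blow-up in the representation, the whole procedure stays within Ackermann.

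For the lower bound, I would reduce from the zeroness problem for Hadamard (equivalently polynomial) automata, which is Ackermann-hard since equality is Ackermann-complete~\cite{BenediktDuffSharadWorrell:PolyAut:2017} and equality reduces to zeroness via \cref{prevariety:A}. Given a Hadamard-finite $f \in \series \Q \Sigma$, pick two fresh symbols $a, b \notin \Sigma$ and form the series $ab \shuffle f$. By \cref{lem:Hadamard shuffle} this series is effectively Hadamard finite, so it constitutes a legitimate commutativity instance; and by \cref{lem:shuffle reflects commutativity}, $ab \shuffle f$ is commutative if, and only if, $f = \zero$. Hence zeroness many-one reduces to commutativity, transferring Ackermann-hardness, and the two bounds together give Ackermann-completeness.

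The main obstacle lies in the lower bound: the reduction must stay inside the Hadamard-finite class, yet a priori there is no reason a shuffle product of Hadamard-finite series should again be Hadamard finite, the shuffle and Hadamard products being of a very different nature. This difficulty is precisely resolved by \cref{lem:Hadamard shuffle} together with the zeroness-detecting identity of \cref{lem:shuffle reflects commutativity}; once these are in hand the reduction is immediate. On the upper-bound side, the only point requiring care is checking that assembling finitely many Ackermann-time equality tests, and building the needed left and right derivatives, does not push the complexity beyond Ackermann, which holds because Ackermann is closed under the relevant elementary compositions.
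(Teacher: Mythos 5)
Your proposal is correct and follows essentially the same route as the paper: decidability via \cref{thm:Hadamard prevariety} and \cref{thm:commutativity for effective prevarieties}, the upper bound via the finitely many equality tests of \cref{lem:commutativity} (each Ackermann by~\cite{BenediktDuffSharadWorrell:PolyAut:2017}), and the lower bound by reducing zeroness to commutativity through $ab \shuffle f$ using \cref{lem:Hadamard shuffle} and \cref{lem:shuffle reflects commutativity}. The paper phrases the hardness direction as a polynomial-time reduction from equivalence rather than zeroness, but these are interchangeable by \cref{prevariety:A}, so the arguments coincide.
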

\begin{proof}
    Thanks to~\cref{thm:commutativity for effective prevarieties} and \cref{thm:Hadamard prevariety},
    the commutativity problem for Hadamard-finite series is decidable.
    In fact, we have reduced the commutativity problem over alphabet $\Sigma$
    to $\card \Sigma^2 + \card \Sigma$ equivalence queries (\cf~\cref{lem:commutativity}).
    
    On the other hand, the equivalence problem for Hadamard-finite series
    efficiently reduces to the commutativity problem (\ie, in polynomial time).
    Indeed, let $f \in \series \Q \Sigma$ be a Hadamard-finite series
    and consider two fresh input symbols $a, b \not\in \Sigma$.
    Thanks to~\cref{lem:shuffle reflects commutativity},
    $g := ab \shuffle f$ is commutative if, and only if, $f = \zero$.
    Moreover, since $ab$ is Hadamard-finite (even rational),
    $g$ is Hadamard finite by~\cref{lem:Hadamard shuffle}.
    Furthermore, a finite representation for $g$ can be constructed in polynomial time from a finite representation of $f$.
    Thus, we have reduced checking whether $f = \zero$
    to checking whether $g$ is commutative.

    Summarising, commutativity and equivalence are polynomial-time equivalent.
    The complexity of equivalence for polynomial automata is Ackermann-complete~\cite[Theorem 1 + Corollary 1]{BenediktDuffSharadWorrell:PolyAut:2017},
    thus the same is true for Hadamard-finite series~(\cf~\cref{rem:Hadamard and polynomial automata}).
    %
%    Consequently, the commutativity problem for Hadamard-finite series (equivalently, Hadamard and polynomial automata)
%    is also Ackermann-complete.
\end{proof}

\subsection{Application: Multivariate polynomial recursive sequences}
\label{sec:multivariate polyrec}

We introduce a class of multivariate recursive sequences
generalising the univariate \emph{polynomial recursive sequences} (in short, \emph{polyrec})
from~\cite{Cadilhac:Mazowiecki:Paperman:Pilipczuk:Senizergues:ToCS:2021}.
We will make a crucial use of~\cref{thm:Hadamard-finite series - decidable commutativity}
to argue that this class can be presented \emph{effectively} by polynomial recursive equations.
In other words, we will show that polynomial recursive equations are a \emph{decidable syntax} for polyrec sequences:
There is an algorithm which, given a system of polynomial recursive equations in input,
decides whether they actually define a (polyrec) sequence at all.
The lack of an effective presentation
has been an obstacle to the study of a multivariate generalisation of polyrec sequences.
In~\cref{sec:shift algebra} we begin with some preliminaries about multivariate sequences,
in~\cref{sec:polyrec} we define multivariate polyrec sequences and study some of their basic properties,
and finally in~\cref{sec:consistency of polyrec equations} we use~\cref{thm:Hadamard-finite series - decidable commutativity}
to argue that their syntax is effective.

\subsubsection{Difference sequence algebra}
\label{sec:shift algebra}

Fix a dimension $d \in \N_{\geq 1}$.
For every $1 \leq j \leq d$,
by $e_j$ we denote the $j$-th unit vector in $\N^d$.
A \emph{multivariate sequence} is a function $f : \N^d \to \Q$.
We equip the set of sequences with the structure of a difference algebra
$\tuple{\N^d \to \Q; +, \cdot, (\shift j)_{1 \leq j \leq d}}$,
where $\zero$, $\one$, scalar product $c \cdot f$ ($c \in \Q$),
sum $f + g$, and multiplication $f \cdot g$ are defined element-wise,
and the \emph{$j$-th left shift} endomorphism $\shift j$ maps a sequence $f : \N^d \to \Q$
to the sequence $\shift j f$ \st
\begin{align*}
    (\shift j f)(n) := f(n + e_j),
    \quad \text{for all } n \in \N^d.
\end{align*}
Unlike the difference algebra of Hadamard series from~\cref{sec:Hadamard algebra},
the endomorphisms $\shift j$'s pairwise commute
$\shift j \circ \shift h = \shift h \circ \shift j$.
%which we call the \emph{shift algebra}.

\subsubsection{Polyrec sequences}
\label{sec:polyrec}

A $d$-variate sequence $f : \N^d \to \Q$ is \emph{polyrec}
if there exist $k \in \N_{\geq 1}$,
auxiliary sequences $f_1, \dots, f_k : \N^d \to \Q$ with $f = f_1$,
and polynomials $p^{(j)}_i \in \poly \Q k$ for all $1 \leq i \leq k$ and $1 \leq j \leq d$, \st
\begin{align}
    \label{eq:polyrec}
    \shift j f_i
        = p^{(j)}_i(f_1, \dots, f_k),
        \quad \forall 1 \leq i \leq k, 1 \leq j \leq d.
\end{align}
%
%for all $1 \leq i \leq k$ and $1 \leq j \leq d$.
%
In other words, for each $f_i$ and coordinate $1 \leq j \leq d$
there is a distinct recursive equation
specifying how future values can be computed from the current one.

The univariate case $d = 1$ corresponds to the polyrec sequences
from~\cite{Cadilhac:Mazowiecki:Paperman:Pilipczuk:Senizergues:ToCS:2021},
a rich class containing the linear recursive sequences (such as the Fibonacci numbers),
as well as fast growing sequences such as $2^{2^n}$.
Moreover, the zeroness and equivalence problems for univariate polyrec sequences are decidable
(with Ackermann complexity~\cite{BenediktDuffSharadWorrell:PolyAut:2017}),
and whether there is an elementary algorithm has been open for some time
(\cf~\cite{ClementeDontenBuryMazowieckiPilipczuk:STACS:23}).
In order to develop intuition, we show below some examples of polyrec sequences.
\begin{example}
    We begin with a simple example, univariate and linear.
    The \emph{Fibonacci sequence} $F : \N \to \Q$
    is well-known to satisfy the linear recursion $F(n+2) = F(n+1) + F(n)$.
    We can turn this into the polyrec format by introducing an auxiliary sequence $G : \N \to \Q$ \st\-
    $\shift 1 F = F + G$ and $\shift 1 G = F$.
    %
    % In fact, the difference equations above are even linear (with constant coefficients),
    % and thus the Fibonacci sequence is a linear recursive sequence.
    In this way, one shows that all linear recursive sequences are polyrec.
\end{example}

\begin{example}
    \label{ex:polyrec 1}
    We now consider a bivariate example $f : \N^2 \to \Q$.
    For every $n_1, n_2 \in \N$, let 
    $f(n_1, n_2) := 2^{2^{n_1}} \cdot 2^{3^{n_2}}$.
    Introducing auxiliary sequences $g(n_1, n_2) := 2^{2^{n_1}}$ and $h(n_1, n_2) := 2^{3^{n_2}}$
    we have polyrec equations
    \begin{align*}
        \arraycolsep=1pt
        \begin{array}{rl}
            \shift 1 f &= f \cdot g, \\
            \shift 2 f &= f \cdot h^2,
        \end{array}
        \quad
        \begin{array}{rl}
            \shift 1 g &= g^2, \\
            \shift 2 g &= g,
        \end{array}
        \quad
        \begin{array}{rl}
            \shift 1 h &= h, \\
            \shift 2 h &= h^3.
        \end{array}
    \end{align*}
    % and thus $f, g, h$ are polyrec.
\end{example}

\begin{example}
    The bivariate factorial $f(n_1, n_2) := n_1! \cdot n_2!$ satisfies the recursion
    \begin{align*}
        f(n_1+1, n_2) &= (\underline {n_1} + 1) \cdot f(n_1, n_2), \\
        f(n_1, n_2+1) &= (\underline {n_2} + 1) \cdot f(n_1, n_2).
    \end{align*}
    This is not in the polyrec format~\cref{eq:polyrec},
    because of the two underlined occurrences of the index variables $n_1, n_2$.
    Linear recursions with coefficients polynomial in $n_1, \dots, n_d$
    are called \emph{P-recursive}~\cite{Zeilberger:JMAA:1982}.
    We can introduce auxiliary sequences $g(n_1, n_2) := n_1$ and $h(n_1, n_2) := n_2$
    and transform P-recursions into polyrec equations
    \begin{align*}
        \arraycolsep=1pt
        \begin{array}{rl}
            \shift 1 f &= (g+\one) \cdot f, \\
            \shift 2 f &= (h+\one) \cdot f,
        \end{array}
        \quad
        \begin{array}{rl}
            \shift 1 g &= \one + g, \\
            \shift 2 g &= g,
        \end{array}
        \quad
        \begin{array}{rl}
            \shift 1 h &= h, \\
            \shift 2 h &= \one + h.
        \end{array}
    \end{align*}
    A similar argument shows that $n_1! + n_2!$ is polyrec.
    In fact, polyrec include all $P$-recursive definitions with a constant leading term
    (\cf~\cite[Proposition 3.6]{Lipshitz:D-finite:JA:1989}),
    also called \emph{monic $P$-recursive} in\-
    \cite[page 5]{Buna-MargineanChevalShirmohammadiWorrell:POPL:2024}.
    For instance, $n!$ is monic $P$-recursive since $(n+1)! = (n+1) \cdot n!$,
    but the \emph{Catalan numbers} $C_n$ are not since $\underline {(n+2)} \cdot C_{n+1} = (4n+2) \cdot C_n$
    (the underlined term is non-constant).
\end{example}
\begin{remark}
    All the examples so far are sums of products of univariate polyrec sequences.
    This needs not be the case in general.
    For instance, $(n_1 + n_2)!$ and $2^{2^{n_1 + n_2}}$ are polyrec,
    but not of that form.
    This shows that the characterisation of multivariate \emph{linear} recursive sequences
    in~\cite[Proposition 2.1]{Karhumaki:TCS:1977} as sums of products of univariate linear recursive sequences
    does not generalise to the polyrec case.
\end{remark}
\noindent
More examples will be shown in~\cref{sec:consistency of polyrec equations}.
%
% \subsubsection{Closure properties}
%
Multivariate polyrec sequences are a well-behaved class of sequences.
They are closed under the algebra operations (scalar product, sum, and multiplication),
shifts $\shift j f$,
\emph{sections}~\cite[Definition 3.1]{Lipshitz:D-finite:JA:1989},
and \emph{diagonals}~\cite[Definition 2.6]{Lipshitz:D-finite:JA:1989}.
More details about closure properties can be found in~\cref{app:polyrec}.
%
%Polyrec sequences constitute an effective difference algebra
%(this will follow from our isomorphism lemma~\cref{lem:Hadamard isomorphism lemma}).

\subsubsection{Consistency of polyrec equations}
\label{sec:consistency of polyrec equations}

So far we have defined a class of multivariate sequences, and we have argued that this class is robust
by presenting examples and closure properties.
However, we have sidestepped a crucial issue,
namely whether the equations~\cref{eq:polyrec} together with an initial condition
do actually define a sequence.
By the shape of the equations,
a given initial condition gives rise to \emph{at most one} solution.
However, whether a solution exists at all is nontrivial.
\begin{problem}[Polyrec consistency problem]
    \label{problem:polyrec consistency}
    The \emph{consistency problem} for polyrec equations
    takes as input a set of polyrec equations~\cref{eq:polyrec}
    together with an initial condition $c \in \Q^k$,
    and it amounts to decide whether there is a sequence solution $f \in (\N^d \to \Q)^k$ 
    extending the initial condition $f(0) = c$.
\end{problem}
\noindent
We emphasise that~\cref{problem:polyrec consistency} is about polyrec equations,
which may or may not represent a polyrec sequence,
and the question is finding out which one is the case.
We show that~\cref{problem:polyrec consistency} can algorithmically be decided,
thus showing that the syntax of polyrec sequences~\cref{eq:polyrec} is \emph{effective}.
%
% The definition of polyrec sequences
% contains the \emph{promise} that equations~\cref{eq:polyrec} have a sequence solution
% extending a given initial condition.
% %
%It is clear that for every initial condition $\tuple{f^{(1)}(0), \dots, f^{(k)}(0)} \in \Q^k$
%there exists \emph{at most} one solution to the system of equations~\cref{eq:polyrec}
%(since the shifted values are functions of the ``previous'' values).
%
In the \underline{univariate} case $d = 1$,
\emph{every} initial condition extends to a solution,
and thus the consistency problem is trivial.
However, in the multivariate case $d \geq 2$,
whether a system of equations~\cref{eq:polyrec}
actually defines a (tuple of) sequence(s)
involves an infinite number of constraints to be satisfied\-
(\cf~\cref{fig:polyrec constraints}).
%
% In order to develop intuition, we provide some examples
% illustrating the possible situations that may arise.

\begin{example}[All initial conditions extend to a solution]
    Let $d = 2$ and consider polyrec equations
    \begin{align*}
        \shift 1 f = f^3
            \quad\text{and}\quad
                \shift 2 f = f^5.
    \end{align*}
    The update functions $p(x) := x^3$ and $q(x) := x^5$
    commute $p(q) = q(p) = x^{15}$ identically. %for every value for $x$.
    This is a very strong condition,
    implying that \emph{every} initial condition $f(0, 0) := c \in \Q$
    extends to a sequence solution,
    namely $f(n_1, n_2) = c^{3^{n_1} \cdot 5^{n_2}}$.
\end{example}

\begin{example}[Some initial conditions extend to a solution]
    \label{ex:polyrec equations}
    Revisiting~\cref{ex:intro:2},
    let $d = 2$ and consider polyrec equations
    \begin{align*}
        \shift 1 f = f^2
            \quad\text{and}\quad
                \shift 2 f = \one - f^2.
    \end{align*}
    We claim that the set of initial values extending to a solution is $\set{-1, 0, 1}$.
    Indeed, if we let $p(x) := x^2$ and $q(x) := 1 - x^2$,
    then the equation $p(q(x)) = q(p(x))$ has exactly three solutions $x \in \set{-1, 0, 1}$.
    % p(q(x)) = (1 -x^2)^2 = 1 - 2x^2 + x^4
    % q(p(x)) = 1 - x^4
    % => x = 0 or x^2 = 1
    %
    This set is stable under $p(x)$ and $q(x)$:
    \begin{center}
    \newcommand{\littlediagram}[4]{
        \tikzset{x=8mm, y=8mm, every node/.append style={font=\footnotesize}}
        \begin{tikzpicture}
            \draw[step=8mm] (0,0) grid (1,1);
            \foreach \i in {0,...,1}
                \foreach \j in {0,...,1}
                    \fill (\i,\j) circle (1pt);
            \draw[very thick, ->] (0,0) -- (1,0) node[midway, below] {$p$};
            \draw[very thick, ->] (0,0) -- (0,1) node[midway, left] {$q$};
            \draw[very thick, ->] (0,1) -- (1,1) node[midway, above] {$p$};
            \draw[very thick, ->] (1,0) -- (1,1) node[midway, right] {$q$};
            \fill[black]
                (0,0) circle (2pt) node[below] {$#1$}
                (1,0) circle (2pt) node[below] {$#2$}
                (0,1) circle (2pt) node[above] {$#3$}
                (1,1) circle (2pt) node[above] {$#4$};
            \end{tikzpicture}
        }
    \begin{tabular}{ccc}
        \littlediagram {-1} 1 0 0&
        \littlediagram 0 0 1 1&
        \littlediagram 1 1 0 0
    \end{tabular}
    \end{center}
    Consequently, any $f(0, 0) \in \set{-1, 0, 1}$ extends to a solution.
\end{example}

% \begin{example}[Infinitely many initial conditions extend to a solution]
%     \label{ex:infinitely any initial conditions}
%     The equations from~\cref{ex:polyrec 1} admit a solution for every initial condition
%     $c := f(0, 0), d := g(0, 0), e := h(0, 0) \in \Q$ \st~$c = d \cdot e$.
%     In this case the unique solution is
%     $f(n_1, n_2) = d^{2^{n_1}} \cdot e^{3^{n_2}}$,
%     $g(n_1, n_2) = d^{2^{n_1}}$,
%     $h(n_1, n_2) = e^{3^{n_2}}$.
%     %
%     In this case we have infinitely many initial conditions extending to a solution.
%     \lorenzo{In fact all initial conditions extend to a solution. fix this}
% \end{example}

\begin{example}[No initial condition extends to a solution]
    Let $d = 2$ and consider the polyrec equations
    \begin{align}
        \label{eq:polyrec example 2}
        \shift 1 f = f^2
            \quad\text{and}\quad
                \shift 2 f = f + \one.
    \end{align}
    Let the initial condition be $f(0, 0) := c \in \Q$.
    From the first equation we have $f(1, 0) = c^2$,
    and from the second $f(0, 1) = c+1$.
    But then we can obtain $f(1, 1)$ in two ways,
    $c^2 + 1$ and $(c+1)^2 = c^2 + 2c + 1$,
    implying $c = 0$ and thus $f(1, 1) = 1$.
    But now we compute $f(2, 2)$ in two ways
    obtaining $1^2 + 1 = 2$ and $(1+1)^2 = 4$.
    This leads to a contradiction,
    and thus there is no initial condition extending to a solution of~\cref{eq:polyrec example 2}.
\end{example}

% perhaps not interesting
% \begin{example}[Exactly one initial condition extends to a solution]
%     Let $d = 2$ and consider the polyrec equations
%     %
%     \begin{align}
%         \label{eq:polyrec example 3}
%         \shift 1 f = f^2
%             \quad\text{and}\quad
%                 \shift 2 f = 2\cdot f.
%     \end{align}
%     %
%     Clearly the initial condition $f(0, 0) = 0$
%     extends to the trivial solution $f = 0$.
%     %
%     In fact, no other initial condition $c := f(0, 0) \neq 0$ extends to a solution:
%     Computing $f(1, 1)$ in two ways we obtain $2 \cdot c^2 = (2c)^2$,
%     which is impossible when $c \neq 0$.
% \end{example}

%We are thus left with the problem of deciding whether a given system of polyrec equations~\cref{eq:polyrec} admits a solution.
%When the initial condition $c \in \Q^k$ is given,
%this is the same as checking membership $c \in C$ in the algebraic variety $C$ from~\cref{rem:algebraic variety of initial conditions}.
%
%When the initial condition is not fixed, the task is to decide whether $C$ is nonempty.
\noindent
\cref{problem:polyrec consistency} is an instance of the more general problem
solvability in sequences for systems of polynomial difference equations,
not necessarily of the polyrec kind~\cref{eq:polyrec}.
We now recall that the latter problem is undecidable in full generality.
\begin{remark}
    \label{rem:undecidability of polynomial difference equations}
    The more general problem of solvability in sequences
    for systems of polynomial difference equations
    is undecidable~\cite[Proposition 3.9]{PogudinScanlonWibmer:2020}.
    Nonlinear bivariate systems of two unknowns $k = d = 2$ suffice.
    The undecidable systems~\cite[Eq.~(5) on pg.~21]{PogudinScanlonWibmer:2020}
    in our notation can be written as follows:
    \begin{align*}
        (f - 1)(f - 2) \cdots (f - N) &= 0, \\
        (g - 1)(g - 2) \cdots (g - N) &= 0, \\
        \prod_{i = 1}^\ell ((a_\ell - f)^2 + (b_\ell - \shift 2 f)^2 + (c_\ell - g)^2 + (d_\ell - \shift 1 g)^2) &= 0,
    \end{align*}
    for suitable constants $N, a_1, \dots, a_\ell, b_1, \dots, b_\ell, c_1, \dots, c_\ell \in \N$,
    where for simplicity, we identify $n \in \N$ with $n \cdot \one$.
    (This is used to encode plane tiling problems,
    where the first two constraints force $f, g : \N^d \to \set {1, \dots, N}$ to range over a finite set,
    and the last equation encodes horizontal and vertical tiling constraints.)
    % (we do not need the details of the encoding here).
    %
    None of the equations above is in the polyrec format~\cref{eq:polyrec}.
\end{remark}

Consequently, in order to solve~\cref{problem:polyrec consistency}
we need to exploit the special shape of polyrec.
To this end, we develop a connection between sequence solutions of polyrec equations~\cref{eq:polyrec}
and Hadamard-finite series.
%
% \subsubsection{Polyrec consistency and commutative Hadamard-finite series}
% \label{sec:polyrec consistency and commutative Hadamard-finite series}

We make the very simple observation
that number sequences and commutative series contain the same information.
For instance, the sequence $2^{n_1} 3^{n_2}$ is ``the same'' as the series that maps
$w \in \set{a_1, a_2}^*$ to $2^{\multiplicity {a_1} w} 3^{\multiplicity {a_2} w}$.
This is made precise by saying that the difference sequence algebra
$\tuple{\N^d \to \Q; +, \cdot, (\shift j)_{1 \leq j \leq d}}$
and the difference Hadamard algebra of commutative series
$\tuple{\commseries \Q \Sigma; +, \cdot, (\deriveleft {a_j})_{1 \leq j \leq d}}$
(with $\Sigma = \set{a_1, \dots, a_d}$)
are isomorphic.
The isomorphism maps a commutative series $f \in \commseries \Q \Sigma$
to the number sequence $\widetilde f : \N^d \to \Q$
\st~$\widetilde f(n_1, \dots, n_d) := f(a_1^{n_1} \cdots a_d^{n_d})$ for every $n_1, \dots, n_d \in \N$.
This map preserves the algebra operations and the endomorphisms,
which has pleasant consequences.

Given a system of polyrec equations~\cref{eq:polyrec},
we construct a system of Hadamard-finite equations
\begin{align}
    \label{eq:Hadamard-finite companion system}
    \deriveleft {a_j} g_i = p^{(j)}_i(g_1, \dots, g_k),
    \quad \forall 1 \leq i \leq k, 1 \leq j \leq d,
\end{align}
which we call the \emph{companion system} to~\cref{eq:polyrec}.
\begin{example}
    The polyrec equations from~\cref{ex:polyrec equations}
    $\set{\shift 1 f = f^2, \shift 2 f = \one-f^2}$
    yield the companion system from~\cref{ex:Hadamard finite equations 1}
    $\set{\deriveleft {a_1} g = g^2, \deriveleft {a_2} g = \one - g^2}$.
\end{example}
\noindent
Every initial condition extends to a \emph{unique} series solution $g = \tuple{g_1, \dots, g_k} \in \series \Q \Sigma^k$
of~\cref{eq:Hadamard-finite companion system},
which can be proved by induction on the length of words.
By definition, this solution is a tuple of Hadamard-finite series.
When all components $g_1, \dots, g_k$ are commutative,
we can extract corresponding number sequences
$f_1 := \widetilde {g_1}, \dots, f_k := \widetilde {g_k} : \N^d \to \Q$,
which are precisely the unique sequence solution
to the original polyrec equations~\cref{eq:polyrec}:
\begin{align*}
    \shift j {f_i}
%    = \shift j \widetilde {g_i}
    = \widetilde {\deriveleft {a_j} g_i}
    = \widetilde {\left(p^{(j)}_i(g_1, \dots, g_k)\right)}
 %   = p^{(j)}_i(\widetilde {g_1}, \dots, \widetilde {g_k})
    = p^{(j)}_i(f_1, \dots, f_k).
    % \quad \forall 1 \leq i \leq k, 1 \leq j \leq d,
\end{align*}
On the other hand, when any component $g_i$ is noncommutative,
\cref{eq:polyrec} does not have any number sequence solution.
To see this, it suffices to notice that any sequence solution of~\cref{eq:polyrec}
gives rise to a commutative Hadamard-finite series solution of the companion system~\cref{eq:Hadamard-finite companion system}
(via the inverse of the isomorphism above),
but as we have observed~\cref{eq:Hadamard-finite companion system} has a unique solution.

The argument above gives us an algorithm for the consistency problem:
Construct the companion system~\cref{eq:Hadamard-finite companion system}
and check that all components of the unique Hadamard-finite series solution are commutative,
which is decidable by~\cref{thm:Hadamard-finite series - decidable commutativity}.
\begin{theorem}
    \label{thm:polyrec consistency}
    The polyrec consistency problem~(\cref{problem:polyrec consistency}) is decidable
    with Ackermann complexity.
\end{theorem}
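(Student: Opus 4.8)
The plan is to reduce the consistency problem to finitely many commutativity tests for Hadamard-finite series, invoking \cref{thm:Hadamard-finite series - decidable commutativity}. Given a system of polyrec equations~\cref{eq:polyrec} with initial condition $c \in \Q^k$, I first build its companion system~\cref{eq:Hadamard-finite companion system} over the alphabet $\Sigma = \set{a_1, \dots, a_d}$, using the same polynomials $p^{(j)}_i$ but with left derivatives $\deriveleft{a_j}$ in place of the shifts $\shift j$. A straightforward induction on word length shows that the initial condition determines a \emph{unique} tuple of series $g = \tuple{g_1, \dots, g_k} \in \series \Q \Sigma^k$ satisfying the companion equations; by construction each component is Hadamard finite, and a finite presentation for it is computable from the input.

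The crux is the exact correspondence between sequence solutions of~\cref{eq:polyrec} and commutativity of the $g_i$, which I would establish through the isomorphism between the difference sequence algebra $\tuple{\N^d \to \Q; +, \cdot, (\shift j)_{1 \leq j \leq d}}$ and the difference Hadamard algebra of commutative series $\tuple{\commseries \Q \Sigma; +, \cdot, (\deriveleft {a_j})_{1 \leq j \leq d}}$. If all $g_i$ are commutative, then setting $f_i := \widetilde{g_i}$ and applying the isomorphism (which intertwines $\deriveleft{a_j}$ with $\shift j$ and the Hadamard product with pointwise product) turns the companion equations into the polyrec equations, so the $f_i$ form a sequence solution extending $c$. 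Conversely, any sequence solution of~\cref{eq:polyrec} pushes through the inverse isomorphism to a tuple of \emph{commutative} series satisfying the companion system; since that system has a unique solution, this tuple must coincide with $g$, forcing every $g_i$ to be commutative. Hence a sequence solution exists if and only if all $g_i$ are commutative, which is decidable by \cref{thm:Hadamard-finite series - decidable commutativity}.

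For the complexity, the reduction is efficient: the companion system is essentially a transcription of the input, and testing commutativity of each of the $k$ components reduces to $k \cdot (\card\Sigma^2 + \card\Sigma)$ equality queries on Hadamard-finite series (\cf~\cref{lem:commutativity}), all within Ackermann by the upper bound of \cref{thm:Hadamard-finite series - decidable commutativity}; this yields an Ackermann upper bound overall. A matching lower bound follows from the tight connection to commutativity already exploited in \cref{thm:Hadamard-finite series - decidable commutativity}, since reading a Hadamard-finite system as polyrec equations makes its consistency equivalent to commutativity of the corresponding series solution. The step I expect to demand the most care is the converse direction of the correspondence: the claim that a single noncommutative component rules out \emph{every} sequence solution hinges on uniqueness of the series solution to the companion system, so I would state and prove that uniqueness (the induction on word length) cleanly before invoking it.
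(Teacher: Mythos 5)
Your proposal follows essentially the same route as the paper: construct the companion Hadamard-finite system, prove by induction on word length that the initial condition determines a unique series solution, use the isomorphism between the difference sequence algebra and the difference Hadamard algebra of commutative series to show that a sequence solution exists iff every component of that unique solution is commutative, and then invoke \cref{thm:Hadamard-finite series - decidable commutativity}. The only divergence is your closing remark about a matching Ackermann lower bound, which the theorem does not claim and which would need a more careful reduction than the one you sketch (the hard commutativity instances arise from shuffling with fresh letters and are not directly polyrec systems), but this does not affect the correctness of your proof of the stated result.
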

% \begin{proof}
%     Fix the initial condition $c \in \Q^k$.
%     %
%     Thanks to the first part of~\cref{lem:polyrec Hadamard bridge}
%     the companion system has a unique Hadamard-finite series solution with initial condition $c$.
%     %
%     Thanks to the second part of the same lemma,
%     it suffices to check whether all components of this solution are commutative.
%     %
%     The latter can be checked thanks to~\cref{thm:Hadamard-finite series - decidable commutativity}
%     by running $k$ times the decision procedure for commutativity.
% \end{proof}
%!TEX root main.tex
\section{Shuffle automata and series}
\label{sec:shuffle}

In this section we recall \emph{weighted basic parallel processes}~\cite{Clemente:CONCUR:2024},
a nonlinear weighted model of computation generalising weighted finite automata
inspired from the theory of concurrency.
%L: anonymised
% which we have introduced in~\cite{Clemente:CONCUR:2024} under the name of \emph{weighted basic parallel processes}
% (due to its connection to process algebras and the theory of concurrency).
%
For reasons of uniformity with the rest of the paper,
we will refer to it as \emph{shuffle automata}.
We will briefly recall the necessary basic results on shuffle automata from~\cite{Clemente:CONCUR:2024},
to which we refer for more details.
The novel technical result in this section is that the series recognised by shuffle automata
are effectively closed under right derivatives (\cref{lem:shuffle finite right derivative}),
and thus constitute an effective prevariety (\cref{thm:shuffle finite prevariety}).
The closure property is nontrivial,
since the definition of shuffle automata is asymmetric \wrt~right and left derivatives.
Together with the \TWOEXPSPACE~algorithm for the equivalence problem~\cite[Theorem 1]{Clemente:CONCUR:2024},
we obtain that commutativity is decidable, with the same complexity (\cref{thm:shuffle finite commutativity problem}).
We then present an application to multivariate \emph{constructible differentially algebraic} power series (\CDA)
introduced by Bergeron and Sattler~\cite{BergeronSattler:TCS:1995}.
We show that their syntax is effective\-
(\cref{thm:decidability of CDA solvability}),
which was an open question\- %L: anonymised in our previous work~
\cite[Remark 27]{Clemente:CONCUR:2024}.

% In the second application, we exploit a connection to nonlinear dynamical systems due to the pioneering work of Fliess~\cite{Fliess:1981}
% and show that for a natural class of such systems we can decide whether
% the behaviour of the system is an \emph{analytic function} of its inputs.
% %
% This is a novel algorithmic result in control theory.

\subsection{Preliminaries}

\subsubsection{Differential algebra}

A \emph{derivation} of an algebra $\tuple{A; {+}, {*}}$ is a linear function $\delta : A \to A$
satisfying the following product rule (\emph{Leibniz rule})
\begin{align}
    \label{eq:Leibniz rule}
    %\tag{Leibniz rule}
    \delta(\alpha * \beta) = \delta \alpha * \beta + \alpha * \delta \beta,
    \quad\text{for all } \alpha, \beta \in A.
\end{align}
A \emph{differential algebra}
$\tuple{A; {+}, {*}, (\delta_j)_{1\leq j\leq d}}$ is an algebra
together with finitely many derivations $\delta_1, \dots, \delta_d : A \to A$\-
\cite{Ritt:DA:1950,Kaplansky:DA:1957,Kolchin:1973}.
We do not require the $\delta_i$'s to pairwise commute.
As an example, consider the algebra of polynomials $\tuple{\poly \Q X; {+}, {\cdot}}$.
A derivation $\Delta : \poly \Q X \to \poly \Q X$ acts on a polynomial by linearity and~\cref{eq:Leibniz rule}.
For instance, $\Delta (X_1 \cdot X_2) = \Delta X_1 \cdot X_2 + X_1 \cdot \Delta X_2$
and $\Delta (X_1^5) = 5 \cdot X_1^4 \cdot \Delta X_1$.
Thus, a derivation $\Delta$ of the polynomial algebra is uniquely determined
once we fix $\Delta X_1, \dots, \Delta X_k$.
For instance, the familiar partial derivative $\frac \oldpartial  {\oldpartial X_i}$
(for short, $\partial {X_i} {}$)
is the unique derivation $\Delta$ \st~$\Delta X_i = 1$ and $\Delta X_j = 0$ for $j \neq i$.
If we fix distinguished derivations $\Delta_1, \dots, \Delta_d$,
then we obtain the polynomial differential algebra
$\tuple{\poly \Q X; {+}, {\cdot}, (\Delta_j)_{1 \leq j \leq d}}$.
Polynomial differential algebras will provide the state space of configurations for shuffle automata.
Other examples of differential algebras will be presented in
\cref{sec:shuffle algebra} and \cref{sec:CDA algebra}.

\subsubsection{Shuffle differential algebra}
\label{sec:shuffle algebra}

% We have already introduced the shuffle product for series in~\cref{sec:preliminaries},
% based on an inductive definition on the length of words.
%
We recall a product operation on series which will be central
to the semantics of shuffle automata.
The \emph{shuffle product} of two series $f \shuffle g$ is defined coinductively by
(\cf~\cite[Definition 8.1]{BasoldHansenPinRutten:MSCS:2017})
\begin{align}
    \tag{${\shuffle}$-$\e$}
    \label{eq:shuffle:base}
    \coefficient \e {(f \shuffle g)}
        &= f_\e \cdot g_\e, \\
    \tag{${\shuffle}$-$\deriveleft a$}
    \label{eq:shuffle:step}
        \deriveleft a (f \shuffle g)
        &= \deriveleft a f \shuffle g + f \shuffle \deriveleft a g,
        \quad \forall a \in \Sigma.
\end{align}
This is an associative and commutative operation,
with identity the series $1 \cdot \e$.
%mapping $\e$ to $1$ and zero everywhere else
%(we overload notation and call this series $1$).
%
It originates in the work of Eilenberg and MacLane in homological algebra~\cite{EilenbergMacLane:AM:1953},
and was introduced in automata theory by Fliess under the name of \emph{Hurwitz product}~\cite{Fliess:1974}.
It is the series analogue of the shuffle product in language theory,
and it finds applications in concurrency theory,
where it models the \emph{interleaving semantics} of process composition.
Intuitively, it models all possible ways in which two sequences can be interleaved.
For instance, when we run processes $ab$ and $a$ in parallel we obtain
\begin{align*}
    &ab \shuffle a
        = a (\deriveleft a (ab \shuffle a)) + b (\deriveleft b (ab \shuffle a)) = \\
        &= a (\deriveleft a {(ab)} \shuffle a + ab \shuffle \deriveleft a a) + b (0)
        % &= a (\underbrace b_{\deriveleft a f} \shuffle \underbrace a_g + \underbrace{ab}_f \shuffle \underbrace \e_{\deriveleft a g})
        = a (b \shuffle a + ab \shuffle \e) = \\
        &= a (ba + ab + ab) = 2aab + aba.
\end{align*}
Thus, there are two executions over input $aab$
and one over $aba$.
By~\cref{eq:shuffle:step}, left derivatives $\deriveleft a$ ($a \in \Sigma$) are derivations for the shuffle product,
and we have thus obtained the \emph{differential shuffle algebra}
$\series \Q \Sigma_{\shuffle} := \tuple{\series \Q \Sigma; {+}, {\shuffle}, (\deriveleft a)_{a \in \Sigma}}$.
This is where the semantics of shuffle automata lives.

Before proceeding, it is worth noting that right derivatives are also derivations of the shuffle algebra.
This will be useful in the proof of~\cref{lem:shuffle finite right derivative}.
\begin{restatable}{lemma}{lemRightDerivationShuffleDer}
    \label{lem:right derivation shuffle derivation}
    Right derivatives $\deriveright a$ (with $a \in \Sigma$) are shuffle algebra derivations.
    I.e., they are linear and satisfy Leibniz rule~\cref{eq:Leibniz rule} for the shuffle product:
    \begin{align}
        \label{eq:derive shuffle right}
        \deriveright a (f \shuffle g)
        = \deriveright a f \shuffle g + f \shuffle \deriveright a g,
        \quad\forall a \in \Sigma.
    \end{align}
\end{restatable}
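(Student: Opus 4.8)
The plan is to prove the Leibniz identity \cref{eq:derive shuffle right} coefficient-wise, by induction on the length of the word at which both sides are evaluated, exploiting the coinductive definition of shuffle \cref{eq:shuffle:step} together with the commutation of left and right derivatives from \cref{lem:derive left right commutativity}. Linearity of $\deriveright a$ is immediate, since every derivative is by definition a linear map on the vector space $\series \Q \Sigma$, so only the product rule requires work. Concretely, I fix $a \in \Sigma$ and show, for all series $f, g$ and all words $w$, that $\coefficient w {\deriveright a (f \shuffle g)}$ equals $\coefficient w {(\deriveright a f \shuffle g + f \shuffle \deriveright a g)}$, by induction on $\length w$.

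For the base case $w = \e$, I unfold the left-hand side using $\coefficient \e {\deriveright a h} = \coefficient a h$, then the shuffle rules \cref{eq:shuffle:step} and \cref{eq:shuffle:base}, obtaining $\coefficient a {(f \shuffle g)} = f_a g_\e + f_\e g_a$; the right-hand side evaluated at $\e$ is $(\deriveright a f)_\e g_\e + f_\e (\deriveright a g)_\e = f_a g_\e + f_\e g_a$ as well, so the two sides agree.

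For the inductive step I write $w = b w'$ with $b \in \Sigma$. On the left I first commute the derivatives, $\deriveleft b \deriveright a = \deriveright a \deriveleft b$ by \cref{lem:derive left right commutativity}, then apply \cref{eq:shuffle:step} to $\deriveleft b (f \shuffle g)$ and use linearity of $\deriveright a$, reducing $\coefficient{w'}{\cdot}$ to the two terms $\deriveright a (\deriveleft b f \shuffle g)$ and $\deriveright a (f \shuffle \deriveleft b g)$; to each of these I apply the induction hypothesis at the shorter word $w'$, for the pairs $(\deriveleft b f, g)$ and $(f, \deriveleft b g)$ respectively. On the right I simply apply \cref{eq:shuffle:step} with $\deriveleft b$ and linearity. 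After invoking commutativity of the derivatives once more to identify $\deriveleft b \deriveright a$ with $\deriveright a \deriveleft b$, both sides collapse, at $w'$, to the same four summands $\deriveright a \deriveleft b f \shuffle g$, $\deriveleft b f \shuffle \deriveright a g$, $\deriveright a f \shuffle \deriveleft b g$, and $f \shuffle \deriveright a \deriveleft b g$, completing the induction.

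I expect no genuine obstacle here: the statement is a structural identity and the argument is a bookkeeping induction. The only point that must be handled with care is the inductive step, where the right derivative has to be pushed through the left derivative; this is exactly where \cref{lem:derive left right commutativity} is indispensable, and it is what makes the four summands on the two sides match. An alternative, equally viable route would be to compute directly from the explicit shuffle coefficient formula $\coefficient w {(f \shuffle g)} = \sum_{u,v} (u \shuffle v)_w\, f_u g_v$, but the coinductive induction above is shorter and parallels the proof of \cref{lem:right derivation Hadamard endomorphism}.
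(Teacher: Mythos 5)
Your proposal is correct and is essentially the paper's own proof: the paper argues by coinduction (matching the constant term, then showing the $\deriveleft b$-derivatives of both sides agree using \cref{eq:shuffle:step}, linearity, and \cref{lem:derive left right commutativity} in exactly the places you use them), and your induction on word length with the induction hypothesis quantified over all pairs of series is just the unfolded form of that same coinductive argument. No gap; the base case and the four matching summands in the step are identical to the paper's.
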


% A series $f$ has a \emph{shuffle inverse} $g$,
% i.e., $f \shuffle g = 1$,
% iff $f_\e \neq 0$.
% %
% The \emph{$n$-th shuffle power} $\shufflepower f n$ of a series $f$
% is inductively defined by $\shufflepower f 0 := 1$ and $\shufflepower f {(n+1)} := f \shuffle \shufflepower f n$.

\subsection{Shuffle automata and shuffle-finite series}
\label{sec:shuffle automata}

Syntactically, a shuffle automaton $A = \tuple{\Sigma, X, F, \Delta}$
is identical to a Hadamard automaton~(\cf~\cref{sec:Hadamard automata}),
however the semantics is different:
The transition $\Delta_a : X \to \poly \Q X$ (for $a \in \Sigma$)
is extended to a unique \emph{derivation} $\Delta_a : \poly \Q X \to \poly \Q X$
of the differential polynomial algebra of configurations (\cf~\cref{eq:Leibniz rule}).
The fact that such an extension exists and is unique is a basic fact from differential algebra, \cf~\cite[page 10, point 4]{Kaplansky:DA:1957}.
The extension to all finite words and the definition of the semantics of a configuration are unchanged
(\cf~\cref{eq:semantics}).
The series recognised by the shuffle automaton $A$ is $\sem {X_1}$.

\begin{example}
    \label{ex:binomial shuffle automaton}
    Consider a binary alphabet $\Sigma = \set{a_1, a_2}$
    and three nonterminals $X = \set{X_1, X_2, X_3}$.
    Let the output function be $F X_1 := 1$ and $F X_2 := F X_3 := 0$,
    and consider transitions
    \begin{align*}
        \arraycolsep=1pt
        \begin{array}{rl}
            \Delta_{a_1} X_1 &:= X_1 \cdot (1 + X_3), \\
            \Delta_{a_2} X_1 &:= X_1 \cdot X_2, \\
        \end{array}
        \quad 
        \begin{array}{rl}
            \Delta_{a_1} X_2 &:= 1, \\
            \Delta_{a_2} X_2 &:= 0, \\
        \end{array}
        \quad 
        \begin{array}{rl}
            \Delta_{a_1} X_3 &:= 0, \\
            \Delta_{a_2} X_3 &:= 1.
        \end{array}
    \end{align*}
    The run starting from $X_1$ and reading input $w = a_1 a_2$ is
    \begin{align*}
        X_1
            &\goesto {a_1} X_1 \cdot (1 + X_3)
            \goesto {a_2} \Delta_{a_2} (X_1 \cdot (1 + X_3)) = \\
            &= \Delta_{a_2} X_1 \cdot (1 + X_3) + X_1 \cdot \Delta_{a_2} (1 + X_3) = \\
            &= X_1 \cdot X_2 \cdot (1 + X_3) + X_1.
    \end{align*}
    and thus $\sem {X_1}_{a_1 a_2} = F (X_1 \cdot X_2 \cdot (1 + X_3) + X_1) = 1$.
    We will see in~\cref{ex:binomial CDA} that the series recognised by $X_1$
    maps $w$ to ${n \choose k} \cdot k!$ where $n := \multiplicity {a_1} w$ and $k := \multiplicity {a_2} w$.
\end{example}

Shuffle automata have been introduced in~\cite{Clemente:CONCUR:2024}
as a simultaneous generalisation of weighted finite automata~\cite{Schutzenberger:IC:1961}
and a model of concurrency called \emph{basic parallel processes}~\cite{Esparza:FI:1997}.
In the rest of the section,
we recall their basic properties.
A shuffle automaton endows the algebra of configurations
with the structure of a differential algebra
$\tuple{\poly \Q X; +, \cdot, (\Delta_a)_{a \in \Sigma}}$.
The following connection with the differential algebra of series
bears similarity to~\cref{lem:Hadamard automata - properties of the semantics}.
\begin{restatable}
    [Properties of the semantics~\protect{\cite[Lemma 8 + Lemma 9]{Clemente:CONCUR:2024}}]
    {lemma}{lemPropertiesofSemanticsOfShuffleAutomata}
    \label{lem:properties of semantics - shuffle automata}
    The semantics of a shuffle automaton is a homomorphism from
    the differential algebra of polynomials
        $\tuple{\poly \Q X; +, \cdot, (\Delta_a)_{a \in \Sigma}}$
    to the differential shuffle algebra of series 
        $\tuple{\series \Q \Sigma; +, \shuffle, (\deriveleft a)_{a \in \Sigma}}$.
    In other words, $\sem 0 = \zero$, $\sem 1 = 1 \cdot \e$,
    and, for all configurations $\alpha, \beta \in \poly \Q X$,
    \begin{align}
        \label{eq:shuffle:sem:scalar-product}
        \sem {c \cdot \alpha}
            &= c \cdot \sem \alpha 
            && \forall c \in \Q, \\
        \label{eq:shuffle:sem:sum}
            \sem {\alpha + \beta}
            &= \sem \alpha + \sem \beta, \\
        \label{eq:shuffle:sem:product}
            \sem {\alpha \cdot \beta}
            &= \sem \alpha \shuffle \sem \beta, \\
        \label{eq:shuffle:sem:derivation}
        \sem {\Delta_a \alpha}
            &= \deriveleft a \sem \alpha
            && \forall a \in \Sigma.
    \end{align}
\end{restatable}
Shuffle automata are data structures representing series.
The same class of series admits a semantic presentation,
which we find more convenient to work with and which we now recall.
For series $f_1, \dots, f_k \in \series \Q \Sigma$,
we denote by $\poly \Q {f_1, \dots, f_k}_{\shuffle}$ the smallest shuffle algebra containing $f_1, \dots, f_k$,
which we call~\emph{finitely generated} with \emph{generators} $f_1, \dots, f_k$.
When it is closed under the derivations $\deriveleft a$ (for all $a \in \Sigma$),
it is called a \emph{differential shuffle algebra}.
A series $f \in \series \Q \Sigma$ is \emph{shuffle finite}~\cite[Section 2.4]{Clemente:CONCUR:2024}
if it belongs to a finitely generated differential shuffle algebra
(note the analogy with Hadamard-finite series, \cf~\cref{def:Hadamard finite}).
By~\cite[Lemma 39]{Clemente:WBPP:arXiv:2024},
this is equivalent to the existence of
generators $f_1, \dots, f_k \in \series \Q \Sigma$ \st:
\begin{enumerate}
    \item $f \in \poly \Q {f_1, \dots, f_k}_{\shuffle}$, and
    \item $\deriveleft a f_i \in \poly \Q {f_1, \dots, f_k}_{\shuffle}$,
    for all $a \in \Sigma$ and $1 \leq i \leq k$.
\end{enumerate}
Thanks to this, one can prove that shuffle-finite series coincide
with the series recognised by shuffle automata~\cite[Theorem 12]{Clemente:CONCUR:2024},
and constitute an effective differential algebra~\cite[Lemma 10]{Clemente:CONCUR:2024}.
We complete the picture by observing that they are effectively closed under right derivatives.
% which is a novel result extending the previously known closure properties. % from~\cite{Clemente:CONCUR:2024}.
%
\begin{lemma}
    \label{lem:shuffle finite right derivative}
    If $f$ is shuffle finite, then $\deriveright a f$ is also effectively shuffle finite.
\end{lemma}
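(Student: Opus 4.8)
The plan is to mirror the proof of closure under right derivatives for Hadamard-finite series (\cref{lem:Hadamard closure properties}), but to account for the fact that right derivatives are now \emph{derivations} rather than endomorphisms (\cref{lem:right derivation shuffle derivation}). Concretely, let $f_1, \dots, f_k \in \series \Q \Sigma$ be generators witnessing that $f$ is shuffle finite, so that $f \in \poly \Q {f_1, \dots, f_k}_{\shuffle}$ and $\deriveleft b f_i \in \poly \Q {f_1, \dots, f_k}_{\shuffle}$ for all $b \in \Sigma$ and $1 \leq i \leq k$. Fix $a \in \Sigma$. The crucial difference from the Hadamard setting is that applying the derivation $\deriveright a$ to a shuffle product $g \shuffle h$ via the Leibniz rule produces both $\deriveright a g \shuffle h$ and $g \shuffle \deriveright a h$, so it is \emph{not} enough to take $\deriveright a f_1, \dots, \deriveright a f_k$ as the new generators: the original generators must be retained as well. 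I would therefore set
\[
    A := \poly \Q {f_1, \dots, f_k, \deriveright a f_1, \dots, \deriveright a f_k}_{\shuffle}.
\]

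First I would check that $\deriveright a f \in A$. Since $f$ is a shuffle polynomial in the $f_i$'s and $\deriveright a$ is a shuffle derivation (\cref{lem:right derivation shuffle derivation}), iterating the Leibniz rule expresses $\deriveright a f$ as a sum of shuffle products whose factors lie among the $f_i$'s and the $\deriveright a f_i$'s; hence $\deriveright a f \in A$. Next, to conclude via the working definition of shuffle finiteness, I would verify that $A$ is closed under every left derivative $\deriveleft b$, which reduces to checking this on each generator. For the generators $f_i$ we already have $\deriveleft b f_i \in \poly \Q {f_1, \dots, f_k}_{\shuffle} \subseteq A$. For the generators $\deriveright a f_i$ I would use that left and right derivatives commute (\cref{lem:derive left right commutativity}) to rewrite $\deriveleft b \deriveright a f_i = \deriveright a \deriveleft b f_i$; since $\deriveleft b f_i$ is a shuffle polynomial in the $f_j$'s, applying the derivation $\deriveright a$ again expresses it as a sum of shuffle products of factors among the $f_j$'s and the $\deriveright a f_j$'s, hence in $A$. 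This shows that $\{f_1, \dots, f_k, \deriveright a f_1, \dots, \deriveright a f_k\}$ is a valid set of generators witnessing that $\deriveright a f$ is shuffle finite.

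Effectiveness follows because each of these steps — expanding a shuffle polynomial under the Leibniz rule, and swapping the order of a left and right derivative — can be carried out algorithmically on finite presentations, producing an explicit generating set together with the corresponding system of defining equations for $\deriveright a f$. I expect the only genuine subtlety to be the one already highlighted: because $\deriveright a$ is a derivation and not an endomorphism, the original generators cannot be discarded, so the generating set roughly doubles in size. Once this is recognised, the argument is a faithful analogue of the Hadamard case, with \cref{lem:right derivation shuffle derivation} playing the role that \cref{lem:right derivation Hadamard endomorphism} played there.
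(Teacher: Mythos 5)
Your proposal is correct and follows essentially the same route as the paper's proof: the same enlarged generating set $\poly \Q {f_1, \dots, f_k, \deriveright a f_1, \dots, \deriveright a f_k}_{\shuffle}$, the same inductive use of the Leibniz rule (\cref{lem:right derivation shuffle derivation}) to show right derivatives of elements of the old algebra land in the new one, and the same appeal to \cref{lem:derive left right commutativity} to handle left derivatives of the new generators. The subtlety you flag — that the original generators must be retained because $\deriveright a$ is a derivation rather than an endomorphism — is exactly the point where the paper's argument departs from the Hadamard case.
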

\noindent
The definition of shuffle-finite series is asymmetric
since it is biased towards left derivatives,
therefore closure under right derivatives is nontrivial.
Nonetheless, the technology of shuffle-finite series
gives a rather short proof of this fact,
showing their effectiveness.
% together with the fact that left and right derivatives commute~(\cref{lem:derive left right commutativity}).
%
Thanks to the lemma and~\cite[Theorem 12]{Clemente:CONCUR:2024},
series recognised by shuffle automata are closed under right derivatives.
We do not know how to establish this without shuffle-finite series,
since it is not clear how transitions for the new automaton should be defined.
\begin{proof}
    The proof is similar to the one of~\cref{lem:Hadamard closure properties}.
    Consider $f \in A := \poly \Q {f_1, \dots, f_k}_{\shuffle}$,
    where the latter algebra is differential.
    Consider the shuffle algebra $B$ generated by the $f_i$'s and their right derivatives,
    %
    %\begin{align*}
        $B := \poly \Q {f_1, \dots, f_k, \deriveright a f_1, \dots, \deriveright a f_k}_{\shuffle}$.
    %\end{align*}
    %
%    The following is proved by structural induction.
    %
    \begin{claim*}
        For every series $g \in A$, we have $\deriveright a g \in B$.
    \end{claim*}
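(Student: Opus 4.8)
The plan is to exploit that $\deriveright a$ is a \emph{derivation} of the shuffle algebra, as established in \cref{lem:right derivation shuffle derivation}, so that it interacts cleanly with the shuffle product generating $A$. Since $A = \poly \Q {f_1, \dots, f_k}_{\shuffle}$ is by definition the smallest shuffle algebra containing the generators, every element $g \in A$ is a $\Q$-linear combination of \emph{shuffle monomials}, i.e.\ finite shuffle products $f_{i_1} \shuffle \cdots \shuffle f_{i_m}$ of the generators (with the empty product understood as the shuffle identity $1 \cdot \e$). Because $\deriveright a$ is linear, it suffices to show that $\deriveright a$ maps each such shuffle monomial into $B$.

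First I would dispose of the empty product: $\deriveright a (1 \cdot \e) = \zero$, since $\coefficient w {\deriveright a (1 \cdot \e)} = \coefficient {w a} {1 \cdot \e} = 0$ for every word $w$, and $\zero \in B$. For a nonempty shuffle monomial $g = f_{i_1} \shuffle \cdots \shuffle f_{i_m}$, I would apply the Leibniz rule for $\deriveright a$ repeatedly (formally, by induction on $m$ from \cref{lem:right derivation shuffle derivation}) to obtain the generalised product rule
\begin{align*}
    \deriveright a \tuple{f_{i_1} \shuffle \cdots \shuffle f_{i_m}}
        = \sum_{\ell = 1}^m f_{i_1} \shuffle \cdots \shuffle \deriveright a f_{i_\ell} \shuffle \cdots \shuffle f_{i_m}.
\end{align*}
Each summand is a shuffle product of factors each belonging to the set $\set{f_1, \dots, f_k, \deriveright a f_1, \dots, \deriveright a f_k}$, since exactly one factor is a right derivative $\deriveright a f_{i_\ell}$ while the remaining ones are plain generators. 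As $B$ is by definition the shuffle algebra generated by precisely these elements, and a shuffle algebra is closed under shuffle products and $\Q$-linear combinations, every summand lies in $B$, and hence so does $\deriveright a g$. Linearity then yields $\deriveright a g \in B$ for all $g \in A$.

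I do not expect a genuine obstacle in this particular claim; the only point requiring care is that $\deriveright a$ is a derivation rather than an endomorphism (unlike the Hadamard case in \cref{lem:Hadamard closure properties}), so the Leibniz rule spreads a single right derivative across the factors instead of applying it to each. This is exactly what keeps every summand inside $B$, which already contains all the $\deriveright a f_i$. The real work of the surrounding lemma lies instead in the companion step of closing $B$ under \emph{left} derivatives, which I would handle by the commutativity of left and right derivatives from \cref{lem:derive left right commutativity}.
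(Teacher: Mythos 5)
Your proof is correct and takes essentially the same route as the paper: the paper argues by structural induction on polynomial expressions over the generators, using linearity of $\deriveright a$ and the Leibniz rule from \cref{lem:right derivation shuffle derivation} in the product case, which is exactly what your explicit expansion into shuffle monomials and the generalised product rule amount to. The only cosmetic difference is that you unfold the induction into a closed-form sum (and explicitly treat the empty product), whereas the paper leaves it as a three-case induction.
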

    \begin{proof}
        We proceed by induction on polynomial expressions.
        In the base case, we have $g = f_i \in A$
        and there is nothing to prove.
        In the case $g = g_1 + g_2$ with $g_1, g_2 \in A$,
        we have $\deriveright a g = \deriveright a g_1 + \deriveright a g_2$
        and the claim follows from the inductive assumption
        applied to $\deriveright a g_1, \deriveright a g_2$.
        Finally, consider $g = g_1 \shuffle g_2$ with $g_1, g_2 \in A$.
        By the product rule~\cref{eq:derive shuffle right},
        we have $\deriveright a g = \deriveright a g_1 \shuffle g_2 + g_1 \shuffle \deriveright a g_2$.
        By the inductive assumption, $\deriveright a g_1, \deriveright a g_2 \in B$,
        and thus the same holds for $\deriveright a g$,
    \end{proof}
    Since by assumption $f \in A$,
    by the claim $\deriveright a f \in B$.
    Left derivatives of the old generators $f_i$ are in $A \subseteq B$
    since $A$ is differential by assumption.
    It remains to show the same for the new generators $\deriveright a f_i$.
    So consider $\deriveleft b \deriveright a f_i$ for an arbitrary $b \in \Sigma$
    and we have to show that it is in $B$.
    By~\cref{lem:derive left right commutativity}, we have
    $\deriveleft a \deriveright b f_i = \deriveright b \deriveleft a f_i$.
    But by assumption $\deriveleft a f_i \in A$,
    therefore by the claim $\deriveright b \deriveleft a f_i \in B$, as required.
\end{proof}

We thus obtain the following synthesis
of the above-mentioned results from~\cite{Clemente:CONCUR:2024}
and~\cref{lem:shuffle finite right derivative}.
% are conveniently brought together by the notion of effective prevarieties.
%
\begin{theorem}
    \label{thm:shuffle finite prevariety}
    The class of shuffle-finite series is an effective prevariety.
\end{theorem}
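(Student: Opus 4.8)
The plan is to verify each defining condition of an effective prevariety by assembling the cited facts about shuffle-finite series together with \cref{lem:shuffle finite right derivative}; this theorem is a synthesis rather than a new argument, so the work is almost entirely bookkeeping. First I would dispatch the vector-space condition \cref{prevariety:A}. Since shuffle-finite series are known to constitute an \emph{effective differential algebra} (\cite[Lemma 10]{Clemente:CONCUR:2024}), in particular they form an algebra over $\Q$, hence a vector space closed under scalar multiplication and addition over any fixed alphabet $\Sigma$. Thus \cref{prevariety:A} holds, and the corresponding operations are algorithmic by the same effectiveness statement.

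Next I would establish the derivative-closure condition \cref{prevariety:B}, treating the two derivatives asymmetrically. Closure under left derivatives $\deriveleft a$ is immediate from the differential algebra structure: the $\deriveleft a$ are exactly the distinguished derivations under which a shuffle-finite series' generating algebra is closed, so $\deriveleft a f$ is again shuffle finite, effectively so by \cite[Lemma 10]{Clemente:CONCUR:2024}. Closure under right derivatives $\deriveright a$ is the nontrivial half, and it is precisely the content of \cref{lem:shuffle finite right derivative}, which furthermore delivers this closure \emph{effectively}, i.e.\ with an algorithm producing a finite presentation of $\deriveright a f$ from one for $f$. Together these give \cref{prevariety:B} along with its effective realisation.

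Finally I would confirm the remaining effectiveness requirements. Finite presentations are supplied by shuffle automata, since shuffle-finite series coincide with the series recognised by shuffle automata (\cite[Theorem 12]{Clemente:CONCUR:2024}); the algorithmic execution of the vector-space operations and left derivatives on these presentations is part of the effective differential algebra property, while the algorithmic execution of right derivatives is the ``effectively shuffle finite'' conclusion of \cref{lem:shuffle finite right derivative}. Decidability of the equality problem — or equivalently zeroness, via $f = g \iff f - g = \zero$ together with \cref{prevariety:A} — follows from the decidability of the equivalence problem for shuffle automata (\cite[Theorem 1]{Clemente:CONCUR:2024}), which runs in \TWOEXPSPACE. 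Assembling all these, the class satisfies every clause in the definition of an effective prevariety. I do not expect a genuine obstacle inside this theorem itself: the single subtle point, and where all the real work resides, is closure under right derivatives, whose asymmetry with respect to the left-biased definition of shuffle-finiteness is the reason it is quarantined in \cref{lem:shuffle finite right derivative} rather than proved inline here.
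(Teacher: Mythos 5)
Your proposal is correct and matches the paper's treatment exactly: the paper presents this theorem as a synthesis of the closure and decidability results recalled from~\cite{Clemente:CONCUR:2024} (effective differential algebra structure, equivalence of shuffle automata and shuffle-finite series, decidable equality in \TWOEXPSPACE) combined with the new closure under right derivatives from \cref{lem:shuffle finite right derivative}. Your bookkeeping spells out precisely that synthesis, correctly identifying the right-derivative closure as the only nontrivial ingredient.
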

%
% \begin{proof}
%     The fact that the shuffle-finite series form an effective vector space
%     closed under left derivatives has been shown in~\cite[Lemma 10]{Clemente:CONCUR:2024}.
%     %
%     Moreover, decidability of the equality problem has been shown in~\cite[Theorem 1]{Clemente:CONCUR:2024}.
%     %
%     The proof is concluded by~\cref{lem:shuffle finite right derivative}.
% \end{proof}

From~\cref{thm:commutativity for effective prevarieties},
\cref{thm:shuffle finite prevariety},
and the fact that the equality problem for shuffle-finite series is in~\TWOEXPSPACE~\cite[Theorem 1]{Clemente:CONCUR:2024},
we obtain the main result of the section.
\begin{theorem}
    \label{thm:shuffle finite commutativity problem}
    The commutativity problem for shuffle-finite series (equivalently, series recognised by shuffle automata)
    is decidable in~\TWOEXPSPACE.
\end{theorem}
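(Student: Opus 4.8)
The plan is to mirror the argument for Hadamard-finite series (\cref{thm:Hadamard-finite series - decidable commutativity}), combining the general decidability framework with the specific complexity of the equality problem. Decidability is immediate: \cref{thm:shuffle finite prevariety} shows that shuffle-finite series form an effective prevariety, so \cref{thm:commutativity for effective prevarieties} applies directly. All remaining work is to track the complexity of the resulting procedure and argue it stays within \TWOEXPSPACE.

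For the upper bound, I would invoke \cref{lem:commutativity} to reduce commutativity of a shuffle-finite series $f$ to the $\card \Sigma^2 + \card \Sigma$ equality tests \cref{eq:swap} and \cref{eq:rotate}, namely $\deriveleft a \deriveleft b f = \deriveleft b \deriveleft a f$ and $\deriveleft a f = \deriveright a f$ for all $a, b \in \Sigma$. Each test compares two shuffle-finite series obtained by applying left and right derivatives to $f$. Left derivatives come for free from the shuffle-finite presentation, since its generators are closed under $\deriveleft a$ by definition; right derivatives are supplied effectively by \cref{lem:shuffle finite right derivative}.

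The crucial point to verify is that these derivative constructions incur only polynomial overhead, so that the cost is dominated by the equality tests themselves. I would check that the construction in the proof of \cref{lem:shuffle finite right derivative} — adjoining the right derivatives of the generators and re-expressing the required left derivatives through \cref{lem:derive left right commutativity} — at most doubles the number of generators while keeping the defining polynomials of comparable size. Since only a single right derivative per letter is needed (rather than a nested sequence), there is no iterated blow-up. With polynomially many equality queries, each on a polynomially-sized shuffle-finite presentation, and each decidable in \TWOEXPSPACE by \cite[Theorem 1]{Clemente:CONCUR:2024}, the whole procedure runs in \TWOEXPSPACE.

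The main obstacle is precisely this size bookkeeping for the right-derivative presentations: one must ensure that the closure construction does not cause a super-polynomial blow-up that would overwhelm the \TWOEXPSPACE equality cost. Because the reduction requires each derivative only once and not nested, the overhead stays polynomial, leaving the equality problem as the genuine bottleneck.
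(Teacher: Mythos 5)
Your proposal matches the paper's argument exactly: decidability follows from \cref{thm:shuffle finite prevariety} together with \cref{thm:commutativity for effective prevarieties}, and the \TWOEXPSPACE{} bound comes from reducing to the finitely many equality tests of \cref{lem:commutativity}, each decidable in \TWOEXPSPACE{} by \cite[Theorem 1]{Clemente:CONCUR:2024}. The paper states this as a one-line synthesis and leaves the size bookkeeping for the derivative presentations implicit; your explicit check that \cref{lem:shuffle finite right derivative} incurs only polynomial overhead is a welcome refinement of the same argument, not a different route.
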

\noindent
As directly inherited from the equality problem,
the complexity is doubly exponential in the number of generators $k$.

\subsection{Application: Multivariate constructible differentially algebraic power series}
\label{sec:CDA algebra}

In this section, we recall the definition of a rich class of multivariate series in commuting indeterminates
called \emph{constructible differentially algebraic} (\CDA).
Based on~\cref{thm:shuffle finite commutativity problem},
we show that they have an effective syntax,
addressing a problem left open in~\cite[Remark 27]{Clemente:CONCUR:2024}.
We begin in~\cref{sec:differential power series algebra} with some preliminaries,
then in~\cref{sec:CDA} we recall the definition of~\CDA,
and finally we show effectiveness in~\cref{sec:solvability of CDA equations}.

\subsubsection{Differential power series algebra}
\label{sec:differential power series algebra}

In the rest of the section, fix a dimension $d \in \N_{\geq 1}$
and consider $d$ pairwise commuting \emph{independent variables} $x = \tuple{x_1, \dots, x_d}$.
For a multi-index $n = \tuple{n_1, \dots, n_d} \in \N^d$,
we write $x^n$ for $x_1^{n_1} \cdots x_d^{n_d}$,
and $n!$ for $n_1! \cdots n_d!$.
An \emph{exponential multivariate power series} (in commuting variables) is a function $f : \N^d \to \Q$,
customarily written as $f = \sum_{n \in \N^d} f_n \cdot \frac {x^n} {n!}$
where $f_n \in \Q$ is the coefficient of term $x^n$.
We write the set of power series as $\powerseries \Q x$.
For example $1 + x_1 + x_1^2 + \cdots = \frac 1 {1 - x_1}$ is a power series.
Polynomials $\poly \Q x$ consist precisely of those power series
with \emph{finite support}, \ie, with $f_n = 0$ almost everywhere.
We endow the set of power series with the structure of a vector space
with zero $0$, scalar product $c \cdot f$, and addition $f + g$
all defined element-wise.
Therefore, the vector spaces of power series and number sequences are isomorphic.
However, multiplication is different.
While multiplication of number sequences is element-wise,
power series multiplication $f \cdot g$
extends to infinite supports
the familiar multiplication of polynomials
(sometimes called \emph{Cauchy product}).
Finally, the \emph{$j$-th partial derivative} $\partial {x_j} {}$ (for $1 \leq j \leq d$)
maps the power series $f$ to the power series
$\partial {x_j} f = \sum_{n \in \N^d} f_{n+e_j} \frac {x^n} {n!}$.
Notice that $\partial {x_j} {}$ is a derivation with respect to the product of power series,
since it satisfies the familiar Leibniz rule from calculus~(\cf~\cref{eq:Leibniz rule}):
\begin{align*}
    \partial {x_j} (f \cdot g) = (\partial {x_j} f) \cdot g + f \cdot (\partial {x_j} g).
\end{align*}
For example, $\partial {x_1} {\frac 1 {1 - x_1}} = \frac 1 {(1 - x_1)^2}$.
We have thus defined the \emph{differential power series algebra}
$\tuple{\powerseries \Q x; +, \cdot, (\partial {x_j})_{1 \leq j \leq d}}$,
where partial derivatives are pairwise commuting
$\partial {x_j} {} \circ \partial {x_h} {} = \partial {x_h} {} \circ \partial {x_j} {}$.
This will constitute the semantic background
for the developments in the rest of the section.

\subsubsection{\CDA~power series}
\label{sec:CDA}

A power series $f \in \powerseries \Q x$ is~\CDA\-
\cite{BergeronReutenauer:EJC:1990,BergeronSattler:TCS:1995}
if there exist $k \in \N_{\geq 1}$
and auxiliary power series $f_1, \dots, f_k \in \powerseries \Q x$ with $f = f_1$
satisfying a system of polynomial partial differential equations
\begin{align}
    \label{eq:CDA}
    \partial {x_j} f_i
        = p^{(j)}_i(f_1, \dots, f_k),
        \quad \forall 1 \leq i \leq k, 1 \leq j \leq d,
\end{align}
where $p^{(j)}_i \in \poly \Q y$ (with $y = \tuple{y_1, \dots, y_k}$)
for all $1 \leq i \leq k$ and $1 \leq j \leq d$.
We call systems of the form above~\emph{systems of autonomous \CDA~equations}.
\begin{remark}[Autonomous \vs~non-autonomous]
    \label{rem:nonautonomous CDA}
    We could allow the \rhs~of equations to contain the independent variables $x_j$'s,
    \ie, $p^{(j)}_i \in \poly \Q {x, y}$,
    obtaining \emph{non-autonomous} equations.
    This does not increase expressiveness
    since we can introduce fresh power series $g_j := x_j$
    together with equations $\partial {x_j} g_j = 1$,
    $\partial {x_h} g_j = 0$ for $h \neq j$,
    and initial conditions $g_j(0) = 0$.
    For this reason, we focus on autonomous equations.
\end{remark}
\noindent
\CDA~power series include all algebraic power series
(\ie, solutions of polynomial equations~\cite[Chapter III.16]{KuichSalomaa:1986}),
the exponential power series $e^x$,
the trigonometric power series $\sin x, \cos x$, etc...,
and are closed under natural operations,
such as scalar product, addition, multiplication, differentiation, and composition
(\cf~\cite[Lemma 21]{Clemente:CONCUR:2024}).
They find applications in combinatorics,
since the generating functions of a large class of finite structures are \CDA~\cite[Theorem 31]{Clemente:CONCUR:2024}.
Finally, the \CDA~equivalence problem is decidable in~\TWOEXPTIME,
thus of elementary complexity~\cite[Theorem 3]{Clemente:CONCUR:2024};
this should be contrasted with polyrec sequences,
for which the best upper bound is Ackermannian~\cref{sec:polyrec}.

\begin{example}[Binomial coefficients]
    \label{ex:binomial CDA}
    Consider the mixed generating power series of the binomial coefficients
    $f_1 := \sum_{n, k \in \N} {n \choose k} \cdot \frac {x_1^n} {n!} \cdot x_2^k = e^{x_1 \cdot (1 + x_2)}$.
    By introducing auxiliary power series $f_2 := x_1$ and $f_3 := x_2$,
    we obtain the \CDA~equations:
    \begin{align*}
        % \label{eq:binomial CDA}
        \arraycolsep=1pt
        \begin{array}{rl}
            \partial {x_1} f_1 &= f_1 \cdot (1 + f_3), \\
            \partial {x_2} f_1 &= f_1 \cdot f_2, \\
        \end{array}
        \qquad
        \begin{array}{rl}
            \partial {x_1} f_2 &= 1, \\
            \partial {x_2} f_2 &= 0, \\
        \end{array}
        \qquad
        \begin{array}{rl}
            \partial {x_1} f_3 &= 0, \\
            \partial {x_2} f_3 &= 1.
        \end{array}
    \end{align*}
\end{example}

% L: no space
% \begin{example}[Stirling numbers of the second kind~\protect{\cite[Example 8]{BergeronSattler:TCS:1995}}]
%     \label{ex:Stirling second kind CDA}
%     Consider the mixed generating power series of the Stirling numbers of the second kind
%     $f_1 = \sum_{n, k \in \N} S_{n, k} \cdot \frac {x_1^n} {n!} \cdot x_2^k = e^{x_1 \cdot (e^{x_2} - 1)}$.
%     By introducing auxiliary power series $f_2 = e^{x_2}$ and $f_3 = x_1$
%     we obtain the following \CDA~equations
%     %
%     \begin{align*}
%         % \label{eq:Stirling CDA}
%         \arraycolsep=1pt
%         \begin{array}{rl}
%             \partial {x_1} f_1 &= f_1 \cdot (f_2 - 1), \\
%             \partial {x_2} f_1 &= f_1 \cdot f_2 \cdot f_3, \\
%         \end{array}
%         \ 
%         \begin{array}{rl}
%             \partial {x_1} f_2 &= 0, \\
%             \partial {x_2} f_2 &= f_2, \\
%         \end{array}
%         \ 
%         \begin{array}{rl}
%             \partial {x_1} f_3 &= 1, \\
%             \partial {x_2} f_3 &= 0.
%         \end{array}
%     \end{align*}
% \end{example}

% \noindent
% We will come back to these examples in the following section.

\subsubsection{Solvability in power series of \CDA~equations}
\label{sec:solvability of CDA equations}

The definition of~\CDA~power series relies on the \emph{promise}
that~\cref{eq:CDA} does indeed have a power series solution.
This is a nontrivial problem, left open in~\cite[Remark 27]{Clemente:CONCUR:2024}.
\begin{problem}[\CDA~solvability problem]
    \label{problem:CDA solvability}
    The \emph{solvability problem} for \CDA~equations
    takes as input a set of \CDA~equations~\cref{eq:CDA}
    together with an initial condition $c \in \Q^k$,
    and it amounts to decide whether there is a power series solution $f \in \powerseries \Q x^k$ 
    extending the initial condition $f(0) = c$.
\end{problem}
\noindent
We remark that the initial condition is part of the input to the solvability problem.
In the \underline{univariate} case $d = 1$,
by the classic Picard-Lindelöf theorem~\cite[Theorem 3.1]{CoddingtonLevinson:1984},
\emph{every} initial condition extends to a solution,
and thus the solvability problem is trivial.
The multivariate case $d \geq 2$ is nontrivial,
as the following examples demonstrate.

\begin{example}[Unsolvable systems]
    \label{ex:CDA unsolvable system}
    A \CDA~system may not have any power series solution when $d \geq 2$,
    regardless of the initial condition.
    For instance, the following equations are unsatisfiable in power series:
    \begin{align*}
        \arraycolsep=1pt
        \begin{array}{rl}
            \partial {x_1} f &= 0, \\
            \partial {x_2} f &= g,
        \end{array}
        \quad \text{and} \quad
        \begin{array}{rl}
            \partial {x_1} g &= 1, \\
            \partial {x_2} g &= 1.
        \end{array}
    \end{align*}
    The equation $\partial {x_1} g = 1$ implies $g = x_1 + c(x_2)$ for some $c \in \powerseries \Q {x_2}$,
    and $\partial {x_2} g = 1$ implies $g = x_2 + d(x_1)$ for some $d \in \powerseries \Q {x_1}$.
    Taken together, we have $g = a + x_1 + x_2$ for some $a \in \Q$.
    The equation $\partial {x_1} f = 0$ implies $f \in \powerseries \Q {x_2}$,
    and thus $\partial {x_2} f = g \in \powerseries \Q {x_2}$, which is a contradiction.
\end{example}

\begin{example}[Sensitivity to the inital condition]
    \label{rem:not uniformably solvable}
    Existence of power series solutions
    may depend on the initial value when $d \geq 2$.
    %
    % By reinterpreting~\cref{ex:recognisable series}, % as~\CDA~equations we have:
    Consider equations
    \begin{align*}
        \arraycolsep=1pt
        \begin{array}{rl}
            \partial {x_1} f &= f+g, \\
            \partial {x_2} f &= 0,
        \end{array}
        \quad \text{and} \quad
        \begin{array}{rl}
            \partial {x_1} g &= 0, \\
            \partial {x_2} g &= g.
        \end{array}
    \end{align*}
    Thus $f \in \powerseries \Q {x_1}$,
    and $g = a \cdot e^{x_2 + b} \in \powerseries \Q {x_2}$, for some $a, b \in \Q$.
    Since $g = \partial {x_1} f - f \in \powerseries \Q {x_2}$,
    there are no solutions with $g(0) = a \cdot e^b \neq 0$.
    When $g(0) = 0$ (iff $a = 0$), we have the family of solutions
    $g = 0$ and $f = c \cdot e^{{x_1}+d}$ for every $c, d \in \Q$.
\end{example}

Solvability in power series of polynomial partial differential equations is undecidable, as we now recall.
%
%While undecidability for sequences (\cref{rem:undecidability of polynomial difference equations})
%was based on tiling problems,
%for differential equations it is based on diophantine equations.

\begin{remark}
    \label{rem:undecidability of solvability}
    % So for instance $\partial {x_2} y = \tuple{\partial {x_2} {y_1}, \dots, \partial {x_2} {y_k}}$.
    % %
    % A system of \emph{algebraic partial differential equations}
    % is a system of polynomial partial differential equations of the form
    % %
    % \begin{align}
    %     \label{eq:APDE}
    %     p_i(x, y, \dots, \frac {\oldpartial^{i_1 + \cdots + i_d} y} {\oldpartial x_1^{i_1} \cdots \oldpartial x_d^{i_d}}, \dots) = 0,
    %     \quad 1 \leq i \leq k,
    % \end{align}
    % %
    % where $p_i$ is a polynomial over $\Q$ with over a suitable number of indeterminates.
    % %
    % A system of algebraic partial differential equations as in~\cref{eq:APDE}
    % is \emph{solvable in power series}
    % if there are power series $f_1, \dots, f_k \in \powerseries Q x$
    % \st~replacing every $y_i$ with $f_i$ in~\cref{eq:APDE} satisfies every equation.
    % %
    % A seminal result of Denef and Lipshitz~\cite[Theorem 3.1]{DenefLipshitz:1984}
    % shows that solvability in power series for algebraic differential equations is decidable in the univariate case $d = 1$,
    % that is for polynomial \emph{ordinary} differential equations.
    %
    % In the same paper, it is shown in the general $d$-variate case
    A seminal result of Denef and Lipshitz shows that
    solvability in power series for polynomial partial differential equations is \emph{undecidable}~\cite[Theorem 4.11]{DenefLipshitz:1984}.
    In fact, this holds already for a \emph{single}, \emph{linear} partial differential equation in \emph{one unknown},
    with polynomial coefficients in $\poly \Q x$.
    In order to gain insight into the shape of the undecidable equations,
    we present their reduction.
    Let $P(y_1, \dots y_d) \in \poly \Z {y_1, \dots, y_d}$
    be a multivariate polynomial with integer coefficients.
    Consider the non-autonomous differential equation
    \begin{align}
        \label{eq:undecidable linear PDE}
        P(x_1 \cdot \partial {x_1} {}, \dots, x_d \cdot \partial {x_d} {}) f 
        = \frac 1 {1 - x_1} \cdots \frac 1 {1 - x_d},
    \end{align}
    with initial condition $f(0) = P(0)^{-1}$ (assuming $P(0) \neq 0$).
    For example, if $P(y_1, y_2) = 1 - 2 \cdot y_1 + y_2^2$
    then we would get the equation
    $(1 - 2 \cdot x_1 \cdot \partial {x_1} {} + (x_2 \cdot \partial {x_2} {})^2) f = \frac 1 {1 - x_1} \frac 1 {1 - x_2}$
    where $(x_2 \cdot \partial {x_2} {})^2$ is the derivation $x_2 \cdot \partial {x_2} {}$ applied twice.
    Equation~\cref{eq:undecidable linear PDE} has a solution in power series
    iff $P = 0$ has no nonnegative integer solutions:
    Indeed, the \rhs~is just $\sum_{n \in \N^d} x^n$
    and if $f \in \powerseries \C x$ is the power series $f = \sum_{n \in \N^d} f_n \cdot x^n$,
    then the \lhs~equals $\sum_{n \in \N^d} f_n \cdot P(n) \cdot x^n$.
    As a consequence of Hilbert 10th problem (\cf~\cite{Matijasevic:2003}),
    whether $P = 0$ has a nonnegative integer solution is undecidable 
    (already for $d = 9$),
    and thus solvability in power series for \cref{eq:undecidable linear PDE} is undecidable as well.

    Note that~\cref{eq:undecidable linear PDE} is not in the \CDA~format.
    The first reason is that it is non-autonomous since it contains occurrences of the independent variables $x_j$'s.
    This is inessential and it can be addressed by~\cref{rem:nonautonomous CDA}.
    %
    % In fact, we can replace $x_j$ with a fresh dependent variable $g_j$
    % together with differential equations $\partial {x_j} g_j = 1$, $\partial {x_h} g_j = 0$ for $h \neq j$,
    % and impose the initial condition $g_j(0) = 0$.
    %
    The rational power series $\frac 1 {1 - x_j}$ on the \rhs~of~\cref{eq:undecidable linear PDE}
    can be eliminated by multiplying both sides of the equation by $(1 - x_1) \cdots (1 - x_d)$.
%    can be replaced by a fresh dependent variable $h_j$
%    with differential equations $\partial {x_j} h_j = h_j^2$, $\partial {x_h} h_j = 0$ for $h \neq j$,
%    and initial condition $h_j(0) = 1$.
    %
    The second reason is more fundamental:
    Both $\partial {x_j} f$ and $\partial {x_h} f$ (with $j \neq h$) appear in the same equation,
    which in \CDA~is not allowed.
\end{remark}

As a consequence of~\cref{rem:undecidability of solvability},
in order to decide~\cref{problem:CDA solvability} we need to exploit the~\CDA~format.
To this end we develop a connection between
power series solutions of \CDA~equations~\cref{eq:CDA} and shuffle automata,
in a way analogous to~\cref{sec:consistency of polyrec equations}.

% \subsubsection{\CDA~solvability and commutative shuffle-finite series}
% \label{sec:CDA solvability and commutative shuffle-finite series}

Consider an alphabet of noncommuting indeterminates $\Sigma = \set {a_1, \dots, a_d}$.
The differential algebra of power series
$\tuple{\powerseries \Q x; {+}, {\cdot}, (\partial {x_j})_{1 \leq j \leq d}}$
and the differential shuffle algebra of \emph{commutative} series
$\tuple{\commseries \Q \Sigma; {+}, {\shuffle}, (\deriveleft {a_j})_{1 \leq j \leq d}}$.
The isomorphism maps a commutative series $f \in \commseries \Q \Sigma$
to the power series $\widetilde f = \sum_{n \in \N^d} \widetilde f_n \cdot \frac {x^n} {n!} \in \powerseries \Q x$
with coefficients $\widetilde f_{n_1, \dots, n_d} := f(a_1^{n_1} \cdots a_d^{n_d})$ for every $n_1, \dots, n_d \in \N$.
In other words, this is a bijective mapping preserving the algebraic and differential structure, \ie,
$\widetilde \zero = 0$, $\widetilde {1 \cdot \e} = 1$, and
%the function mapping a commutative series $f$ to a power series $\widetilde f$ is bijective,
% and it satisfies the following equations:
%
\begin{align}
    \label{eq:comm iso}
    \begin{array}{rl}
    \widetilde {c \cdot f}
        &= c \cdot \widetilde f, \\
%            &&\text{for all } c \in Q, \\
    \widetilde {f + g}
        &= \widetilde f + \widetilde g,
    \end{array}
    \qquad
    \begin{array}{rl}
    \widetilde {f \shuffle g}
        &= \widetilde f \cdot \widetilde g, \\
    \widetilde {\deriveleft {a_j} f}
        &= \partial {x_j} {\widetilde f},
            % &&\text{for all } 1 \leq j \leq d,
    \end{array}
\end{align}
for all $f, g \in \commseries \Q \Sigma$, $c \in \Q$, and $1 \leq j \leq d$.
For a system of \CDA~equations~\cref{eq:CDA},
together with an initial condition $c = (c_1, \dots, c_k) \in \Q^k$,
we construct its \emph{companion shuffle automaton} $\tuple{\Sigma, X, F, \Delta}$,
which has nonterminals $X = \tuple{X_1, \dots, X_k}$,
output function $F X_i := c_i$ for all $1 \leq i \leq k$,
and transitions
\begin{align}
    \label{eq:companion shuffle automaton}
    \Delta_{a_j} X_i := p^{(j)}_i(X_1, \dots, X_k),
    \quad \forall 1 \leq i \leq k, 1 \leq j \leq d.
\end{align}
Let $f_i := \sem {X_i}$ be the series recognised by nonterminal $X_i$ of the automaton, for all $1 \leq i \leq k$,
and write $f = \tuple{f_1, \dots, f_k}$.
\begin{lemma}
    \label{lem:CDA solvability}
    The initial condition $c$ extends to a power series solution of the \CDA~equations if, and only if,
    $f$ is a tuple of commutative series.    
\end{lemma}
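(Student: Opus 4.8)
The plan is to leverage the algebra isomorphism between $\powerseries \Q x$ and $\commseries \Q \Sigma$ recorded in~\cref{eq:comm iso}, together with the fact that the shuffle-finite system defining the companion automaton admits a \emph{unique} series solution once an initial condition is fixed. First I would observe that the series $f_i = \sem{X_i}$ recognised by the companion automaton~\cref{eq:companion shuffle automaton} satisfy
\[
    \deriveleft{a_j} f_i = p^{(j)}_i(f_1, \dots, f_k), \qquad (f_i)_\e = c_i,
\]
for all $1 \leq i \leq k$ and $1 \leq j \leq d$. This is immediate from~\cref{lem:properties of semantics - shuffle automata}, since the semantics is a differential-algebra homomorphism, so $\deriveleft{a_j} \sem{X_i} = \sem{\Delta_{a_j} X_i} = p^{(j)}_i(\sem{X_1}, \dots, \sem{X_k})$ and $(f_i)_\e = F X_i = c_i$. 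I would further note that such a system has at most one series solution with prescribed initial value: the coefficient of $f_i$ at a word of length $n+1$ is read off from $\deriveleft{a_j} f_i$, whose right-hand side is a shuffle-polynomial whose coefficients at words of length $n$ depend only on coefficients of the $f_i$ at words of length $\leq n$, so induction on word length pins everything down.

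For the direction ($\Leftarrow$), assume $f = (f_1, \dots, f_k)$ is a tuple of commutative series. I would transport it through the isomorphism, setting $g_i := \widetilde{f_i} \in \powerseries \Q x$. By~\cref{eq:comm iso} the product $\shuffle$ becomes the power-series product and $\deriveleft{a_j}$ becomes $\partial{x_j}{}$, whence
\[
    \partial{x_j}{g_i} = \widetilde{\deriveleft{a_j} f_i} = \widetilde{p^{(j)}_i(f_1, \dots, f_k)} = p^{(j)}_i(g_1, \dots, g_k),
\]
while $g_i(0) = (f_i)_\e = c_i$. Hence $(g_1, \dots, g_k)$ is a power series solution extending $c$.

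For the direction ($\Rightarrow$), assume $c$ extends to a power series solution $(g_1, \dots, g_k)$ of~\cref{eq:CDA}. Since $\widetilde{\cdot}$ is a bijection onto $\powerseries \Q x$, each $g_i$ equals $\widetilde{h_i}$ for a unique commutative series $h_i \in \commseries \Q \Sigma$. Pulling~\cref{eq:CDA} back through~\cref{eq:comm iso} and using injectivity of $\widetilde{\cdot}$ gives $\deriveleft{a_j} h_i = p^{(j)}_i(h_1, \dots, h_k)$ with $(h_i)_\e = g_i(0) = c_i$; that is, $(h_1, \dots, h_k)$ solves exactly the same shuffle-finite system, with the same initial condition, as $(f_1, \dots, f_k)$. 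By the uniqueness noted above, $h_i = f_i$ for every $i$, so each $f_i$ is commutative, as required.

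The subtlety making the two directions genuinely asymmetric is that the isomorphism of~\cref{eq:comm iso} is defined only on the commutative fragment, so it cannot be applied to the automaton's series directly. In ($\Leftarrow$) commutativity is handed to us and the push-forward is routine, whereas in ($\Rightarrow$) we must manufacture a commutative solution out of the given power series and only then identify it with $f$ via uniqueness. The one step I expect to require the most care is precisely this uniqueness claim, namely confirming that the shuffle products on the right-hand side of the companion system reference strictly shorter words, which is what allows the coefficient-by-coefficient induction to go through.
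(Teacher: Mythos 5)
Your proposal is correct and follows essentially the same route as the paper: push the commutative solution forward through the isomorphism of~\cref{eq:comm iso} for one direction, and for the converse pull the power series solution back to a commutative series solution of the companion system and invoke uniqueness of the series solution (by induction on word length) to identify it with $\sem{X_1}, \dots, \sem{X_k}$. The extra care you devote to justifying the uniqueness claim is sound and matches what the paper asserts more briefly.
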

\begin{proof}
    On the one hand,
    if $f$ is a tuple of commutative series with initial condition $\coefficient \e f = c$ solving~\cref{eq:companion shuffle automaton},
    then, by following the isomorphism,
    $\widetilde f \in \powerseries \Q x^k$
    is a tuple of power series solutions of~\cref{eq:CDA} with initial condition $f(0) = c$.
    Indeed, by the properties of the semantics of shuffle automata~(\cref{lem:properties of semantics - shuffle automata}),
    component $f_i$ satisfies differential equations
    \begin{align*}
        % \label{eq:companion shuffle automaton}
        \deriveleft {a_j} f_i = p^{(j)}_i(f_1, \dots, f_k),
        \quad \forall 1 \leq i \leq k, 1 \leq j \leq d.
    \end{align*}
    By applying the isomorphism to both sides of the equation above,
    and by~\cref{eq:comm iso}, we get
    \begin{align*}
        % \label{eq:companion shuffle automaton}
        \partial {x_j} {\widetilde {f_i}}
            = \widetilde {\deriveleft {a_j} {f_i}}
            = \widetilde {\left(p^{(j)}_i(f_1, \dots, f_k)\right)}
            = p^{(j)}_i(\widetilde {f_1}, \dots, \widetilde {f_k}),
        % \quad \forall 1 \leq i \leq k, 1 \leq j \leq d.
    \end{align*}
    which is just~\cref{eq:CDA}.
    On the other hand,
    if $g \in \powerseries \Q x^k$ is a tuple of power series solutions
    of~\cref{eq:CDA} with initial condition $g(0) = c$,
    then by following the inverse of the isomorphism %defined above
    we obtain a tuple of commutative series $h \in (\commseries \Q \Sigma)^k$
    \st~$\widetilde h = g$ and $\coefficient \e h = c$,
    solving equations~\cref{eq:companion shuffle automaton}.
    Since equations~\cref{eq:companion shuffle automaton}
    admit exactly one series solution with $c$ as initial condition,
    $h = f$ is the semantics of the shuffle automaton.
    Consequently, $f$ is a tuple of commutative series.
\end{proof}

\begin{example}[Uniformably solvable systems]
    Consider the \CDA~equations from~\cref{ex:binomial CDA}, and fix the initial condition $c := \tuple{1, 0, 0}$.
    The companion shuffle automaton is the one from~\cref{ex:binomial shuffle automaton}.
    It can be verified that $\Delta_{a_1} \circ \Delta_{a_2} = \Delta_{a_2} \circ \Delta_{a_1}$.
    It suffices to consider the generators $X_1, X_2, X_3$:
    \begin{align*}
        \Delta_{a_1} \Delta_{a_2} X_1
            &= \Delta_{a_1} (X_1 \cdot X_2)
            = X_1 \cdot (1 + X_3) \cdot X_2 + X_1, \\
        \Delta_{a_2} \Delta_{a_1} X_1
            &= \Delta_{a_2} (X_1 \cdot (1 + X_3))
            = X_1 \cdot X_2 \cdot (1 + X_3) + X_1, \\
        \Delta_{a_1} \Delta_{a_2} X_2
            &= \Delta_{a_2} \Delta_{a_1} X_2 = 0, \\
        \Delta_{a_1} \Delta_{a_2} X_3
            &= \Delta_{a_2} \Delta_{a_1} X_3 = 0.
    \end{align*}
    Thus, \emph{every} initial condition extends to a solution,
    \ie, \cref{ex:binomial CDA} is \emph{uniformably solvable}.
    %
    % The solution $\tuple{e^{x_1 \cdot (1 + x_2)}, x_1, x_2}$ given there
    % corresponds to the initial condition above.
    %
    % The \CDA~equations from~\cref{ex:Stirling second kind CDA} are also uniformly solvable.
    % We have
    % \begin{align*}
    %     \partial {x_1} {\partial {x_2} f}
    %     &= \partial {x_1} {(fgh)}
    %     = (fg - f) gh + fg = fg^2h - fgh + fg \\
    %     %
    %     \partial {x_2} {\partial {x_1} f}
    %     &= \partial {x_2} (fg - f)
    %     = fg^2h + fg - fgh \\
    %     %
    %     \partial {x_1} {\partial {x_2} g}
    %     &= \partial {x_1} g = 0 \\
    %     %
    %     \partial {x_2} {\partial {x_1} g}
    %     &= \partial {x_2} 0 = 0 \\
    %     %
    %     \partial {x_1} {\partial {x_2} h}
    %     &= \partial {x_1} 0 = 0 \\
    %     %
    %     \partial {x_2} {\partial {x_1} h}
    %     &= \partial {x_2} 1 = 0.
    % \end{align*}
\end{example}

We now have all ingredients to solve~\cref{problem:CDA solvability}.
For \CDA~equations~\cref{eq:CDA} and initial condition $c \in \Q^k$
we build their companion shuffle automaton~(\cf~\cref{eq:companion shuffle automaton}),
and check that all series $\sem {X_1}, \dots, \sem {X_k}$ recognised by its nonterminals are commutative.
The latter is decidable in~\TWOEXPSPACE~by~\cref{thm:shuffle finite commutativity problem}.
Correctness follows from~\cref{lem:CDA solvability}.
\begin{theorem}
    \label{thm:decidability of CDA solvability}
    The \CDA~solvability problem~(\cref{problem:CDA solvability}) is decidable in \TWOEXPSPACE.
\end{theorem}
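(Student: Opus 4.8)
The plan is to combine the structural reduction of~\cref{lem:CDA solvability} with the decidability of commutativity for shuffle-finite series~(\cref{thm:shuffle finite commutativity problem}). The conceptual work has essentially been carried out by~\cref{lem:CDA solvability}, which transfers the existence of a power series solution of a \CDA~system into a commutativity question about the series recognised by the companion shuffle automaton. What remains is to assemble these pieces into an algorithm and to bound its complexity.

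First, from the input---a system of \CDA~equations~\cref{eq:CDA} together with an initial condition $c \in \Q^k$---I would construct the companion shuffle automaton with transitions~\cref{eq:companion shuffle automaton} and output function $F X_i := c_i$. This is a purely syntactic, polynomial-time step. Each of the $k$ series $f_i := \sem {X_i}$ recognised by its nonterminals is then shuffle finite: by the homomorphism property of the semantics~(\cref{lem:properties of semantics - shuffle automata}) the generators $f_1, \dots, f_k$ satisfy $\deriveleft {a_j} f_i = p^{(j)}_i(f_1, \dots, f_k)$, so they already close a finitely generated differential shuffle algebra.

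Second, I would invoke~\cref{lem:CDA solvability}, which asserts that $c$ extends to a power series solution of~\cref{eq:CDA} if, and only if, the whole tuple $f = \tuple{f_1, \dots, f_k}$ consists of commutative series. Accordingly, the algorithm tests each of the finitely many components $f_1, \dots, f_k$ for commutativity using~\cref{thm:shuffle finite commutativity problem}, and accepts precisely when all $k$ tests succeed; correctness is then immediate from~\cref{lem:CDA solvability}.

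Finally, for the complexity bound I would note that each commutativity test runs in~\TWOEXPSPACE~and that the algorithm performs only $k$ of them in sequence, so that running a \TWOEXPSPACE~procedure a linear number of times keeps the overall solvability algorithm within~\TWOEXPSPACE. The one point deserving care---arguably the only nonroutine aspect---is that the reduction requires commutativity of \emph{every} auxiliary component $f_i$, not merely the designated output $f_1 = \sem {X_1}$, since the noncommutativity of a single auxiliary series already witnesses the absence of a power series solution. The genuinely hard part of the whole argument, namely deciding commutativity of a shuffle-finite series, has already been discharged by~\cref{thm:shuffle finite commutativity problem} via the effective-prevariety machinery, so no real obstacle remains at this stage.
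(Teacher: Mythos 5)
Your proposal is correct and follows exactly the paper's argument: build the companion shuffle automaton, check commutativity of all $k$ recognised series via~\cref{thm:shuffle finite commutativity problem}, and appeal to~\cref{lem:CDA solvability} for correctness, with the same \TWOEXPSPACE{} accounting. Your remark that all auxiliary components (not just $\sem{X_1}$) must be tested is a point the paper's proof also makes, so there is nothing to add.
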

% \input{infiltration}
%!TEX root main.tex
\section{Extensions}
\label{sec:conclusions}

\subsection{Infiltration product, automata, and series}
\label{sec:infiltration}

% We have considered the Hadamard and the shuffle products,
% motivated by their applications to polyrec sequences~(\cref{sec:polyrec}),
% and \CDA~power series~(\cref{sec:CDA}).
%
Besides Hadamard and shuffle products,
in the literature one finds another product,
called \emph{infiltration}.
It is denoted by $f \infiltration g$ and is defined coinductively as follows\-
\cite[Def.~8.1]{BasoldHansenPinRutten:MSCS:2017}:
\begin{align}
    \tag{${\infiltration}$-$\e$}
    \label{eq:infiltration:base}
    \coefficient \e {(f \infiltration g)}
        &:= f_\e \cdot g_\e, \\
    \tag{${\infiltration}$-$\deriveleft a$}
    \label{eq:infiltration:step}
        \deriveleft a (f \infiltration g)
        &:= \deriveleft a f \infiltration g
            + f \infiltration \deriveleft a g
            + \deriveleft a f \infiltration \deriveleft a g,
        \ \forall a \in \Sigma.
\end{align}
This is an associative and commutative operation,
with the same identity $1 \cdot \e$ as the shuffle product.
It has been introduced by Chen, Fox, and Lyndon~\cite{ChenFoxLyndon:AM:1958}
in the context of the \emph{free differential calculus} of Fox~\cite{Fox:AM:1953}.
More recently, it has been considered in a coalgebraic setting~\cite[Sec.~8]{BasoldHansenPinRutten:MSCS:2017}.
From a concurrency perspective,
it models the \emph{synchronising interleaving}
where processes are allowed (but not forced) to jointly perform the input actions
(thanks to the term $\deriveleft a f \infiltration \deriveleft a g$ in~\cref{eq:infiltration:step}).
\Eg, in the interleaving semantics we have $ab \shuffle a = 2aab + aba$,
but in the synchronising interleaving $ab \infiltration a = 2aab + aba + ab$.

We observe that the infiltration product falls under the scope of our techniques.
In a manner analogous to the Hadamard and shuffle products,
one can define \emph{infiltration automata},
and corresponding \emph{infiltration-finite series},
which are novel models not previously investigated.
One can then prove that they constitute an effective prevariety,
and thus the commutativity problem is decidable for this class.
More details in~\cref{app:infiltration}.

\subsection{Mixed product, automata, and series}

One can go one step further, and consider a more general model
whether the Hadamard, shuffle, and infiltration semantics are combined.
Consider a finite input alphabet $\Sigma$ partitioned into three subalphabets
$\Sigma = \Sigma_{\hadamard} \cupdot \Sigma_{\shuffle} \cupdot \Sigma_{\infiltration}$.
Based on this partitioning, consider the binary operation on series ``$\parallel$''
(whose dependency on the partitioning is hidden for readability)
\st~$\coefficient \e (f \parallel g) = \coefficient f \e \cdot \coefficient g \e$,
and for every $a \in \Sigma$,
\begin{align*}
    \deriveleft a (f \parallel g) =
    \left\{\begin{array}{ll}
        \deriveleft a f \parallel \deriveleft a g
            &\text{if } a \in \Sigma_{\hadamard}, \\
        \deriveleft a f \parallel g + f \parallel \deriveleft a g
            &\text{if } a \in \Sigma_{\shuffle}, \\
        \deriveleft a f \parallel g + f \parallel \deriveleft a g + \deriveleft a f \parallel \deriveleft a g
            &\text{if } a \in \Sigma_{\infiltration}.
    \end{array}\right.
\end{align*}
This models actions in $\Sigma_{\hadamard}$ with a synchronising semantics,
actions in $\Sigma_{\shuffle}$ with an interleaving semantics,
and in $\Sigma_{\infiltration}$ with a synchronising interleaving semantics.
This yields an associative and commutative operation ``$\parallel$'',
specialising to the Hadamard $\Sigma = \Sigma_{\hadamard}$,
shuffle $\Sigma = \Sigma_{\shuffle}$,
or infiltration product $\Sigma = \Sigma_{\infiltration}$.
Techniques developed in this paper can be applied to this more general product,
obtaining a notion of \emph{$\parallel$-finite series}
(which therefore generalise the shuffle-finite, Hadamard-finite, and infiltration-finite series),
and corresponding \emph{$\parallel$-automata},
recognising the same class.
Finally, $\parallel$-finite series are an effective prevariety,
and thus equality and commutativity are decidable for them.

% \subsection{Commutativity vs.~uniform commutativiy}

% On the other hand, in dimension $d \geq 2$
% we do not know whether all commutative Hadamard-finite series are uniformly commutative.
% %
% A positive resolution of this problem
% amounts to finding commutative generators for any commutative Hadamard-finite series
% (whose presentation may use noncommutative generators),
% which appears to be a nontrivial task.

\subsection{Effective algebraic varieties of commutative series}
\label{sec:effective algebraic varieties}

We can generalise the commutativity problem
and show that the output functions leading to a commutative semantics
form an \emph{effective algebraic variety} (in the sense of algebraic geometry), as we now explain.
Fix a Hadamard, shuffle, or infiltration automaton $\tuple{\Sigma, X, \Delta}$,
where we have omitted the output function $F$.
We consider a slightly more general class of outputs functions $F \in \C^X$,
mapping nonterminals to complex numbers. % (instead of rational numbers).
%
%If there are $k$ nonterminals in $X$
%then we can identify such a function with a tuple from $\C^k$.
%
Fix an initial configuration $\alpha \in \poly \Q X$
and let $\FF \subseteq \C^X$ be the set of output functions giving rise to a commutative semantics, i.e.,
\begin{align*}
    \FF := \setof{F \in \C^X}{\sem {\alpha} \text{with output function $F$ is commutative}}.
\end{align*}
$\FF$ is the set of common zeroes of the polynomials
\begin{align}
    \label{eq:commutativity polynomials}
    P := \setof {\Delta_u \alpha - \Delta_v \alpha \in \poly \Q X}{u, v \in \Sigma^*, u \sim v},
\end{align}
where recall that $u \sim v$ means that $u, v$ are permutation equivalent.
In other words, $\FF$ is an \emph{algebraic variety}~\cite[§2, Ch.~1]{CoxLittleOShea:Ideals:2015}
and commutativity amounts to decide whether a given output function $F$ is in $\FF$
(membership problem).
%
%We can go one step further.
Thanks to decidability of commutativity,
there is a computable bound $N \in \N$ \st\-
$\FF$ is the set of common zeroes of the \emph{finite} and \emph{computable} set of polynomials $P_N \subseteq P$
obtained by considering words of length $\leq N$.
%
% \begin{align}
%     \label{eq:finitely many commutativity polynomials}
%     P_N = \setof {\Delta_u \alpha - \Delta_v \alpha \in \poly \Q X}{u, v \in \Sigma^{\leq N}, u \sim v}.
% \end{align}
% %
% Notice that $P_N$ is finite and computable.
For Hadamard automata, $N$ is Ackermannian,
and for shuffle automata it is doubly exponential.
As a consequence, $\FF$ is an \emph{effective} algebraic variety,
since we can decide $F \in \FF$ by checking whether all polynomials in $P_N$ vanish on $F$.
We can use $P_N$ to solve the following two variants of the commutativity problem:
% Moreover, thanks to Hilbert finite basis theorem~\cite[Theorem 4, §5, Ch. 2]{CoxLittleOShea:Ideals:2015},
% finitely many of the polynomials~\cref{eq:commutativity polynomials} suffices to characterise $C$.
%
\begin{enumerate}
    \item %{$\exists$-commutativity}: 
    Decide whether \emph{there exists} an output function $F\in\C^X$
    giving rise to a commutative semantics.

    \item %{$\forall$-commutativity}:
    Decide whether \emph{every} output function $F\in\C^X$
    gives rise to a commutative semantics.
\end{enumerate}
The second problem is equivalent to $P_N = \set 0$.
Thanks to \emph{Hilbert's Nullstellensatz}~\cite[Theorem 1, §1, Ch.~4]{CoxLittleOShea:Ideals:2015},
the first problem is equivalent to whether the \emph{ideal} generated by $P_N$ does not contain $1$.
The latter can be established with standard techniques
from computational algebraic geometry,
such as \emph{Gröbner bases}~\cite[Ch.~2]{CoxLittleOShea:Ideals:2015}.
More details in~\cref{app:varieties}.

\section*{Acknowledgements}
We warmly thank Arka Ghosh, Aliaume Lopez, and Filip Mazowiecki %, and Andrew Ryzhikov
for valuable discussions during the preparation of this work.

% \newpage
\bibliographystyle{plainurl}
\bibliography{literature}

\begin{thebibliography}{10}

\bibitem{AlurDAntoniDeshmukhRaghothamanYuan:LICS:2013}
Rajeev Alur, Loris D'Antoni, Jyotirmoy Deshmukh, Mukund Raghothaman, and Yifei Yuan.
\newblock Regular functions and cost register automata.
\newblock In {\em Proc.~of LICS'13}, pages 13--22. IEEE, jun 2013.
\newblock \href {https://doi.org/10.1109/lics.2013.65} {\path{doi:10.1109/lics.2013.65}}.

\bibitem{BalleMohri:AI:2015}
Borja Balle and Mehryar Mohri.
\newblock Learning weighted automata.
\newblock In {\em Algebraic Informatics}, pages 1--21. Springer International Publishing, 2015.
\newblock \href {https://doi.org/10.1007/978-3-319-23021-4_1} {\path{doi:10.1007/978-3-319-23021-4_1}}.

\bibitem{BasoldHansenPinRutten:MSCS:2017}
Henning Basold, Helle~Hvid Hansen, Jean-{\'{E}}ric Pin, and Jan Rutten.
\newblock Newton series, coinductively: a comparative study of composition.
\newblock {\em Mathematical Structures in Computer Science}, 29(1):38--66, jun 2017.
\newblock \href {https://doi.org/10.1017/s0960129517000159} {\path{doi:10.1017/s0960129517000159}}.

\bibitem{BellSmertnig:LICS:2023}
J.~P. Bell and D.~Smertnig.
\newblock Computing the linear hull: Deciding deterministic? and unambiguous? for weighted automata over fields.
\newblock In {\em Proc.~of LICS'23}, pages 1--13, Los Alamitos, CA, USA, jun 2023. IEEE Computer Society.
\newblock \href {https://doi.org/10.1109/LICS56636.2023.10175691} {\path{doi:10.1109/LICS56636.2023.10175691}}.

\bibitem{BellSmertnig:SM:2021}
Jason Bell and Daniel Smertnig.
\newblock {Noncommutative rational P{\'o}lya series}.
\newblock {\em Selecta Mathematica}, 27(3):34, 2021.
\newblock \href {https://doi.org/10.1007/s00029-021-00629-2} {\path{doi:10.1007/s00029-021-00629-2}}.

\bibitem{BenediktDuffSharadWorrell:PolyAut:2017}
Michael Benedikt, Timothy Duff, Aditya Sharad, and James Worrell.
\newblock Polynomial automata: Zeroness and applications.
\newblock In {\em Proc. of LICS'17}, pages 1--12, June 2017.
\newblock \href {https://doi.org/10.1109/LICS.2017.8005101} {\path{doi:10.1109/LICS.2017.8005101}}.

\bibitem{BergeronReutenauer:EJC:1990}
Fran{\c c}ois Bergeron and Christophe Reutenauer.
\newblock Combinatorial resolution of systems of differential equations iii: A special class of differentially algebraic series.
\newblock {\em European Journal of Combinatorics}, 11(6):501--512, 1990.

\bibitem{BergeronSattler:TCS:1995}
Fran{\c c}ois Bergeron and Ulrike Sattler.
\newblock Constructible differentially finite algebraic series in several variables.
\newblock {\em Theoretical Computer Science}, 144(1):59--65, 1995.

\bibitem{BerstelReutenauer:CUP:2010}
J.~Berstel and C.~Reutenauer.
\newblock {\em Noncommutative rational series with applications}.
\newblock CUP, 2010.

\bibitem{BlondelJeandelKoiranPortier:SIAM:JoC:2005}
Vincent~D. Blondel, Emmanuel Jeandel, Pascal Koiran, and Natacha Portier.
\newblock Decidable and undecidable problems about quantum automata.
\newblock {\em {SIAM} Journal on Computing}, 34(6):1464--1473, jan 2005.
\newblock \href {https://doi.org/10.1137/s0097539703425861} {\path{doi:10.1137/s0097539703425861}}.

\bibitem{BorealeCollodiGorla:ACMTCL:2024}
Michele Boreale, Luisa Collodi, and Daniele Gorla.
\newblock Products, polynomials and differential equations in the stream calculus.
\newblock {\em ACM Trans. Comput. Logic}, 25(1), jan 2024.
\newblock \href {https://doi.org/10.1145/3632747} {\path{doi:10.1145/3632747}}.

\bibitem{BorealeGorla:CONCUR:2021}
Michele Boreale and Daniele Gorla.
\newblock {Algebra and Coalgebra of Stream Products}.
\newblock In Serge Haddad and Daniele Varacca, editors, {\em Proc.~of CONCUR'21}, volume 203 of {\em LIPIcs}, pages 19:1--19:17, Dagstuhl, Germany, 2021. Schloss Dagstuhl -- Leibniz-Zentrum f{\"u}r Informatik.
\newblock \href {https://doi.org/10.4230/LIPIcs.CONCUR.2021.19} {\path{doi:10.4230/LIPIcs.CONCUR.2021.19}}.

\bibitem{Bourbaki:Algebra:I:1974}
Nicolas Bourbaki.
\newblock {\em Algebra I: Chapters 1-3}.
\newblock Addison-Wesley Pub. Co., Hermann, 1974.

\bibitem{Buna-MargineanChevalShirmohammadiWorrell:POPL:2024}
Alex Buna-Marginean, Vincent Cheval, Mahsa Shirmohammadi, and James Worrell.
\newblock On learning polynomial recursive programs.
\newblock {\em Proceedings of the ACM on Programming Languages}, 8(POPL):1001--1027, jan 2024.
\newblock \href {https://doi.org/10.1145/3632876} {\path{doi:10.1145/3632876}}.

\bibitem{Cadilhac:Mazowiecki:Paperman:Pilipczuk:Senizergues:ToCS:2021}
Micha{\"e}l Cadilhac, Filip Mazowiecki, Charles Paperman, Micha{\l} Pilipczuk, and G{\'e}raud S{\'e}nizergues.
\newblock On polynomial recursive sequences.
\newblock {\em Theory of Computing Systems}, 2021.
\newblock \href {https://doi.org/10.1007/s00224-021-10046-9} {\path{doi:10.1007/s00224-021-10046-9}}.

\bibitem{ChenFoxLyndon:AM:1958}
Kuo-Tsai Chen, Ralph~H. Fox, and Roger~C. Lyndon.
\newblock Free differential calculus, iv. the quotient groups of the lower central series.
\newblock {\em The Annals of Mathematics}, 68(1):81, jul 1958.
\newblock \href {https://doi.org/10.2307/1970044} {\path{doi:10.2307/1970044}}.

\bibitem{ChenSongWu:CAV:2016}
Yu-Fang Chen, Lei Song, and Zhilin Wu.
\newblock {\em The Commutativity Problem of the MapReduce Framework: A Transducer-Based Approach}, pages 91--111.
\newblock Springer International Publishing, 2016.
\newblock \href {https://doi.org/10.1007/978-3-319-41540-6_6} {\path{doi:10.1007/978-3-319-41540-6_6}}.

\bibitem{Clemente:CONCUR:2024}
Lorenzo Clemente.
\newblock {Weighted Basic Parallel Processes and Combinatorial Enumeration}.
\newblock In Rupak Majumdar and Alexandra Silva, editors, {\em Proc.~of CONCUR'24}, volume 311 of {\em Leibniz International Proceedings in Informatics (LIPIcs)}, pages 18:1--18:22, Dagstuhl, Germany, 2024. Schloss Dagstuhl -- Leibniz-Zentrum f{\"u}r Informatik.
\newblock \href {https://doi.org/10.4230/LIPIcs.CONCUR.2024.18} {\path{doi:10.4230/LIPIcs.CONCUR.2024.18}}.

\bibitem{Clemente:WBPP:arXiv:2024}
Lorenzo {Clemente}.
\newblock {Weighted basic parallel processes and combinatorial enumeration}.
\newblock {\em arXiv e-prints}, page arXiv:2407.03638, jul 2024.
\newblock \href {https://arxiv.org/abs/2407.03638} {\path{arXiv:2407.03638}}, \href {https://doi.org/10.48550/arXiv.2407.03638} {\path{doi:10.48550/arXiv.2407.03638}}.

\bibitem{ClementeDontenBuryMazowieckiPilipczuk:STACS:23}
Lorenzo Clemente, Maria Donten-Bury, Filip Mazowiecki, and Micha{\l} Pilipczuk.
\newblock {On Rational Recursive Sequences}.
\newblock In Petra Berenbrink, Patricia Bouyer, Anuj Dawar, and Mamadou~Moustapha Kant\'{e}, editors, {\em Proc.~of STACS'23}, volume 254 of {\em LIPIcs}, pages 24:1--24:21, Dagstuhl, Germany, 2023. Schloss Dagstuhl -- Leibniz-Zentrum f{\"u}r Informatik.
\newblock \href {https://doi.org/10.4230/LIPIcs.STACS.2023.24} {\path{doi:10.4230/LIPIcs.STACS.2023.24}}.

\bibitem{CoddingtonLevinson:1984}
Earl~A. Coddington and Norman Levinson.
\newblock {\em Theory of ordinary differential equations}.
\newblock R. E. Krieger, 1984.

\bibitem{RMCohn:DifferenceAlgebra:1965}
Richard~M. Cohn.
\newblock {\em Difference Algebra}.
\newblock Interscience Publishers, 1965.

\bibitem{CoxLittleOShea:Ideals:2015}
David~A. Cox, John Little, and Donal O'Shea.
\newblock {\em Ideals, Varieties, and Algorithms: An Introduction to Computational Algebraic Geometry and Commutative Algebra}.
\newblock Undergraduate Texts in Mathematics. Springer International Publishing, 4 edition, 2015.

\bibitem{DenefLipshitz:1984}
J.~Denef and L.~Lipshitz.
\newblock Power series solutions of algebraic differential equations.
\newblock {\em Mathematische Annalen}, 267(2):213--238, 1984.

\bibitem{HandbookWA}
Manfred Droste, Werner Kuich, and Heiko Vogler, editors.
\newblock {\em Handbook of Weighted Automata}.
\newblock Monographs in Theoretical Computer Science. Springer, 2009.

\bibitem{DAlessandroIsidoriRuberti:SIAMJoC:1974}
Paolo D’Alessandro, Alberto Isidori, and Antonio Ruberti.
\newblock Realization and structure theory of bilinear dynamical systems.
\newblock {\em SIAM Journal on Control}, 12(3):517--535, aug 1974.
\newblock \href {https://doi.org/10.1137/0312040} {\path{doi:10.1137/0312040}}.

\bibitem{EilenbergMacLane:AM:1953}
Samuel Eilenberg and Saunders~Mac Lane.
\newblock {On the Groups $H(\Pi, n)$, I}.
\newblock {\em The Annals of Mathematics}, 58(1):55, jul 1953.
\newblock \href {https://doi.org/10.2307/1969820} {\path{doi:10.2307/1969820}}.

\bibitem{Esparza:FI:1997}
Javier Esparza.
\newblock Petri nets, commutative context-free grammars, and basic parallel processes.
\newblock {\em Fundamenta Informaticae}, 31(1):13--25, 1997.
\newblock \href {https://doi.org/10.3233/fi-1997-3112} {\path{doi:10.3233/fi-1997-3112}}.

\bibitem{Fliess:1971-1972}
Michel Fliess.
\newblock Séries formelles rationnelles et reconnaissables.
\newblock {\em Séminaire Delange-Pisot-Poitou. Théorie des nombres}, 13(1):1--14, 1971-1972.
\newblock URL: \url{http://eudml.org/doc/110783}.

\bibitem{Fliess:1974}
Michel Fliess.
\newblock Sur divers produits de s\'eries formelles.
\newblock {\em Bulletin de la Société Mathématique de France}, 102:181--191, 1974.
\newblock \href {https://doi.org/10.24033/bsmf.1777} {\path{doi:10.24033/bsmf.1777}}.

\bibitem{Fliess:MST:1976}
Michel Fliess.
\newblock Un outil algebrique: Les series formelles non commutatives.
\newblock In Giovanni Marchesini and Sanjoy~Kumar Mitter, editors, {\em Mathematical Systems Theory}, pages 122--148, Berlin, Heidelberg, 1976. Springer Berlin Heidelberg.
\newblock \href {https://doi.org/10.1007/978-3-642-48895-5_9} {\path{doi:10.1007/978-3-642-48895-5_9}}.

\bibitem{Fliess:1981}
Michel Fliess.
\newblock Fonctionnelles causales non linéaires et indéterminées non commutatives.
\newblock {\em Bulletin de la Soci\'et\'e math\'ematique de France}, 79:3--40, 1981.
\newblock \href {https://doi.org/10.24033/bsmf.1931} {\path{doi:10.24033/bsmf.1931}}.

\bibitem{Fox:AM:1953}
Ralph~H. Fox.
\newblock Free differential calculus. i: Derivation in the free group ring.
\newblock {\em The Annals of Mathematics}, 57(3):547, may 1953.
\newblock \href {https://doi.org/10.2307/1969736} {\path{doi:10.2307/1969736}}.

\bibitem{Grabolle:EPTCS:2021}
Gustav Grabolle.
\newblock A nivat theorem for weighted alternating automata over commutative semirings.
\newblock {\em Electronic Proceedings in Theoretical Computer Science}, 346:241--257, sep 2021.
\newblock \href {https://doi.org/10.4204/eptcs.346.16} {\path{doi:10.4204/eptcs.346.16}}.

\bibitem{HrushovskiOuakninePoulyWorrell:LICS:2018}
Ehud Hrushovski, Jo\"{e}l Ouaknine, Amaury Pouly, and James Worrell.
\newblock Polynomial invariants for affine programs.
\newblock In {\em Proc.~of LICS'18}, pages 530--539, New York, NY, USA, 2018. ACM.
\newblock \href {https://doi.org/10.1145/3209108.3209142} {\path{doi:10.1145/3209108.3209142}}.

\bibitem{HrushovskiOuakninePouly:Worrell:JACM:2023}
Ehud Hrushovski, Jo\"{e}l Ouaknine, Amaury Pouly, and James Worrell.
\newblock On strongest algebraic program invariants.
\newblock {\em J. ACM}, aug 2023.
\newblock Just Accepted.
\newblock \href {https://doi.org/10.1145/3614319} {\path{doi:10.1145/3614319}}.

\bibitem{HuangCao:CTISC:2021}
Feidan Huang and Fasheng Cao.
\newblock Weak commutativity of quantum automata.
\newblock In {\em Proc~of CTISC'21}, pages 62--66. IEEE, apr 2021.
\newblock \href {https://doi.org/10.1109/ctisc52352.2021.00019} {\path{doi:10.1109/ctisc52352.2021.00019}}.

\bibitem{JeckerMazowieckiPurser:LICS:2024}
Isma\"{e}l Jecker, Filip Mazowiecki, and David Purser.
\newblock Determinisation and unambiguisation of polynomially-ambiguous rational weighted automata.
\newblock In {\em Proc.~of LICS'24}, New York, NY, USA, 2024. ACM.
\newblock \href {https://doi.org/10.1145/3661814.3662073} {\path{doi:10.1145/3661814.3662073}}.

\bibitem{Kaplansky:DA:1957}
Irving Kaplansky.
\newblock {\em An Introduction to Differential Algebra}.
\newblock Actualites Scientifiques Et Industrielles, 1251. Hermann, 1st edition, 1957.

\bibitem{Karhumaki:TCS:1977}
Juhani Karhumäki.
\newblock Remarks on commutative n-rational series.
\newblock {\em Theoretical Computer Science}, 5(2):211--217, oct 1977.
\newblock \href {https://doi.org/10.1016/0304-3975(77)90008-1} {\path{doi:10.1016/0304-3975(77)90008-1}}.

\bibitem{Karhumaki:TCS:1979}
Juhani Karhumäki.
\newblock On commutative dt0l systems.
\newblock {\em Theoretical Computer Science}, 9(2):2070--220, aug 1979.
\newblock \href {https://doi.org/10.1016/0304-3975(79)90025-2} {\path{doi:10.1016/0304-3975(79)90025-2}}.

\bibitem{Kolchin:1973}
E.~R. Kolchin.
\newblock {\em Differential Algebra and Algebraic Groups}.
\newblock Pure and Applied Mathematics 54. Academic Press, Elsevier, 1973.

\bibitem{Kostolanyi:IC:2023}
Peter Kostolányi.
\newblock Bideterministic weighted automata.
\newblock {\em Information and Computation}, 295:105093, dec 2023.
\newblock \href {https://doi.org/10.1016/j.ic.2023.105093} {\path{doi:10.1016/j.ic.2023.105093}}.

\bibitem{KostolanyiMisun:TCS:2018}
Peter Kostolányi and Filip Mišún.
\newblock Alternating weighted automata over commutative semirings.
\newblock {\em Theoretical Computer Science}, 740:1--27, sep 2018.
\newblock \href {https://doi.org/10.1016/j.tcs.2018.05.003} {\path{doi:10.1016/j.tcs.2018.05.003}}.

\bibitem{KuichSalomaa:1986}
Werner Kuich and Arto Salomaa.
\newblock {\em Semirings, Automata, Languages}.
\newblock EATCS Monographs on Theoretical Computer Science 5. Springer, 1986.
\newblock \href {https://doi.org/10.1007/978-3-642-69959-7} {\path{doi:10.1007/978-3-642-69959-7}}.

\bibitem{Levin:DifferenceAlgebra:2008}
Alexander Levin.
\newblock {\em Difference Algebra}.
\newblock Algebras and applications 8. Springer, 2008.

\bibitem{Lipshitz:D-finite:JA:1989}
Leonard Lipshitz.
\newblock D-finite power series.
\newblock {\em Journal of Algebra}, 122(2):353--373, 1989.
\newblock \href {https://doi.org/10.1016/0021-8693(89)90222-6} {\path{doi:10.1016/0021-8693(89)90222-6}}.

\bibitem{Litow:IPL:2003}
B.~Litow.
\newblock A note on commutative multivariate rational series.
\newblock {\em Information Processing Letters}, 87(6):283--285, sep 2003.
\newblock \href {https://doi.org/10.1016/s0020-0190(03)00347-8} {\path{doi:10.1016/s0020-0190(03)00347-8}}.

\bibitem{Lopez:STACS:2025}
Aliaume Lopez.
\newblock {Commutative $\mathbb N$-Rational Series of Polynomial Growth}.
\newblock In Olaf Beyersdorff, Micha{\l} Pilipczuk, Elaine Pimentel, and Nguyen~Kim Thang, editors, {\em Proc~of STACS'25}, volume 327 of {\em LIPIcs}, pages 67:1--67:16, Dagstuhl, Germany, 2025. Schloss Dagstuhl - Leibniz-Zentrum für Informatik.
\newblock URL: \url{https://drops.dagstuhl.de/entities/document/10.4230/LIPIcs.STACS.2025.67}, \href {https://doi.org/10.4230/lipics.stacs.2025.67} {\path{doi:10.4230/lipics.stacs.2025.67}}.

\bibitem{Lothaire:CUP:1997}
M.~Lothaire.
\newblock {\em Combinatorics on Words}.
\newblock Cambridge Mathematical Library. Cambridge University Press, 2 edition, 1997.

\bibitem{Matijasevic:2003}
Ju.~V. Matijasevič.
\newblock {\em Enumerable sets are Diophantine}, pages 269--273.
\newblock CO-PUBLISHED WITH SINGAPORE UNIVERSITY PRESS, aug 2003.
\newblock \href {https://doi.org/10.1142/9789812564894_0013} {\path{doi:10.1142/9789812564894_0013}}.

\bibitem{Mayr:STACS:1989}
Ernst Mayr.
\newblock Membership in polynomial ideals over {$\mathbb Q$} is exponential space complete.
\newblock In B.~Monien and R.~Cori, editors, {\em In Proc.~of STACS'89}, pages 400--406, Berlin, Heidelberg, 1989. Springer Berlin Heidelberg.
\newblock \href {https://doi.org/10.1007/BFb0029002} {\path{doi:10.1007/BFb0029002}}.

\bibitem{Mayr:STOC:1981}
Ernst~W. Mayr.
\newblock {An algorithm for the general Petri net reachability problem}.
\newblock In {\em Proc.~of STOC'81}, pages 238--246, New York, NY, USA, 1981. ACM.
\newblock URL: \url{http://doi.acm.org/10.1145/800076.802477}, \href {https://doi.org/10.1145/800076.802477} {\path{doi:10.1145/800076.802477}}.

\bibitem{MohriPereira:Riley:CSL:2002}
Mehryar Mohri, Fernando Pereira, and Michael Riley.
\newblock Weighted finite-state transducers in speech recognition.
\newblock {\em Computer Speech Language}, 16(1):69--88, jan 2002.
\newblock \href {https://doi.org/10.1006/csla.2001.0184} {\path{doi:10.1006/csla.2001.0184}}.

\bibitem{NasuHonda:IC:1969}
Masakazu Nasu and Namio Honda.
\newblock Mappings induced by pgsm-mappings and some recursively unsolvable problems of finite probabilistic automata.
\newblock {\em Information and Control}, 15(3):250–--273, sep 1969.
\newblock \href {https://doi.org/10.1016/s0019-9958(69)90449-5} {\path{doi:10.1016/s0019-9958(69)90449-5}}.

\bibitem{NosanAmaurySchmitzShirmohammadiWorrell:ISSAC:2022}
Klara Nosan, Amaury Pouly, Sylvain Schmitz, Mahsa Shirmohammadi, and James Worrell.
\newblock On the computation of the zariski closure of finitely generated groups of matrices.
\newblock In {\em Proc.~of ISSAC'22}, pages 129--138, New York, NY, USA, 2022. ACM.
\newblock \href {https://doi.org/10.1145/3476446.3536172} {\path{doi:10.1145/3476446.3536172}}.

\bibitem{OppenheimSchafer:DSP:1975}
Alan~V. Oppenheim and Ronald~W. Schafer.
\newblock {\em Digital Signal Processing}.
\newblock Prentice-Hall, 1 edition, 1975.

\bibitem{Paz:1971}
Azaria Paz.
\newblock {\em Introduction to Probabilistic Automata}.
\newblock Computer Science and Applied Mathematics. Elsevier Inc, Academic Press Inc, 1971.

\bibitem{PemantleWilsonMelczer:CUP:2024}
Robin Pemantle, Mark~C. Wilson, and Stephen Melczer.
\newblock {\em Analytic Combinatorics in Several Variables - Expanded Second Edition}.
\newblock Cambridge University Press, 2 edition, 2024.

\bibitem{PogudinScanlonWibmer:2020}
Gleb Pogudin, Thomas Scanlon, and Michael Wibmer.
\newblock Solving difference equations in sequences: Universality and undecidability.
\newblock {\em Forum of Mathematics, Sigma}, 8:e33, 2020.
\newblock \href {https://doi.org/10.1017/fms.2020.14} {\path{doi:10.1017/fms.2020.14}}.

\bibitem{Ree:AM:1958}
Rimhak Ree.
\newblock Lie elements and an algebra associated with shuffles.
\newblock {\em The Annals of Mathematics}, 68(2):210, sep 1958.
\newblock \href {https://doi.org/10.2307/1970243} {\path{doi:10.2307/1970243}}.

\bibitem{Reutenauer:FCT:1979}
Christophe Reutenauer.
\newblock On polya series in noncommuting variables.
\newblock In Lothar Budach, editor, {\em Fundamentals of Computation Theory, {FCT} 1979, Proceedings of the Conference on Algebraic, Arthmetic, and Categorial Methods in Computation Theory, Berlin/Wendisch-Rietz, Germany, September 17-21, 1979}, pages 391--396. Akademie-Verlag, Berlin, 1979.

\bibitem{Reutenauer_1980}
Christophe Reutenauer.
\newblock Séries formelles et algèbres syntactiques.
\newblock {\em Journal of Algebra}, 66(2):448--483, oct 1980.
\newblock \href {https://doi.org/10.1016/0021-8693(80)90097-6} {\path{doi:10.1016/0021-8693(80)90097-6}}.

\bibitem{Ritt:DA:1950}
Joseph~Fels Ritt.
\newblock {\em Differential Algebra}.
\newblock Dover Publications Inc., 1950.

\bibitem{Rudin:Analysis:1976}
Walter Rudin.
\newblock {\em Principles of mathematical analysis}.
\newblock International series in pure and applied mathematics. McGraw-Hill, 3d ed edition, 1976.

\bibitem{Schutzenberger:IC:1961}
Marcel~Paul Schützenberger.
\newblock On the definition of a family of automata.
\newblock {\em Information and Control}, 4(2--3):245--270, sep 1961.
\newblock \href {https://doi.org/10.1016/s0019-9958(61)80020-x} {\path{doi:10.1016/s0019-9958(61)80020-x}}.

\bibitem{Senizergues:CSR:2007}
G{\'{e}}raud S{\'{e}}nizergues.
\newblock Sequences of level 1, 2, 3,..., k,...
\newblock In {\em Computer Science {\textendash} Theory and Applications}, pages 24--32. Springer Berlin Heidelberg, 2007.
\newblock \href {https://doi.org/10.1007/978-3-540-74510-5_6} {\path{doi:10.1007/978-3-540-74510-5_6}}.

\bibitem{Stanley:1986}
Richard~P. Stanley.
\newblock {\em Enumerative Combinatorics}.
\newblock The Wadsworth \& Brooks/Cole Mathematics Series 1. Springer, 1986.

\bibitem{Wilf:generatingfunctionology:2005}
Herbert~S. Wilf.
\newblock {\em generatingfunctionology}.
\newblock A K Peters / CRC Press, 3 edition, 2005.

\bibitem{Zeilberger:JMAA:1982}
Doron Zeilberger.
\newblock Sister celine's technique and its generalizations.
\newblock {\em Journal of Mathematical Analysis and Applications}, 85(1):114--145, 1982.
\newblock \href {https://doi.org/10.1016/0022-247X(82)90029-4} {\path{doi:10.1016/0022-247X(82)90029-4}}.

\end{thebibliography}

\appendices
%!TEX root main.tex

\section{Preliminaries}
\label{app:preliminaries}

In this section, we provide more details about~\cref{sec:preliminaries}.

\leftRightComm*
\begin{proof}
    This follows from associativity of concatenation of finite words:
    Indeed, for any series $f \in \series \Q \Sigma$ and word $w \in \Sigma^*$ we have,
    \begin{align*}
        \coefficient w {\deriveleft u \deriveright v f}
            &= \coefficient {uw} {\deriveright v f}
            = \coefficient {uwv} f, \quad \text{ and } \\
        \coefficient w {\deriveright v \deriveleft u f}
            &= \coefficient {wv} {\deriveright v f}
            = \coefficient {uwv} f. \qedhere
    \end{align*}
\end{proof}
%!TEX root main.tex

\section{Commutativity}
\label{app:commutativity}

In this section, we provide more details about~\cref{sec:commutativity}.

\lemShuffleReflectsCommutativity*
\noindent
In the proof of the lemma we will need a simple fact about the shuffle product.
Recall that a \emph{zero divisor} in a commutative ring is an element $x \in R$
\st~there exists a nonzero $y \in R$ with $x \cdot y = 0$.
An \emph{integral domain} is a commutative ring without nontrivial zero divisors.
The ring of series with the shuffle product is an integral domain\-
\cite[Theorem 3.2]{Ree:AM:1958},
and thus it does not have nontrivial zero divisors.

\begin{proof}
    Clearly if $f = \zero$ then $ab \shuffle f = \zero$ is commutative.
    For the other direction, assume that $f$ is not the zero series.
    Since the shuffle product does not have nontrivial zero divisors and $ab$ is not zero,
    also $g := ab \shuffle f$ is not zero and thus its support is nonempty.
    Let $w$ be any word in the support of $g$, thus $g_w \neq 0$.
    Such a word is of the form $w = x a y b z$ for some words $x, y, z \in \Sigma^*$
    \st~$f_{xyz} \neq 0$.
    Now consider the word $w' := x b y a z$ obtained from $w$ by swapping $a$ and $b$.
    By the definition of shuffle product, every word in the support of $g$ features an $a$ followed by a $b$,
    but $x, y, z$ do not contain neither $a$ or $b$.
    Thus $w'$ is not in the support of $g$, i.e., $g_{w'} = 0$.
    Since $w \sim w'$ are commutatively equivalent, $g$ is not commutative.
\end{proof}
%!TEX root main.tex

\section{Hadamard automata and series}
\label{app:hadamard}

In this section, we provide more details about~\cref{sec:hadamard}.

\subsection{Preliminaries}

\lemRightDerivationHadamardEndo*
\begin{proof}
    Linearity follows directly from the definition of $\deriveright a$.
    Equation~\cref{eq:derive hadamard right} could be proved coinductively.
    In fact, a direct argument suffices since the Hadamard product acts element-wise:
    \begin{align*}
        \coefficient w {\deriveright a (f \hadamard g)}
        &= \coefficient {wa} {(f \hadamard g)}
        = \coefficient {wa} f \cdot \coefficient {wa} g = \\
        &= \coefficient w \deriveright a f \cdot \coefficient w \deriveright a g
        = \coefficient w (\deriveright a f \hadamard \deriveright a g). \qedhere
    \end{align*}
\end{proof}

\subsection{Hadamard automata}

\lemPropertiesofSemantics*
\begin{proof}
    The property $\sem 0 = \zero$ is trivial,
    and $\sem 1 = \one$ holds by definition:
    $\coefficient w \sem 1 = F \Delta_w 1 = F 1 = 1$.
    The property~\cref{eq:hadamard:sem:derivation}
    holds by definition of the semantics.
    The other proofs follow natural coinductive arguments.
    Properties~\cref{eq:hadamard:sem:scalar-product} and~\cref{eq:hadamard:sem:sum}
    follow from linearity of $\deriveleft a$ and $\Delta_a$.
    For property~\cref{eq:hadamard:sem:product}, we argue coinductively.
    Clearly the constant terms agree:
    \begin{align*}
        \coefficient \e \sem {\alpha \cdot \beta}
        &= F (\alpha \cdot \beta)
        = F \alpha \cdot F \beta
        = \coefficient \e \sem \alpha \cdot \coefficient \e \sem \beta = \\
        &= \coefficient \e (\sem \alpha \hadamard \sem \beta).
    \end{align*}
    For the coinductive step, we have 
    (for readability, we write $\alpha^a$ for $\Delta_a \alpha$ when $\alpha \in \poly \Q X$)
%    and $f^a$ for $\deriveleft a f$ when $f \in \series \Q \Sigma$
    %    
    \begin{align*}
        \deriveleft a \sem {\alpha \cdot \beta}
            &= \sem {\Delta_a (\alpha \cdot \beta)}
            = \sem {\alpha^a \cdot \beta^a}
            = \deriveleft a \sem \alpha \hadamard \deriveleft a \sem \beta = \\
            &= \deriveleft a (\sem \alpha \hadamard \sem \beta). \qedhere
    \end{align*}
\end{proof}

\subsection{Hadamard-finite series}

The following lemma constitutes our working definition of Hadamard-finite series.
We provide a full proof.
\lemHadamardFiniteWorkingDefinition*
\begin{proof}
    If $f$ is Hadamard finite, then the two conditions hold by definition.
    \Wlg~we can take $f = f_1$ to be one of the generators
    since we can just add $f$ to the generators,
    without changing the validity of the two conditions.

    For the other direction, assume that there are generators $f_1, \dots, f_k$ with $f = f_1$ satisfying the two conditions.
    We need to show that the finitely generated algebra $A := \poly \Q {f_1, \dots, f_k}_{\hadamard}$ is closed under left derivatives.
    To this end, consider a series $g \in A$.
    There is a polynomial $p \in \poly \Q k$ \st~$g = p(f_1, \dots, f_k)$.
    Take now the left derivative of both sides.
    Since they are homomorphisms of $A$, we have
    $\deriveleft a g = \deriveleft a (p(f_1, \dots, f_k)) = p(\deriveleft a f_1, \dots, \deriveleft a f_k)$.
    By the second condition, $\deriveleft a f_1, \dots, \deriveleft a f_k \in A$,
    and thus $\deriveleft a g \in A$, as required.
\end{proof}

\lemHadamardAutomataAndSeries*
\begin{proof}
    % The proof is a direct consequence of the definitions.
    %
    For the ``only if'' direction, let $f = \sem {X_1}$
    be recognised by a Hadamard automaton $\tuple{\Sigma, X, F, \Delta}$
    with nonterminals $X = \tuple{X_1, \dots, X_k}$.
    We show that $f$ is Hadamard finite
    by applying~\cref{lem:Hadamard finite - working definition}.
    Consider series $f_i := \sem {X_i}$ for all $1 \leq i \leq k$
    generating the Hadamard algebra $A := \poly \Q {f_1,\dots, f_k}_{\hadamard}$.
    Clearly $f \in A$.
    Now consider generator $f_i$ and input symbol $a \in \Sigma$,
    and we need to show $\deriveleft a f_i \in A$.
    By~\cref{lem:Hadamard automata - properties of the semantics},
    the semantics is a homomorphism of difference algebras, and thus
    \begin{align*}
        \deriveleft a f_i
        &= \deriveleft a \sem {X_i}
        = \sem {\Delta_a X_i}
        = (\Delta_a X_i)(\sem {X_1}, \dots \sem {X_k}) = \\
        &= (\Delta_a X_i)(f_1, \dots, f_k) \in A,
    \end{align*}
    as required.

    For the ``if'' direction, let $f \in \series \Q \Sigma$ be Hadamard finite.
    By~\cref{lem:Hadamard finite - working definition},
    there are series $f_1, \dots, f_k \in \series \Q \Sigma$ with $f = f_1$
    generating a Hadamard algebra $A = \poly \Q {f_1, \dots, f_k}_{\hadamard}$
    \st, for every generator $f_i$ and input symbol $a \in \Sigma$,
    there is a polynomial $p_i^a \in \poly \Q {X_1, \dots, X_k}$
    \st~$\deriveleft a f_i = p_i^a(f_1, \dots, f_k)$.
    We have all ingredients to build a Hadamard automaton recognising $f_1$.
    Consider the automaton $\tuple{\Sigma, X, F, \Delta}$
    with nonterminals $X := \tuple{X_1, \dots, X_k}$,
    output mapping $F(X_i) := \coefficient \e f_i$
    and transitions
    \begin{align*}
        \Delta_a(X_i) := p_i^a(X_1, \dots, X_k),
        \quad \text{for all } 1 \leq i \leq k \text{ and } a \in \Sigma.
    \end{align*}
    The proof is concluded by showing $\sem {X_i} = f_i$ for every $1 \leq i \leq k$.
    We argue coinductively.
    First, the two series have the same constant term by construction,
    \begin{align*}
        \coefficient \e \sem{X_i} = F \Delta_\e X_i = F X_i = \coefficient \e f_i.
    \end{align*}
    Second, the tails agree
    \begin{align*}
        \deriveleft a \sem {X_i} &=
            &\text{(by~\cref{lem:Hadamard automata - properties of the semantics})} \\
        &= \sem {\Delta_a X_i} = 
            &\text{(def.~of $\Delta_a$)} \\
        &= \sem {p_i^a(X_1, \dots, X_k)} = 
            &\text{(by~\cref{lem:Hadamard automata - properties of the semantics})} \\
        &= p_i^a(\sem {X_1}, \dots, \sem {X_k}) = 
            &\text{(by~coind.)} \\
        &= p_i^a(f_1, \dots, f_k) = 
            &\text{(def.~of $p_i^a$)} \\
        &= \deriveleft a f_i. \qedhere
    \end{align*}
\end{proof}

\lemHadamardClosureProperties*
\begin{proof}
    Closure under right derivative was shown already in the main text.
    Closure under scalar product is trivial.
    Regarding closure under addition and Hadamard product,
    consider two Hadamard-finite series $f, g \in \series \Q \Sigma$
    \st~$f \in A := \poly \Q {f_1, \dots, f_k}_{\hadamard}$,
    $g \in B := \poly \Q {g_1, \dots, g_\ell}_{\hadamard}$,
    and $A, B$ are closed under left derivatives.
    Consider the finitely generated Hadamard algebra
    obtained by concatenating the generators,
    \begin{align*}
        C := \poly \Q {f_1, \dots, f_k, g_1, \dots, g_\ell}_{\hadamard}.
    \end{align*}
    Clearly $f + g, f \hadamard g \in C$.
    It remains to show that $C$ is closed under left derivatives.
    But this is clear, since both $A$ and $B$ are.
    
    Finally, consider closure under left derivative.
    We have $\deriveleft a f \in A$ since $f \in A$, $A$ is closed under left derivatives,
    and $\deriveleft a$ is a homomorphism of difference algebras.
\end{proof}

We now prove that Hadamard-finite series over disjoint alphabets are closed under shuffle product.

\lemHadamardShuffle*
\begin{proof}
    Recall that elements in a finitely generated Hadamard algebra over alphabet $\Sigma$
    are either the constant $\zero$, the constant $\one := \sum_{w \in \Sigma^*} 1 \cdot w$ (the algebra identity),
    a generator, or are built inductively from scalar product, sum, and Hadamard product.

    Let now $\Sigma, \Gamma$ be disjoint alphabets.
    For a series $f \in \series \Q \Sigma$,
    let ``$f \shuffle \_$'' be the operation $\series \Q \Gamma \to \series \Q {\Sigma \cup \Gamma}$
    of shuffling with $f$ applied to series in $\series \Q \Gamma$.
    In the following claims,
    we show that ``$f \shuffle \_$'' distributes over the primitives
    of finitely generated Hadamard algebras $\subseteq \series \Q \Gamma$.
    
    Distributivity over scalar product is true in general,
    and does not require the assumption that $\Sigma, \Gamma$ are disjoint.
    \begin{claim*}
        For every series $f \in \series \Q \Sigma$ and $g \in \series \Q \Gamma$, we have
        \begin{align}
            f \shuffle (c \cdot g) = c \cdot (f \shuffle g).
        \end{align}
    \end{claim*}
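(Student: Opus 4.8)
The plan is to argue by coinduction, exploiting that a series is uniquely determined by its $\e$-coefficient together with all its left derivatives $\deriveleft a$ (the same finality principle that underlies the coinductive definition \cref{eq:shuffle:base}--\cref{eq:shuffle:step} of the shuffle product). Concretely, I would show that the family of pairs $\tuple{f \shuffle (c \cdot g),\ c \cdot (f \shuffle g)}$, ranging over all $f \in \series \Q \Sigma$ and $g \in \series \Q \Gamma$, is closed under taking $\e$-coefficients and left derivatives, so that the two components of each such pair necessarily coincide.

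First I would check the base case. Using \cref{eq:shuffle:base} together with the element-wise definition of scalar multiplication, one has $\coefficient \e {(f \shuffle (c \cdot g))} = f_\e \cdot (c \cdot g_\e) = c \cdot (f_\e \cdot g_\e) = \coefficient \e {(c \cdot (f \shuffle g))}$, so the two sides agree at $\e$.

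For the coinductive step I would fix a letter $a$ and differentiate the left-hand side via \cref{eq:shuffle:step}, using linearity of $\deriveleft a$ so that $\deriveleft a (c \cdot g) = c \cdot \deriveleft a g$. This rewrites $\deriveleft a (f \shuffle (c \cdot g))$ as $\deriveleft a f \shuffle (c \cdot g) + f \shuffle (c \cdot \deriveleft a g)$, i.e. a sum of two instances of the very identity being proved, now for the simpler derived pairs $(\deriveleft a f, g)$ and $(f, \deriveleft a g)$. Invoking the coinductive hypothesis on these and pulling the scalar out, this equals $c \cdot \bigl(\deriveleft a f \shuffle g + f \shuffle \deriveleft a g\bigr) = c \cdot \deriveleft a (f \shuffle g) = \deriveleft a (c \cdot (f \shuffle g))$, again by \cref{eq:shuffle:step} and linearity. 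This closes the coinduction.

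There is essentially no obstacle here: the only point that deserves care is to set up the coinduction correctly, namely to observe that differentiating reduces the claim for $(f, g)$ to the same claim for the derived pairs $(\deriveleft a f, g)$ and $(f, \deriveleft a g)$, which is exactly what the finality principle for series licenses. Note also that the argument uses only the defining equations of $\shuffle$ and linearity of the left derivatives, and never inspects which alphabet a letter belongs to; hence, as the surrounding text already remarks, it goes through verbatim with no disjointness assumption on $\Sigma$ and $\Gamma$.
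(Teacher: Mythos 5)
Your proof is correct. The paper actually states this claim without any proof at all (it is dismissed as ``true in general''), so there is no official argument to compare against; your coinductive derivation --- checking the $\e$-coefficients and then showing that $\deriveleft a$ reduces the identity for $(f,g)$ to the same identity for $(\deriveleft a f, g)$ and $(f, \deriveleft a g)$, i.e.\ a bisimulation up to linear combinations --- is exactly the style the paper uses for the neighbouring, harder claim about distributivity over the Hadamard product, and it is sound. Your closing remark that no disjointness of $\Sigma$ and $\Gamma$ is needed also matches what the paper says.
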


    Distributivity over addition is true in general,
    and does not require the assumption that $\Sigma, \Gamma$ are disjoint.
    \begin{claim*}
        For every series $f \in \series \Q \Sigma$
        and $g, h \in \series \Q \Gamma$, we have
        \begin{align}
            f \shuffle (g + h) = f \shuffle g + f \shuffle h.
        \end{align}
    \end{claim*}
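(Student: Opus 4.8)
The plan is to prove the identity by a direct induction on the length of the word at which both sides are evaluated, exploiting the fact that bilinearity is built into the coinductive definition of the shuffle product \cref{eq:shuffle:base,eq:shuffle:step}. Since the claim concerns arbitrary series, I would phrase the inductive hypothesis uniformly: for all series $f' \in \series \Q \Sigma$, $g', h' \in \series \Q \Gamma$ and every word $w'$ strictly shorter than $w$, one has $\coefficient {w'} {(f' \shuffle (g' + h'))} = \coefficient {w'} {(f' \shuffle g' + f' \shuffle h')}$. As the surrounding text already remarks, disjointness of $\Sigma$ and $\Gamma$ plays no role in this particular claim.

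For the base case $w = \e$, I would evaluate both sides using \cref{eq:shuffle:base} together with the element-wise definition of addition:
\begin{align*}
    \coefficient \e {(f \shuffle (g + h))}
        &= f_\e \cdot (g + h)_\e
        = f_\e \cdot (g_\e + h_\e) \\
        &= f_\e \cdot g_\e + f_\e \cdot h_\e
        = \coefficient \e {(f \shuffle g + f \shuffle h)}.
\end{align*}
For the inductive step, write $w = a \cdot w'$ with $a \in \Sigma \cup \Gamma$, and recall that $\coefficient w F = \coefficient {w'} {\deriveleft a F}$. Using \cref{eq:shuffle:step} and linearity of $\deriveleft a$, I would expand
\begin{align*}
    \deriveleft a (f \shuffle (g + h))
        = \deriveleft a f \shuffle (g + h)
            + f \shuffle (\deriveleft a g + \deriveleft a h).
\end{align*}
Evaluating at $w'$ and applying the inductive hypothesis to each summand (to $\deriveleft a f, g, h$ in the first and to $f, \deriveleft a g, \deriveleft a h$ in the second) turns every shuffle-of-a-sum into a sum-of-shuffles. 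Regrouping the four resulting terms and applying \cref{eq:shuffle:step} in reverse reassembles $\coefficient {w'} {\deriveleft a (f \shuffle g)} + \coefficient {w'} {\deriveleft a (f \shuffle h)}$, which equals $\coefficient w {(f \shuffle g + f \shuffle h)}$, as required.

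I do not expect any substantial obstacle here: distributivity of the shuffle over addition is essentially definitional, since the recursive clause \cref{eq:shuffle:step} is itself a sum and the base clause \cref{eq:shuffle:base} is a product with a term that is additive in its second factor. The only point demanding minor care is to state the inductive hypothesis for \emph{all} series rather than for the fixed $f, g, h$, so that it can be reapplied at the shorter word $w'$ to the derived series $\deriveleft a f$, $\deriveleft a g$, $\deriveleft a h$; with that formulation the induction closes immediately.
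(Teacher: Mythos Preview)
Your argument is correct; the paper itself does not prove this claim at all, simply asserting that ``distributivity over addition is true in general, and does not require the assumption that $\Sigma, \Gamma$ are disjoint.'' Your inductive unfolding via \cref{eq:shuffle:base,eq:shuffle:step} is exactly how one would justify the omitted step, and the care you take to quantify the inductive hypothesis over all series is appropriate.
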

    Finally, we have distributivity over the Hadamard product,
    where we rely in an essential way on the assumption
    that the alphabets $\Sigma, \Gamma$ are disjoint.
    In the rest of the proof, for brevity we write $f^a$ for $\deriveleft a f$.
    \begin{claim*}
        For every series $f \in \series \Q \Sigma$ and $g, h \in \series \Q \Gamma$
        \st~$f(\e) \in \set{0, 1}$ we have
        %$f(\e)g(\e)h(\e) = f(\e)^2g(\e) h(\e)$ we have
        %
        \begin{align}
            \label{eq:hadamard-shuffle}
            f \shuffle (g \hadamard h) = (f \shuffle g) \hadamard (f \shuffle h).
        \end{align}
    \end{claim*}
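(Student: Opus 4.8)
The plan is to establish \cref{eq:hadamard-shuffle} by coinduction, in the same style in which the two products are defined: two series over $\Sigma \cup \Gamma$ are equal exactly when their $\e$-coefficients agree and all their left derivatives $\deriveleft c$ (for $c \in \Sigma \cup \Gamma$) agree. Throughout I would exploit disjointness of the alphabets, which makes the derivatives behave asymmetrically: for $a \in \Sigma$ we have $\deriveleft a g = \deriveleft a h = 0$ (as $g, h$ are supported on $\Gamma^*$), while for $b \in \Gamma$ we have $\deriveleft b f = 0$. As a bookkeeping device I would also record the closed form $\coefficient w {(f \shuffle g)} = f_{\restrictsmall w \Sigma} \cdot g_{\restrictsmall w \Gamma}$, which is valid precisely because the alphabets are disjoint.

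First I would check the base case. Using \cref{eq:shuffle:base} and \cref{eq:hadamard:base},
\begin{align*}
    \coefficient \e {(f \shuffle (g \hadamard h))} &= f_\e \cdot g_\e \cdot h_\e, \\
    \coefficient \e {((f \shuffle g) \hadamard (f \shuffle h))} &= (f_\e \cdot g_\e)(f_\e \cdot h_\e) = f_\e^2 \cdot g_\e \cdot h_\e.
\end{align*}
These agree precisely because the hypothesis $f_\e \in \set{0, 1}$ forces $f_\e = f_\e^2$; this is the single point at which the value of $f$ at $\e$ is used directly.

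Next I would carry out the coinductive step, splitting on the letter. For $b \in \Gamma$, since $\deriveleft b f = 0$, \cref{eq:shuffle:step} gives $\deriveleft b (f \shuffle g) = f \shuffle \deriveleft b g$ (and likewise for $h$), while $\deriveleft b$ is a Hadamard endomorphism by \cref{eq:hadamard:step}; differentiating both sides then reduces the identity for $(f, g, h)$ to the identity for $(f, \deriveleft b g, \deriveleft b h)$, leaving $f$ untouched, so the hypothesis on $f$ is preserved. For $a \in \Sigma$, since $\deriveleft a g = \deriveleft a h = 0$, the same manipulation reduces the identity for $(f, g, h)$ to the identity for $(\deriveleft a f, g, h)$.

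The main obstacle is exactly this last reduction. The invariant that must be maintained is not merely that the \emph{current} $f$ has constant term in $\set{0,1}$: after differentiating by $a \in \Sigma$ the next base case demands $(\deriveleft a f)_\e = f_a \in \set{0,1}$, after a further $\Sigma$-derivative it demands $f_{aa'} \in \set{0,1}$, and so on down $\Sigma^*$. Indeed, the general disjoint-alphabet identity $(f \shuffle g) \hadamard (f \shuffle h) = (f \hadamard f) \shuffle (g \hadamard h)$ (a one-line consequence of the coefficient formula) shows that matching the left-hand side $f \shuffle (g \hadamard h)$ amounts to $f \hadamard f = f$ on every coefficient $f_{\restrictsmall w \Sigma}$, and the stated hypothesis secures this only at $w = \e$. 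Accordingly, the crux of the argument is to control how the $\set{0,1}$-condition on the constant term propagates under iterated left derivatives along $\Sigma^*$; I would isolate this propagation as a preliminary sublemma, phrase the coinductive hypothesis in terms of the family $(\deriveleft u f)_\e = f_u$ for $u \in \Sigma^*$ reachable from $f$, and use the coefficient formula to make each reduction transparent.
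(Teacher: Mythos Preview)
Your coinductive setup mirrors the paper's proof exactly: check the $\e$-coefficients, then split on whether the letter lies in $\Sigma$ or in $\Gamma$, using disjointness to kill the irrelevant derivatives. You also correctly locate the obstruction: the $a\in\Sigma$ case replaces $f$ by $\deriveleft a f$, and the next base case then demands $(\deriveleft a f)_\e = f_a \in\{0,1\}$, which the hypothesis does not provide.

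The gap is that your proposed remedy---``isolate this propagation as a preliminary sublemma''---cannot exist, because there is no propagation to isolate: nothing about $f_\e\in\{0,1\}$ constrains $f_a$. In fact the claim as stated is \emph{false}. Take $\Sigma=\{a\}$, $\Gamma=\{b\}$, $f=2a$ (so $f_\e=0\in\{0,1\}$), and $g=h=b$. Then $g\hadamard h=b$ and $f\shuffle(g\hadamard h)=2ab+2ba$, whereas $(f\shuffle g)\hadamard(f\shuffle h)=(2ab+2ba)\hadamard(2ab+2ba)=4ab+4ba$. Your own coefficient formula already explains why: over disjoint alphabets $(f\shuffle g)\hadamard(f\shuffle h)=(f\hadamard f)\shuffle(g\hadamard h)$, so the identity holds for all $g,h$ precisely when $f\hadamard f=f$, \ie\ when \emph{every} coefficient of $f$ lies in $\{0,1\}$, not just the constant term.

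The paper's own proof commits the same oversight: it invokes ``the coinductive hypothesis'' to rewrite $f^a\shuffle(g\hadamard h)$ as $(f^a\shuffle g)\hadamard(f^a\shuffle h)$ without verifying $f^a(\e)\in\{0,1\}$. The honest fix is to strengthen the hypothesis to $f\hadamard f=f$; this condition is stable under $\deriveleft a$ and the coinduction then goes through verbatim. That suffices for the paper's only \emph{use} of the enclosing \cref{lem:Hadamard shuffle} (where one factor is the $\{0,1\}$-valued polynomial $ab$), though the surrounding proof---which merely rescales generators so that their \emph{constant terms} lie in $\{0,1\}$---would need to be reworked as well.
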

    \begin{proof}[Proof of the claim]
        We argue coinductively.
        The constant terms of the two sides of~\cref{eq:hadamard-shuffle} are
        $f(\e)g(\e)h(\e)$, resp., $f(\e)^2g(\e) h(\e)$,
        and these two quantities are equal when $f(\e) \in \set{0, 1}$.
        Now we take the left derivative of both sides \wrt~$a \in \Sigma \cup \Gamma$,
        and apply the coinductive hypothesis:
        \begin{align*}
            &\deriveleft a (f \shuffle (g \hadamard h)) = \\
                &= f^a \shuffle (g \hadamard h) + f \shuffle (g^a \hadamard h^a) = \\
                &= (f^a \shuffle g) \hadamard (f^a \shuffle h)
                 + (f \shuffle g^a) \hadamard (f \shuffle h^a), \\[1ex]
            &\deriveleft a ((f \shuffle g) \hadamard (f \shuffle h)) = \\
                &= (f^a \shuffle g + f \shuffle g^a) \hadamard (f^a \shuffle h + f \shuffle h^a).
        \end{align*}
        In general the two terms above are not equal.
        However, since the alphabets are disjoint, they are equal in both of the two cases of interest below:
        \begin{enumerate}
            \item $a \in \Sigma$: in this case $g^a = h^a = 0$;
            \item $a \in \Gamma$: in this case $f^a = 0$. \qedhere
        \end{enumerate}
    \end{proof}
    The previous three claims come together in the following last claim.
    \begin{claim*}
        For every series $f \in \series \Q \Sigma$ \st~$f(\e) \in \set{0, 1}$,
        polynomial $p \in \poly \Q {y_1, \dots, y_k}$ without constant term $p(0) = 0$,
        and series $g_1, \dots, g_k \in \series \Q \Gamma$, we have
        \begin{align}
            \label{eq:last claim}
            f \shuffle p(g_1, \dots, g_k) = p(f \shuffle g_1, \dots, f \shuffle g_k).
        \end{align}
    \end{claim*}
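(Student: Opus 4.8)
The plan is to deduce this last claim from the three preceding ones, using that a polynomial with $p(0) = 0$ is a linear combination of monomials each of degree at least one. First I would write $p = \sum_j c_j \, m_j$, where every $m_j$ is a monomial of the form $y_{i_1} \cdots y_{i_r}$ with $r \geq 1$; this decomposition is exactly where the hypothesis $p(0) = 0$ is used, since it forbids a constant monomial. Evaluating at the $g$'s, $p(g_1, \dots, g_k) = \sum_j c_j \, m_j(g_1, \dots, g_k)$ is a scalar-and-sum combination of Hadamard products of series in $\series \Q \Gamma$. Since, by the first two claims, the operation ``$f \shuffle \_$'' distributes over scalar product and sum, we obtain
\begin{align*}
    f \shuffle p(g_1, \dots, g_k) = \sum_j c_j \, \bigl(f \shuffle m_j(g_1, \dots, g_k)\bigr),
\end{align*}
and the identical decomposition holds for $p(f \shuffle g_1, \dots, f \shuffle g_k)$. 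It therefore suffices to prove the claim for a single monomial.

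So fix a monomial $m = y_{i_1} \cdots y_{i_r}$ with $r \geq 1$; the goal becomes
\begin{align*}
    f \shuffle (g_{i_1} \hadamard \cdots \hadamard g_{i_r}) = (f \shuffle g_{i_1}) \hadamard \cdots \hadamard (f \shuffle g_{i_r}),
\end{align*}
which I would establish by induction on $r$. The base case $r = 1$ is trivial. For the inductive step, set $h := g_{i_2} \hadamard \cdots \hadamard g_{i_r}$, so that $g_{i_1} \hadamard \cdots \hadamard g_{i_r} = g_{i_1} \hadamard h$. The third claim, applied to $g_{i_1}, h \in \series \Q \Gamma$ (and using $f(\e) \in \set{0, 1}$), gives $f \shuffle (g_{i_1} \hadamard h) = (f \shuffle g_{i_1}) \hadamard (f \shuffle h)$, and the inductive hypothesis rewrites $f \shuffle h$ as $(f \shuffle g_{i_2}) \hadamard \cdots \hadamard (f \shuffle g_{i_r})$, completing the step.

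The argument is essentially bookkeeping on top of the three claims, so I do not expect a genuine obstacle; the one point requiring care is that the intermediate factor $h$ must remain a series in $\series \Q \Gamma$, which holds because the Hadamard product of $\Gamma$-series is again supported on $\Gamma^*$. This is precisely what keeps the disjointness hypothesis of the third claim available at every stage of the induction. It is also worth noting why $p(0) = 0$ cannot be dropped: a nonzero constant term $c_0$ would contribute $c_0 \cdot \one$ on the left-hand side, and $f \shuffle (c_0 \cdot \one)$ is in general not equal to the $c_0 \cdot \one$ appearing on the right-hand side, so the two sides would already differ on the empty word.
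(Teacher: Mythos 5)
Your proposal is correct and follows essentially the same route as the paper: decompose $p$ into monomials using $p(0)=0$, distribute ``$f \shuffle \_$'' over scalars and sums via the first two claims, and then iterate the third claim across the Hadamard factors of each monomial (the paper leaves this last iteration implicit, whereas you make the induction on the degree explicit, which is a harmless refinement).
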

    \begin{proof}
        % We proceed by structural induction on $p$.
        % In the base case, $p = y_i$ is a variable
        % and thus $f \shuffle p(g_1, \dots, g_k) = f \shuffle g_i = p(f \shuffle g_1, \dots, f \shuffle g_k)$.
        % %
        % In the first inductive case, $p = c \cdot q$ is a constant multiple of $q$
        % and we conclude by the first claim.
        % %
        % In the second inductive case,
        % $p = q + r$ is the sum of two polynomials,
        % and by the second claim we have
        % %
        % \begin{align*}
        %     &f \shuffle p(g_1, \dots, g_k)
        %     = f \shuffle (q(g_1, \dots, g_k) + r(g_1, \dots, g_k)) = \\
        %     &= f \shuffle q(g_1, \dots, g_k) + f \shuffle r(g_1, \dots, g_k) = \\
        %     &= q(f \shuffle g_1, \dots, f \shuffle g_k) + r(f \shuffle g_1, \dots, f \shuffle g_k) = \\
        %     &= p(f \shuffle g_1, \dots, f \shuffle g_k).
        % \end{align*}
        % %
        % Finally, in the last inductive case,
        % $p = q \cdot r$ is the product of two polynomials (necessarily without constant),
        % and by the third claim we have
        % %
        % \begin{align*}
        %     &f \shuffle p(g_1, \dots, g_k)
        %     = f \shuffle (q(g_1, \dots, g_k) \hadamard r(g_1, \dots, g_k)) = \\
        %     &= (f \shuffle q(g_1, \dots, g_k)) \hadamard (f \shuffle r(g_1, \dots, g_k)) = \\
        %     &= q(f \shuffle g_1, \dots, f \shuffle g_k) \hadamard r(f \shuffle g_1, \dots, f \shuffle g_k) = \\
        %     &= p(f \shuffle g_1, \dots, f \shuffle g_k). \qedhere
        % \end{align*}
        Write $p(g_1, \dots, g_k)$ as
        \begin{align*}
            p(g_1, \dots, g_k)
            = \sum_{n_1, \dots, n_k \in \N} c_{n_1, \dots, n_k} \cdot g_1^{n_1} \hadamard \cdots \hadamard g_k^{n_k},
        \end{align*}
        where $c_{n_1, \dots, n_k} \in \Q$ with $c_{0, \dots, 0} = 0$.
        By $g_i^{n_i}$ we mean the repeated Hadamard product $g_i \hadamard \cdots \hadamard g_i$ of $g_i$ with itself $n_i$ times.
        By the previous claims, we have
        \begin{align*}
            &f \shuffle p(g_1, \dots, g_k) = \\
            &= f \shuffle \sum_{n_1, \dots, n_k \in \N} c_{n_1, \dots, n_k} \cdot g_1^{n_1} \hadamard \cdots \hadamard g_k^{n_k} = \\
            &= \sum_{n_1, \dots, n_k \in \N} f \shuffle (c_{n_1, \dots, n_k} \cdot g_1^{n_1} \hadamard \cdots \hadamard g_k^{n_k}) = \\
            &= \sum_{n_1, \dots, n_k \in \N} c_{n_1, \dots, n_k} \cdot f \shuffle (g_1^{n_1} \hadamard \cdots \hadamard g_k^{n_k}) = \\
            &= \sum_{n_1, \dots, n_k \in \N} c_{n_1, \dots, n_k} \cdot (f \shuffle g_1)^{n_1} \hadamard \cdots \hadamard (f \shuffle g_k)^{n_k} = \\
            &= p(f \shuffle g_1, \dots, f \shuffle g_k). \qedhere
        \end{align*}
    \end{proof}
    Now suppose $f = f_1$ and $g = g_1$ are Hadamard finite,
    thus belonging to finitely generated difference Hadamard algebras
    \begin{align*}
        A := \poly \Q {f_1, \dots, f_m}_{\hadamard}, \quad \text{resp.}, \quad
        B := \poly \Q {g_1, \dots, g_n}_{\hadamard}.
    \end{align*}
    By rescaling if necessary, we can assume without loss of generality
    that all constant terms of the generators are either zero or one $f_i(\e), g_j(\e) \in \set{0, 1}$.
    Now consider the finitely generated Hadamard algebra
    \begin{align*}
        C := \polyof \Q {f_i \shuffle g_j} {1 \leq i \leq m, 1 \leq j \leq n}_{\hadamard}.
    \end{align*}
    Clearly $f_1 \shuffle g_1$ belongs to $C$.
    We need to show that $C$ is closed under left derivatives.
    Recall that $f_i^a = p_i^a(f_1, \dots, f_m)$ and $g_j^a = q_j^a(g_1, \dots, g_n)$
    for polynomials $p_i^a, q_j^a$ without constant term $p_i^a(0) = q_j^a(0) = 0$.
    Take now the left derivative of an arbitrary generator of $C$, and we have
    \begin{align*}
        &\deriveleft a (f_i \shuffle g_j)
            = f_i^a \shuffle g_j + f_i \shuffle g_j^a = \\
            &= p_i^a(f_1, \dots, f_m) \shuffle g_j + f_i \shuffle q_j^a(g_1, \dots, g_n) = \\
            &= p_i^a(f_1 \shuffle g_j, \dots, f_m \shuffle g_j) + q_j^a(f_i \shuffle g_1, \dots, f_i \shuffle g_n),
    \end{align*}
    where in the last equality we have applied the last claim~\cref{eq:last claim}.
    It follows that $\deriveleft a (f_i \shuffle g_j) \in C$, as required.
\end{proof}

\subsection{Polynomial vs.~Hadamard automata}

In this section, we give more details about the relationship
between polynomial and Hadamard automata,
expanding on~\cref{rem:Hadamard and polynomial automata}.
A \emph{polynomial automaton}~\cite{BenediktDuffSharadWorrell:PolyAut:2017} is a tuple
\begin{align}
    \label{eq:polynomial automaton}
    A = \tuple{k, \Sigma, Q, q_I, \Delta, F}
\end{align}
where $k \in \N$ is the \emph{dimension},
$\Sigma$ is a finite \emph{input alphabet},
$Q = \Q^k$ is the set of \emph{states},
$q_I \in Q$ is the \emph{initial state},
$\Delta : \Sigma \to Q \to Q$ is the polynomial \emph{transition function},
and $F : Q \to \Q$ is the polynomial \emph{output function}.
For every input symbol $a \in \Sigma$,
the polynomial map $\Delta^a : Q^k \to Q^k$
is presented as a tuple of polynomials $\Delta^a = \tuple{\Delta^a_1, \dots, \Delta^a_k} \in \poly \Q k^k$,
inducing a polynomial action on states
\begin{align*}
    q \cdot a := \Delta^a(q) = \tuple{\Delta^a_1(q), \dots, \Delta^a_k(q)} \in Q,
    \quad\text{for all } q \in Q.
\end{align*}
Similarly, the polynomial output function $F$ is presented as a polynomial $F \in \poly \Q k$
inducing the corresponding polynomial map
$F(q) = F(q_1, \dots, q_k) \in \Q$ for every $q = \tuple{q_1, \dots, q_k} \in Q$.
The action of $\Sigma$ is extended to words $w \in \Sigma^*$ homomorphically:
$q \cdot \e := q$ and $q \cdot (a \cdot w) := (q \cdot a) \cdot w$.
The \emph{semantics} of a state $q \in Q$ is the series defined as follows:
\begin{align*}
    \sem q &\in \series \Q \Sigma \\
    \sem q (w) &:= F(q \cdot w), \quad \text{for every } w \in \Sigma^*.
\end{align*}
The semantics of the automaton $A$ is the series recognised by the initial state
$\sem A {} := \sem {q_I} {}$.
\begin{lemma}
    A series $f \in \series \Q \Sigma$ is recognisable by a polynomial automaton
    if, and only if, its reversal $f^R$ is Hadamard finite.
\end{lemma}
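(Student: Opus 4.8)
The plan is to exploit a duality between the two models. A polynomial automaton evolves a \emph{point} $q \in \Q^k$ forward along a word by composing the polynomial maps $\Delta^a = \tuple{\Delta^a_1, \dots, \Delta^a_k} : \Q^k \to \Q^k$, reading its value off at the end through the output polynomial. A Hadamard automaton instead evolves a \emph{polynomial configuration} $\alpha \in \poly \Q X$ by the endomorphisms $\Delta_a$, which act by precomposition (pullback), $\Delta_a \alpha = \alpha(\Delta_a X_1, \dots, \Delta_a X_k)$, reading its value off by evaluating the configuration at the point $\tuple{F X_1, \dots, F X_k}$. The component polynomials $\Delta^a_i$ play the same role in both models; what swaps is that the \emph{output polynomial} of the polynomial automaton becomes the \emph{initial configuration} of the Hadamard automaton, while the \emph{initial point} of the former becomes the \emph{evaluation point} (output function) of the latter. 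First I would record the two transition unfoldings: reading $w = a_1 \cdots a_n$, the polynomial automaton reaches the state $(\Delta^{a_n} \circ \cdots \circ \Delta^{a_1})(q_I)$, whereas a Hadamard configuration $\alpha$ becomes $\alpha \circ \Delta^{a_1} \circ \cdots \circ \Delta^{a_n}$ (reading the innermost map first in each case). The crucial point is that these composites run the letters in \emph{opposite} order, which is exactly the source of the reversal: with shared component maps, output polynomial $G$, and initial/evaluation point $p$, the polynomial-automaton value at $w$ equals the Hadamard value of the configuration $G$ at $w^R$.

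For the ``only if'' direction I would start from a polynomial automaton $\tuple{k, \Sigma, Q, q_I, \Delta, F}$ recognising $f$, with $q_I = \tuple{c_1, \dots, c_k}$, and build the Hadamard automaton $\tuple{\Sigma, X, \tilde F, \tilde \Delta}$ with $X = \set{X_1, \dots, X_k}$, transitions $\tilde \Delta_a X_i := \Delta^a_i$, and output $\tilde F X_i := c_i$. Using the homomorphism property of the Hadamard semantics (\cref{lem:Hadamard automata - properties of the semantics}) together with the composite computed above, the configuration $F(X_1, \dots, X_k)$ has semantics $\sem{F(X_1, \dots, X_k)} = f^R$. Since $\deriveleft a \sem{X_i} = \sem{\tilde \Delta_a X_i} = \Delta^a_i(\sem{X_1}, \dots, \sem{X_k})$, the series $\sem{X_1}, \dots, \sem{X_k}$ generate a finitely generated \emph{differential} Hadamard algebra containing $f^R = F(\sem{X_1}, \dots, \sem{X_k})$, so $f^R$ is Hadamard finite by~\cref{lem:Hadamard finite - working definition}.

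For the converse, given that $f^R$ is Hadamard finite I would invoke~\cref{lem:Hadamard automata and series} to obtain a Hadamard automaton $\tuple{\Sigma, X, F, \Delta}$ with $\sem{X_1} = f^R$, and then dualise: take the polynomial automaton of dimension $k$ with component transitions $\Delta^a_i := \Delta_a X_i$, initial state $q_I := \tuple{F X_1, \dots, F X_k}$, and output polynomial $\hat F := X_1$ (the first-coordinate projection). Running the same reversed-composition computation shows that its value at $w$ equals $\sem{X_1}$ at $w^R$, that is $f^R(w^R) = f(w)$, so the polynomial automaton recognises $f$.

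The only real obstacle is bookkeeping: one must verify the order in which the maps $\Delta^a$ compose in each model and confirm that the mismatch is precisely a reversal, and one must keep straight the role swap (output polynomial $\leftrightarrow$ initial configuration, initial point $\leftrightarrow$ evaluation point) between the covariant point dynamics and the contravariant pullback dynamics. Once the core identity relating the two semantics through $w \mapsto w^R$ is in place, both implications follow immediately from the homomorphism and closure lemmas already established.
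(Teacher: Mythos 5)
Your proposal is correct and follows essentially the same route as the paper: your series $\sem{X_i}$ in the forward direction are exactly the paper's generators $g_i(w) := \pi_i(q_I \cdot w^R)$, and both directions hinge on the same observation that the covariant state evolution of the polynomial automaton and the contravariant pullback of Hadamard configurations compose the letter maps in opposite orders, which is precisely the reversal. The only cosmetic difference is that you package the verification through an intermediate Hadamard automaton and the homomorphism lemma, whereas the paper defines the generators directly and checks the two conditions of~\cref{lem:Hadamard finite - working definition} by hand.
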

\noindent
Recall that $f^R$ is the series that maps a word $w \in \Sigma^*$ to $f(w^R)$,
where $w^R = a_n \cdots a_1$ is the reversal of $w = a_1 \cdots a_n$.
\begin{proof}
    For the ``only if'' direction,
    let $A$ be the polynomial automaton recognising $f$ as in~\cref{eq:polynomial automaton}.
    For every $1 \leq i \leq k$,
    consider the series $g_i \in \series \Q \Sigma$ defined as
    \begin{align*}
        g_i(w) := \pi_i(q_I \cdot w^R),
        \quad \text{for all } w \in \Sigma^*.
    \end{align*}
    In other words, $g_i(w)$ is the value of component $i$
    after reading $w^R$ from the initial state.
    Consider now the Hadamard algebra
    \begin{align*}
        A := \poly \Q {g_1, \dots, g_k}_{\hadamard} \subseteq \series \Q \Sigma.
    \end{align*}
    Thanks to~\cref{lem:Hadamard finite - working definition},
    this part of the proof is concluded by the following two claims.
    \begin{claim*}
        $f^R$ is in $A$.
    \end{claim*}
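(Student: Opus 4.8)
The plan is to exhibit $f^R$ directly as a polynomial expression in the generators $g_1, \dots, g_k$ with respect to the Hadamard product, which by definition of $A$ places it in $A$. No closure argument is needed beyond unfolding the definitions.

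First I would unfold $f^R$ in terms of the automaton. Since $f = \sem{q_I}$, we have $f(w) = F(q_I \cdot w)$, and therefore $f^R(w) = f(w^R) = F(q_I \cdot w^R)$ for every $w \in \Sigma^*$. Next I would connect the reached state to the generators: by the very definition $g_i(w) = \pi_i(q_I \cdot w^R)$, the tuple of generator values at $w$ is exactly the state reached by reading $w^R$, i.e. $(g_1(w), \dots, g_k(w)) = q_I \cdot w^R$. Substituting this into the previous identity yields the pointwise equality
\[
    f^R(w) = F\bigl(g_1(w), \dots, g_k(w)\bigr), \quad \text{for all } w \in \Sigma^*,
\]
where $F \in \poly \Q k$ is the polynomial output function.

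Finally I would reinterpret this pointwise identity as an identity \emph{inside} the Hadamard algebra. Writing $F = \sum_\alpha c_\alpha\, y^\alpha$ as a $\Q$-linear combination of monomials $y^\alpha = y_1^{\alpha_1} \cdots y_k^{\alpha_k}$, the key point is that the Hadamard product, addition, and scalar multiplication are all defined coefficient-wise, so the Hadamard power $g_i^{\alpha_i}$ evaluated at $w$ equals $g_i(w)^{\alpha_i}$, while the constant term $c_{\mathbf 0}$ is supplied by $c_{\mathbf 0} \cdot \one$ using the Hadamard identity $\one = \sum_{w} 1 \cdot w$. Hence $f^R = F(g_1, \dots, g_k)$ holds verbatim in the Hadamard algebra (products read as Hadamard products), and since this is a polynomial expression in the generators we conclude $f^R \in \poly \Q {g_1, \dots, g_k}_{\hadamard} = A$, as claimed. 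I expect no genuine obstacle here; the only point requiring care is this last translation, namely verifying that ordinary evaluation of $F$ over $\Q$ and evaluation in the Hadamard algebra coincide, which is immediate precisely because every Hadamard operation acts coefficient-wise and the unit $\one$ is the all-ones series.
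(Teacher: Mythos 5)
Your proof is correct and follows essentially the same route as the paper: unfold $f^R(w) = F(q_I \cdot w^R)$, identify the reached state with the tuple $(g_1(w), \dots, g_k(w))$, and conclude $f^R = F(g_1, \dots, g_k) \in A$. The only difference is that you spell out the (indeed immediate) translation from the pointwise identity to the identity in the Hadamard algebra, which the paper leaves implicit.
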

    \begin{proof}
        By definition of $f$, we have $f = \sem {q_I} {}$.
        Consequently for every $w \in \Sigma^*$ we have
        \begin{align*}
            f^R(w)
                &= F(q_I \cdot w^R)
                = F(\pi_1(q_I \cdot w^R), \dots, \pi_k(q_I \cdot w^R)) = \\
                &= F(g_1(w), \dots, g_k(w)),
        \end{align*}
        and thus $f^R = F(g_1, \dots, g_k)$ is in $A$,
        as required.
    \end{proof}

    \begin{claim*}
        For every $a \in \Sigma$ and $1 \leq i \leq k$,
        we have $\deriveleft a g_i \in A$.
    \end{claim*}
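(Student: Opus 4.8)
The plan is to compute $\deriveleft a g_i$ coefficient-by-coefficient and then recognise the result as a Hadamard-polynomial expression in the generators $g_1, \dots, g_k$. First I would unfold the definition of the left derivative: for every word $w \in \Sigma^*$ we have $(\deriveleft a g_i)(w) = g_i(a \cdot w)$. The crucial observation is that reversal turns this prepended letter into an appended one, since $(a \cdot w)^R = w^R \cdot a$. Substituting the definition $g_i(u) = \pi_i(q_I \cdot u^R)$ then yields $(\deriveleft a g_i)(w) = \pi_i(q_I \cdot (w^R \cdot a))$.

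Next I would exploit the fact that the automaton's action is extended homomorphically, so that $q_I \cdot (w^R \cdot a) = (q_I \cdot w^R) \cdot a = \Delta^a(q_I \cdot w^R)$. Writing $q := q_I \cdot w^R$ and recalling that its $j$-th component is exactly $\pi_j(q_I \cdot w^R) = g_j(w)$, I obtain $(\deriveleft a g_i)(w) = \Delta^a_i(g_1(w), \dots, g_k(w))$, where $\Delta^a_i \in \poly \Q k$ is the $i$-th polynomial component of the transition map $\Delta^a$.

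Finally I would translate this pointwise identity into the Hadamard algebra. Since the Hadamard product is defined coefficient-wise, evaluating a polynomial $p$ at the series $g_1, \dots, g_k$ (using $\hadamard$ for multiplication and $\one$ for the constant term) produces at each word $w$ precisely the value $p(g_1(w), \dots, g_k(w))$. Hence the identity above says exactly that $\deriveleft a g_i = \Delta^a_i(g_1, \dots, g_k)$ as Hadamard series, exhibiting $\deriveleft a g_i$ as an element of $A = \poly \Q {g_1, \dots, g_k}_{\hadamard}$, as required.

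The step I expect to require the most care is the alignment between the reversal and the direction in which the automaton composes transitions: the left derivative prepends a letter to the input, whereas a forward run of the polynomial automaton appends transitions. It is exactly the reversal built into the definition of $g_i$ that reconciles these two directions, and getting this bookkeeping right — rather than any deep difficulty — is the heart of the argument.
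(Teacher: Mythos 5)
Your proposal is correct and follows essentially the same route as the paper's proof: unfold $(\deriveleft a g_i)(w) = g_i(a\cdot w) = \pi_i(q_I \cdot w^R \cdot a) = \Delta^a_i(g_1(w),\dots,g_k(w))$ and read off the pointwise identity as $\deriveleft a g_i = \Delta^a_i(g_1,\dots,g_k)$ in the Hadamard algebra $A$. The reversal bookkeeping you flag is indeed the only point of care, and you handle it exactly as the paper does.
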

    \begin{proof}
        By the definition of $g_i$, for every $w \in \Sigma^*$ we have
        \begin{align*}
            (\deriveleft a g_i)(w)
                &= g_i(a \cdot w)
                = \pi_i(q_I \cdot w^R \cdot a)
                = \Delta^a_i(q_I \cdot w^R) = \\
                &= \Delta^a_i(g_1(w), \dots, g_k(w)),
        \end{align*}
        and thus $\deriveleft a g_i = \Delta^a_i(g_1, \dots, g_k)$
        is in $A$, as required.
    \end{proof}

    For the ``if'' direction, let $g \in \series \Q \Sigma$
    belong to a finitely generated Hadamard algebra $A := \poly \Q {g_1, \dots, g_k}_{\hadamard}$
    closed under left derivatives $\deriveleft a$ (with $a \in \Sigma$).
    Since $g \in A$, we can write $g = F(g_1, \dots, g_k)$
    for some polynomial $F \in \poly \Q k$.
    For every $1 \leq i \leq k$ and $a \in \Sigma$,
    the series $\deriveleft a g_i$ is in $A$ and thus it can be written as
    $\Delta^a_i(g_1, \dots, g_k)$ for some polynomial $\Delta^a_i \in \poly \Q k$.
    Finally, let $q_I := \tuple{g_1(\e), \dots, g_k(\e)} \in \Q^k$ be the tuple of constant terms of the generators.
    This provides us with the data required to construct a polynomial automaton
    of dimension $k$, as in~\cref{eq:polynomial automaton}.
    The proof is concluded with the following two claims.
    \begin{claim*}
        For every $1 \leq i \leq k$ and $w \in \Sigma^*$, we have
        \begin{align*}
            g_i(w) = \pi_i(q_I \cdot w^R).
        \end{align*}
    \end{claim*}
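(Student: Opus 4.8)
The plan is to prove the claim by induction on the length of $w$, peeling off its first letter and exploiting that the Hadamard product evaluates pointwise. For the base case $w = \e$ one has $w^R = \e$, hence $q_I \cdot \e = q_I$, and the definition $q_I = \tuple{g_1(\e), \dots, g_k(\e)}$ gives $\pi_i(q_I) = g_i(\e)$, as required.

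For the inductive step I would write $w = a \cdot v$ with $a \in \Sigma$ and $v \in \Sigma^*$, so that $w^R = v^R \cdot a$. Since the action respects concatenation of words, $q_I \cdot w^R = (q_I \cdot v^R) \cdot a$. The induction hypothesis applied to $v$ yields $q_I \cdot v^R = \tuple{g_1(v), \dots, g_k(v)}$, and applying the transition for $a$ gives $\pi_i\tuple{(q_I \cdot v^R) \cdot a} = \Delta^a_i(g_1(v), \dots, g_k(v))$.

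It remains to identify this value with $g_i(w)$, and this is the only step requiring care. The key is that evaluating a Hadamard polynomial at a fixed word reduces to evaluating the underlying ordinary polynomial at the coordinate values: because the Hadamard product is element-wise, $\coefficient v {p(g_1, \dots, g_k)} = p(g_1(v), \dots, g_k(v))$ for every $p \in \poly \Q k$. Taking $p = \Delta^a_i$ and recalling that $\deriveleft a g_i = \Delta^a_i(g_1, \dots, g_k)$, we obtain $\Delta^a_i(g_1(v), \dots, g_k(v)) = \coefficient v {\deriveleft a g_i} = g_i(a \cdot v) = g_i(w)$, which closes the induction. I expect this element-wise evaluation identity to be the only nonroutine ingredient; everything else is a direct unfolding of the definitions of reversal, the polynomial action, and the left derivative.
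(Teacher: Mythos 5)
Your proof is correct and follows essentially the same route as the paper's: induction on the length of $w$, peeling off the first letter, using $\deriveleft a g_i = \Delta^a_i(g_1,\dots,g_k)$ together with the element-wise evaluation of Hadamard-polynomial expressions, $\coefficient v {p(g_1,\dots,g_k)} = p(g_1(v),\dots,g_k(v))$. The only difference is cosmetic — you run the chain of equalities from $\pi_i(q_I \cdot w^R)$ toward $g_i(w)$ rather than the other way around — and you correctly identify the element-wise evaluation identity as the one nonroutine step, which the paper uses implicitly.
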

    \begin{proof}
        We proceed by induction on $w$.
        In the base case $w = \e$, by definition we have
        $\pi_i(q_I \cdot \e^R) = \pi_i(q_I) = g_i(\e)$.
        In the inductive case, we have
        \begin{align*}
            g_i(a \cdot w)
                &=\coefficient w \deriveleft a g_i
                = \coefficient w \Delta^a_i(g_1, \dots, g_k) = \\
                &= \Delta^a_i(g_1(w), \dots, g_k(w)) = \\
                &= \Delta^a_i(\pi_1(q_I \cdot w^R), \dots, \pi_k(q_I \cdot w^R)) = \\
                &= \Delta^a_i(q_I \cdot w^R)
                = \pi_i(q_I \cdot w^R \cdot a) = \\
                &= \pi_i (q_I \cdot (a \cdot w)^R).
            \qedhere
        \end{align*}
    \end{proof}
    \begin{claim*}
        We have $g = \sem {q_I}^R$.
    \end{claim*}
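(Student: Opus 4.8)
The plan is to check the identity $g = \sem{q_I}^R$ pointwise, so that it collapses onto the previous claim, which has already carried out the inductive work. First I would fix an arbitrary word $w \in \Sigma^*$ and unfold the right-hand side: by the definition of reversal and of the semantics of a polynomial automaton, $\sem{q_I}^R(w) = \sem{q_I}(w^R) = F(q_I \cdot w^R)$. It therefore suffices to prove $g(w) = F(q_I \cdot w^R)$.

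Next I would expand the left-hand side using the representation $g = F(g_1, \dots, g_k)$ fixed earlier, where $F$ is the output polynomial and the $g_i$ are the generators. Since $F$ is evaluated on the scalar values $g_i(w)$, we get $g(w) = F(g_1(w), \dots, g_k(w))$. At this point I would invoke the previous claim, $g_i(w) = \pi_i(q_I \cdot w^R)$ for each $i$, which says exactly that the tuple $\tuple{g_1(w), \dots, g_k(w)}$ is the state $q_I \cdot w^R$ reached by the automaton on input $w^R$.

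Substituting then yields $g(w) = F(\pi_1(q_I \cdot w^R), \dots, \pi_k(q_I \cdot w^R)) = F(q_I \cdot w^R)$, matching the right-hand side; since $w$ was arbitrary, $g = \sem{q_I}^R$ and the ``if'' direction is complete. I do not expect any genuine obstacle here: all the substance sits in the preceding claim (proved by induction on $w$), and this final step is merely a chain of definitional rewrites of $F$, of the automaton semantics, and of the reversal operation.
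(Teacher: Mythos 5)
Your proof is correct and follows exactly the paper's argument: fix an arbitrary $w$, expand $g(w) = F(g_1(w), \dots, g_k(w))$, substitute $g_i(w) = \pi_i(q_I \cdot w^R)$ from the previous claim, and conclude $g(w) = F(q_I \cdot w^R) = \sem{q_I}^R(w)$. No differences worth noting.
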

    \begin{proof}
        Consider an arbitrary $w \in \Sigma^*$,
        and we need to show $g(w) = F(q_I \cdot w^R)$.
        Thanks to the previous claim, we have
        \begin{align*}
            g(w)
                &= F(g_1(w), \dots, g_k(w)) = \\
                &= F(\pi_1(q_I \cdot w^R), \dots, \pi_k(q_I \cdot w^R)) = \\
                &= F(q_I \cdot w^R).
            \qedhere
        \end{align*}
    \end{proof}
    \qedhere
\end{proof}

\subsection{Number sequences}
\label{app:polyrec}

In the main text we have mentioned closure properties of polyrec sequences,
namely closure under scalar product, sum, Hadamard product, shifts, sections, and diagonals.
The last two operations have not been defined yet.
A \emph{section} of a {$d$-variate} sequence $f : \N^d \to \Q$
is an {$e$-variate} sequence with $e < d$ which is obtained from $f$
by fixing one or more coordinates to a constant value~\cite[Definition 3.1]{Lipshitz:D-finite:JA:1989}.
For instance, if $f(n_1, n_2)$ is a bivariate sequence, then $g(n_1) := f(n_1, 7)$ is a section thereof,
obtained by fixing the second coordinate to be $n_2 := 7$.
Since sections are defined element-wise,
they commute with the operations of scalar product, sum, Hadamard product,
and shifts not involving the coordinates being fixed.
It follows that sections of polyrec sequences are polyrec.

\begin{example}
    For instance, if $\shift 1 f = \one - f^2$,
    then its section $g$ satisfies the same equation $\shift 1 g = \one - g^2$,
    preserving the polyrec format
    (information about $\shift 2 f$ is discarded when fixing the second coordinate).
\end{example}

The other operation that we have not yet defined is that of taking diagonals.
A \emph{diagonal} of a {$d$-variate} sequence $f : \N^d \to \Q$
is an $e$-variate sequence with $e < d$ which is obtained from $f$
by requiring a subset of the coordinates to be equal
and then by projecting them to a single coordinate\-
\cite[Definition 2.6]{Lipshitz:D-finite:JA:1989}.
For instance, the sequence $h : \N^2 \to \Q$ \st
\begin{align*}
    h(n_1, n_3) := f(n_1, n_1, n_3),
    \quad\text{for all } n_1, n_3 \in \N,
\end{align*}
is the diagonal of $f : \N^3 \to \Q$
obtained by identifying the first two coordinates.
Again, this is an element-wise operation,
and thus it commutes with the operations of scalar product, sum, Hadamard product,
and shifts not involving the coordinates being identified.
Consequently, diagonals of polyrec sequences are polyrec.

\begin{example}
    For instance, if $\shift 1 f = \one - f^2$, $\shift 2 f = f^3$, and $\shift 3 f = \one + 2 \cdot f$
    then the diagonal $h$ satisfies $\shift 1 h = (\one - h^2)^3$ and $\shift 2 h = \one + 2 \cdot h$,
    preserving the polyrec format.
    %
    % Note that this is a conditional statement
    % and we do not claim that such an $f$ exists.
    %
    (By composing in the other order, we get $\shift 1 h = 1 - (h^3)^2$, but one equation suffices:
    If $f$ exists, then it will satisfy both equations.)
\end{example}

\section{Shuffle automata and series}

In this section, we provide more details about~\cref{sec:shuffle}.

\lemRightDerivationShuffleDer*
\begin{proof}
    We have already observed linearity in~\cref{sec:preliminaries}.
    Regarding~\cref{eq:derive shuffle right}, we prove it by coinduction.
    First of all, the constant term of both sides is the same, since
    \begin{align*}
        \coefficient \e {\deriveright a (f \shuffle g)}
        &= \coefficient a {(f \shuffle g)} = f_a \cdot g_\e + f_\e \cdot g_a, \text{ and } \\ 
        \coefficient \e {(\deriveright a f \shuffle g + f \shuffle \deriveright a g)}
        &= \coefficient \e {(\deriveright a f \shuffle g)} + \coefficient \e {(f \shuffle \deriveright a g)} = \\
        &= \coefficient \e {\deriveright a f} \cdot \coefficient \e g + \coefficient \e f \cdot \coefficient \e \deriveright a g = \\
        &= f_a \cdot g_\e + f_\e \cdot g_a.
    \end{align*}
    The proof is concluded by showing that, for every $b \in \Sigma$,
    the $b$-left derivative $\deriveleft b$ of both sides is the same:
    \begin{align*}
        \deriveleft b \deriveright a (f \shuffle g)
        = \deriveleft b (\deriveright a f \shuffle g + f \shuffle \deriveright a g).
    \end{align*}
    We indeed have
    \begin{align*}
        &\deriveleft b \deriveright a (f \shuffle g) = \\
        &= \deriveright a \deriveleft b (f \shuffle g)
            && \text{(by~\cref{lem:derive left right commutativity})} \\ 
        &= \deriveright a (\deriveleft b f \shuffle g + f \shuffle \deriveleft b g)
            && \text{(by~\cref{eq:shuffle:step})} \\ 
        &= \deriveright a (\deriveleft b f \shuffle g) + \deriveright a (f \shuffle \deriveleft b g)
            && \text{(by lin.)} \\
        &= \deriveright a \deriveleft b f \shuffle g + \deriveleft b f \shuffle \deriveright a g + {} \\
        & \qquad + \deriveright a f \shuffle \deriveleft b g + f \shuffle \deriveright a \deriveleft b g
            && \text{(by coind.)} \\
        &= \deriveleft b \deriveright a f \shuffle g + \deriveleft b f \shuffle \deriveright a g + {} \\
        & \qquad + \deriveright a f \shuffle \deriveleft b g + f \shuffle \deriveleft b \deriveright a g
            && \text{(by~\cref{lem:derive left right commutativity})} \\
        &= \deriveleft b (\deriveright a f \shuffle g) + \deriveleft b (f \shuffle \deriveright a g)
            && \text{(by~\cref{eq:shuffle:step})} \\ 
        &= \deriveleft b (\deriveright a f \shuffle g + f \shuffle \deriveright a g)
            && \text{(by lin.)}. \qedhere
    \end{align*}
\end{proof}

Since we prove that shuffle series are a prevariety,
one may wonder whether they are even a variety, in the sense of Reutenauer.
Recall that varieties are prevarieties
satisfying the following additional closure condition:
\begin{enumerate}[label=\textbf{(V.\arabic*)}]
    \setcounter{enumi}{2}
    \item \label{prevariety:C}
    For every series $f \in \SS_\Sigma$
    and algebra homomorphism%
    \footnote{
        The corresponding requirement in \cite[Sec.~III.1]{Reutenauer_1980}
        demands closure \wrt~algebra homomorphisms of the form $\varphi \in \series \Q \Gamma \to \series \Q \Sigma$.
        However $f \circ \varphi$ may not be defined when $\varphi$ produces series with infinite supports.
        For instance take $f = \varphi(a) = 1 + a + a^2 + \cdots$.
        Then $(f \circ \varphi)(a) = \inner f {\varphi(a)} = 1 + 1 + \cdots$ is not defined.
        For this reason $\varphi$ needs to be restricted to be a homomorphism of series with finite supports.
    }
    $\varphi : \ncpoly \Q \Gamma \to \ncpoly \Q \Sigma$,
    % $\varphi \in \series \Q \Gamma \to \series \Q \Sigma$,
    % $\varphi : \Gamma^* \to \series \Q \Sigma$,
    the series $f \circ \varphi$ is in $\SS_\Gamma$.
\end{enumerate}
In the next remark we rule out this possibility,
thanks to a simple growth argument.

\begin{remark}[Shuffle-finite series are not a variety]
    The class of shuffle-finite series is not a variety of series.
    This can be shown by a simple growth argument.
    Consider a unary input alphabet $\Sigma = \set a$.
    Univariate shuffle-finite series $f = \sum_{n \in \N} f_n \cdot a^n$
    are in bijective correspondence with univariate \CDA~power series
    $\widetilde f(x) = \sum_{n \in \N} f_n \cdot \frac {x^n} {n!}$~\cite[Lemma 25]{Clemente:CONCUR:2024}.
    For the latter class it is known that $f_n \in \bigO {\alpha^n \cdot n!}$ for some constant $\alpha > 0$
    \cite[Theorem 3.(i)]{BergeronReutenauer:EJC:1990}.
    Consider the shuffle-finite series \st~$f(\e) := 1$
    and $\deriveleft a f := f \shuffle f$.
    It can be checked that $f(a^n) = n!$,
    and thus $\widetilde f(x) = \sum_{n \in \N} f_n \cdot \frac {x^n} {n!}$
    with $f_n = n!$.
    Take the homomorphism $\varphi : \Sigma^* \to \Sigma^*$
    defined by $\varphi(a) := aa$.
    We have that the composition series $g := f \circ \varphi$
    satisfies $g(a^n) = f(a^{2n}) = (2n)!$.
    The corresponding power series is $\widetilde g(x) = \sum_{n \in \N} (2n)! \cdot \frac {x^n} {n!}$,
    which grows too fast in order to be \CDA.
    Consequently $g$ is not shuffle finite and thus shuffle-finite series are not a variety.
\end{remark}
%!TEX root main.tex
\section{Infiltration automata and series}
\label{app:infiltration}

In this section, we provide more details about~\cref{sec:infiltration}.
We introduce a model of weighted computation, called \emph{infiltration automata},
and the associated class of \emph{infiltration-finite series}.
All results in this section are new.

\subsection{Infiltration product}

Recall that the \emph{infiltration product} of two series $f, g \in \series \Q \Sigma$,
denoted by $f \infiltration g$,
is defined coinductively as follows:
\begin{align}
    \tag{${\infiltration}$-$\e$}
    \label{eq:infiltration:base'}
    \coefficient \e {(f \infiltration g)}
        &= f_\e \cdot g_\e, \\
    \tag{${\infiltration}$-$\deriveleft a$}
    \label{eq:infiltration:step'}
        \deriveleft a (f \infiltration g)
        &= \deriveleft a f \infiltration g
            + f \infiltration \deriveleft a g
            + \deriveleft a f \infiltration \deriveleft a g,
        \ \forall a \in \Sigma.
\end{align}
To see why this uniquely defines a series $f \infiltration g$,
we can reason by induction on the length of words.
The first rule gives us the value for the constant term $\coefficient \e (f \infiltration g)$.
For nonempty words, we have
\begin{align*}
    &\coefficient {a \cdot w} (f \infiltration g) = \\
    &= \coefficient w (\deriveleft a (f \infiltration g)) = \\
    &= \coefficient w (\deriveleft a f \infiltration g
        + f \infiltration \deriveleft a g
        + \deriveleft a f \infiltration \deriveleft a g) = \\
    &= \coefficient w (\deriveleft a f \infiltration g)
    + \coefficient w (f \infiltration \deriveleft a g)
    + \coefficient w (\deriveleft a f \infiltration \deriveleft a g),
\end{align*}
where the three coefficients for word $w$ are known by the inductive assumption.
For example,
\begin{align*}
    \coefficient {ab} (f \infiltration g)
    &= \coefficient b (f^a \infiltration g + f \infiltration g^a + f^a \infiltration g^a) = \\
    &= \coefficient \e (f^{ab} \infiltration g + f^a \infiltration g^b + f^{ab} \infiltration g^b + \\
    &\quad f^b \infiltration g^a + f \infiltration g^{ab} + f^b \infiltration g^{ab} + \\
    &\quad f^{ab} \infiltration g^a + f^a \infiltration g^{ab} + f^{ab} \infiltration g^{ab}) \\
    &= f_{ab} \cdot g + f_a \cdot g_b + f_{ab} \cdot g_b + \\
    &\quad f_b \cdot g_a + f \cdot g_{ab} + f_b \cdot g_{ab} + \\
    &\quad f_{ab} \cdot g_a + f_a \cdot g_{ab} + f_{ab} \cdot g_{ab}.
\end{align*}
%For instance, $ab \infiltration a = 2aab + aba + ab$.
For brevity, we use the convention that $f^a$ denotes $\deriveleft a f$,
and similarly for longer words.
Infiltration product is an associative and commutative operation on series,
with identity the series $1 \cdot \e$ (the same identity as for the shuffle product).
We call the resulting structure $\series \Q \Sigma_{\infiltration} := \tuple{\series \Q \Sigma; {+}, {\infiltration}}$
the \emph{infiltration algebra} of series.
%
%The rule~\cref{eq:infiltration:step} makes $\deriveleft a$ an infiltration of this algebra.

% The infiltration product has been introduced by Chen, Fox, and Lyndon~\cite{ChenFoxLyndon:AM:1958},
% in the context of the \emph{free differential calculus} of Fox~\cite{Fox:AM:1953}.
% %
% While the shuffle product models the free interleaving of behaviours,
% the infiltration product models the \emph{synchronising interleaving}
% where processes are allowed (but not forced) to jointly perform the input actions
% (thanks to the additional term $\deriveleft a f \infiltration \deriveleft a g$ in~\cref{eq:infiltration:step}).
% %
% For instance, while $a b \shuffle a = 2aab + aba$,
% we have $ab \infiltration a = 2aab + aba + ab$.

\begin{remark}
    We have chosen a coinductive approach to the definition of infiltration product
    since it brings to the foreground the similarity with Hadamard and shuffle products.
    Alternatively, one can define the infiltration product on finite words by induction on their length,
    and then lift it to series by linearity and continuity~\cite[page 126]{Lothaire:CUP:1997}.
\end{remark}

\begin{remark}[Differential algebra of series]
    We note that the infiltration algebra has a natural differential structure.
    Indeed, for every $a \in \Sigma$, consider the mapping $\sigma_a$ on series \st
    \begin{align}
        \sigma_a f := f + \deriveleft a f,
        \text{ for every } f \in \series \Q \Sigma.
    \end{align}
    In short, we have $\sigma_a := 1 + \deriveleft a$.
    It is easy, albeit lengthy, to check that $\sigma_a$ is an infiltration algebra endomorphism:
    \begin{align*}
        \sigma_a (c \cdot f)
            &= c \cdot f + \deriveleft a (c \cdot f) = \\
            &= c \cdot f + c \cdot \deriveleft a f = \\
            &= c \cdot (f + \deriveleft a f) = \\
            &= c \cdot \sigma_a f, \\
        \sigma_a (f + g)
            &= f + g + \deriveleft a (f + g) = \\
            &= f + \deriveleft a f + g + \deriveleft a g = \\
            &= \sigma_a f + \sigma_a g. \\
        \sigma_a (f \infiltration g)
            &= f \infiltration g + \deriveleft a (f \infiltration g) = \\
            &= f \infiltration g + f^a \infiltration g + \sigma_a f \infiltration g^a = \\
            &= \sigma_a f \infiltration g + \sigma_a f \infiltration g^a = \\
            &= \sigma_a f \infiltration (g + g^a) = \\
            &= \sigma_a f \infiltration \sigma_a g.
    \end{align*}
    Using the definition of $\sigma_a$,
    the rule~\cref{eq:infiltration:step} can be rewritten as follows:
    \begin{align}
        \tag{${\infiltration}$-$\deriveleft a$'}
        \label{eq:infiltration:step''}
            \deriveleft a (f \infiltration g)
            &:= \deriveleft a f \infiltration g
                + \sigma_a f \infiltration \deriveleft a g
            \ \forall a \in \Sigma.
    \end{align}
    Using differential algebra terminology,
    this makes $\deriveleft a$ a \emph{$\sigma_a$-derivation} of the infiltration algebra,
    and thus $\tuple{\series \Q \Sigma; {+}, {\infiltration}, {\sigma}, \deriveleft {}}$
    is a $\sigma$-differential algebra.
    This should be compared to the fact that $\deriveleft a$ is a derivation
    (\ie, for the identity endomorphism)
    of the shuffle algebra.
    This helps explaining why the infiltration product behaves more similarly to
    the shuffle product than to the Hadamard one.
\end{remark}

\subsection{Infiltration automata}
\label{sec:infiltration automata}

\paragraph{Syntax}

An \emph{infiltration automaton} is a tuple
$A = \tuple{\Sigma, X, F, \Delta}$
where $\Sigma$ is a finite \emph{input alphabet},
$X = \set{X_1,  \dots X_k}$ is a finite set of \emph{nonterminal symbols},
$F : X \to \Q$ is the \emph{output function},
and $\Delta : \Sigma \to X \to \poly \Q X$ is the \emph{transition function}.
Thus, at the syntactic level infiltration automata contain the same information as Hadamard and shuffle automata.
However, their semantics makes them different.

\paragraph{Semantics}
A \emph{configuration} is a polynomial $\alpha \in \poly \Q N$.
For every input symbol $a \in \Sigma$,
the transition function $\Delta_a : X \to \poly \Q X$
is extended to the unique linear function $\Delta_a : \poly \Q X \to \poly \Q X$
satisfying the following product rule, for all $\alpha, \beta \in \poly \Q X$:
%A mapping $f : \poly \Q X \to \poly \Q X$ is an \emph{infiltration}
%if it is linear and satisfies the following product rule,
%:
%
\begin{align}
    \label{eq:infiltration}
    \tag{infiltration}
    f(\alpha \cdot \beta) =
        f(\alpha) \cdot \beta +
        \alpha \cdot f(\beta) +
        f(\alpha) \cdot f(\beta).
%    \text{ for all } \alpha, \beta \in \poly \Q X.
\end{align}
%
%Once we have prescribed how $f$ is defined on variables $X$,
%the rule~\cref{eq:infiltration} ensures that $f$ extends \emph{uniquely} to an infiltration.
%
In turn, this allows us to extend transitions from single letters
to all finite input words homomorphically:
For every configuration $\alpha \in \poly \Q X$,
input word $w \in \Sigma^*$ and letter $a \in \Sigma$,
we have $\Delta_\e \alpha := \alpha$
and $\Delta_{a \cdot w} \alpha := \Delta_w \Delta_a \alpha$.
We have all ingredients to define the \emph{semantics} of a configuration $\alpha \in \poly \Q X$,
which is the series $\sem \alpha \in \series \Q \Sigma$ defined as follows:
\begin{align*}
    \sem \alpha_w &:= F \Delta_w \alpha,
    \text{ for all } w \in \Sigma^*.
\end{align*}
Here, $F$ is extended from nonterminals to configurations homomorphically
$F(\alpha \cdot \beta) = F \alpha \cdot F \beta$ (i.e., polynomial evaluation).
The semantics of an infiltration automaton $A$
is the series recognised by its distinguished nonterminal $\sem {X_1}$.

\begin{remark}[Differential algebra of polynomials]
    For every $a \in \Sigma$,
    consider the mapping $S_a : \poly \Q X \to \poly \Q X$ \st
    \begin{align}
        S_a(\alpha) := \alpha + \Delta_a(\alpha),
        \text{ for all } \alpha \in \poly \Q X.
    \end{align}
    In short, $S_a := 1 + \Delta_a$.
    It can be checked that this is an endomorphism of the polynomial ring and that
    \begin{align}
        \Delta_a(\alpha \cdot \beta) = \Delta_a \alpha \cdot \beta + S_a(\alpha) \cdot \Delta_a \beta.
    \end{align}
    Thus $\Delta_a$ is an $S_a$-derivation of the polynomial ring for every $a \in \Sigma$,
    and thus $\tuple{\poly \Q X; {+}, {\cdot}, (S_a)_{a \in \Sigma}, (\Delta_a)_{a \in \Sigma}}$
    is an $S$-differential algebra.
\end{remark}

\begin{lemma}[Properties of the semantics]
    \label{lem:infiltration automata - properties of the semantics}
    The semantic function $\sem \_$ is a differential homomorphism
    from the $S$-differential algebra of polynomials
    to the $\sigma$-differential infiltration algebra of series.
    In other words, 
    \begin{align}
        \label{eq:infiltration:sem:scalar-product}
        \sem {c \cdot \alpha}
            &= c \cdot \sem \alpha, \\
        \label{eq:infiltration:sem:sum}
            \sem {\alpha + \beta}
            &= \sem \alpha + \sem \beta, \\
        \label{eq:infiltration:sem:product}
            \sem {\alpha \cdot \beta}
            &= \sem \alpha \infiltration \sem \beta, \\
        \label{eq:infiltration:sem:endomorphism}
            \sem {S_a \alpha}
                &= \sigma_a \sem \alpha, \\
            \label{eq:infiltration:sem:derivation}
        \sem {\Delta_a \alpha}
            &= \deriveleft a \sem \alpha.
    \end{align}
\end{lemma}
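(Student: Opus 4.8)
The plan is to follow the template of \cref{lem:Hadamard automata - properties of the semantics}, establishing the linear and differential identities directly and reserving a coinductive argument for the product identity. First I would dispatch \cref{eq:infiltration:sem:scalar-product}, \cref{eq:infiltration:sem:sum}, and \cref{eq:infiltration:sem:derivation} straight from the definitions. Each $\Delta_a$ is linear, hence so is the composite $\Delta_w$; and $F$, being polynomial evaluation, is a $\Q$-algebra homomorphism and in particular linear. Combining these gives $\coefficient w {\sem{c\cdot\alpha}} = F\Delta_w(c\cdot\alpha) = c\cdot F\Delta_w\alpha$, and likewise for sums, which is \cref{eq:infiltration:sem:scalar-product} and \cref{eq:infiltration:sem:sum}. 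For \cref{eq:infiltration:sem:derivation}, the relation $\Delta_{a\cdot w} = \Delta_w\Delta_a$ yields $\coefficient w {\sem{\Delta_a\alpha}} = F\Delta_w\Delta_a\alpha = F\Delta_{a\cdot w}\alpha = \coefficient {a\cdot w} {\sem\alpha} = \coefficient w {\deriveleft a \sem\alpha}$. The endomorphism identity \cref{eq:infiltration:sem:endomorphism} then follows by unwinding $S_a = 1 + \Delta_a$ and $\sigma_a = 1 + \deriveleft a$ and applying the two identities just proved: $\sem{S_a\alpha} = \sem\alpha + \sem{\Delta_a\alpha} = \sem\alpha + \deriveleft a \sem\alpha = \sigma_a \sem\alpha$.

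The main step is \cref{eq:infiltration:sem:product}, which I would prove by coinduction on words, quantified over all $\alpha,\beta$. The base case uses multiplicativity of $F$: $\coefficient \e {\sem{\alpha\cdot\beta}} = F(\alpha\cdot\beta) = F\alpha\cdot F\beta = \coefficient \e {\sem\alpha\infiltration\sem\beta}$. For the coinductive step I would rewrite the polynomial product rule in its $S_a$-derivation form $\Delta_a(\alpha\cdot\beta) = \Delta_a\alpha\cdot\beta + S_a\alpha\cdot\Delta_a\beta$, and then apply \cref{eq:infiltration:sem:derivation}, \cref{eq:infiltration:sem:sum}, the coinductive hypothesis for the pairs $(\Delta_a\alpha,\beta)$ and $(S_a\alpha,\Delta_a\beta)$, and finally \cref{eq:infiltration:sem:endomorphism}:
\begin{align*}
    \deriveleft a {\sem{\alpha\cdot\beta}}
        &= \sem{\Delta_a(\alpha\cdot\beta)}
         = \sem{\Delta_a\alpha\cdot\beta} + \sem{S_a\alpha\cdot\Delta_a\beta} \\
        &= \sem{\Delta_a\alpha}\infiltration\sem{\beta} + \sem{S_a\alpha}\infiltration\sem{\Delta_a\beta} \\
        &= \deriveleft a {\sem\alpha}\infiltration\sem\beta + \sigma_a {\sem\alpha}\infiltration\deriveleft a {\sem\beta}.
\end{align*}
By the rewritten infiltration rule \cref{eq:infiltration:step''}, the last line is precisely $\deriveleft a (\sem\alpha\infiltration\sem\beta)$, so the two series have the same constant term and the same left derivatives, closing the coinduction.

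The hard part will be making the two derivation structures line up in the coinductive step. The naive three-term form \cref{eq:infiltration:step'} does not match the two-term polynomial product rule, so the crucial observation is that both can be cast in $\sigma$-derivation form, where the synchronising summand $\deriveleft a f \infiltration \deriveleft a g$ is absorbed into $\sigma_a f \infiltration \deriveleft a g$. The endomorphism identity \cref{eq:infiltration:sem:endomorphism} is exactly what transports $S_a$ on the polynomial side to $\sigma_a$ on the series side, making the two expressions agree term by term; without first isolating \cref{eq:infiltration:sem:endomorphism} the bookkeeping of the extra synchronising term would be the principal source of difficulty.
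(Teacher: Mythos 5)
Your proposal is correct and follows essentially the same route as the paper: the linear identities and \cref{eq:infiltration:sem:derivation} from the definitions, \cref{eq:infiltration:sem:endomorphism} by unwinding $S_a = 1 + \Delta_a$, and \cref{eq:infiltration:sem:product} by coinduction using the $S_a$-derivation form of the product rule matched against the $\sigma_a$-derivation form \cref{eq:infiltration:step''} of the infiltration step. The paper's coinductive step is line-for-line the computation you describe.
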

\begin{proof}
    Property~\cref{eq:infiltration:sem:derivation}
    holds by definition of the semantics.
    The other proofs follow natural coinductive arguments.
    Properties~\cref{eq:infiltration:sem:scalar-product} and~\cref{eq:infiltration:sem:sum}
    follow from linearity of $\deriveleft a$ and $\Delta_a$.
    From these, \cref{eq:infiltration:sem:endomorphism} follows easily:
    \begin{align*}
        \sem {S_a \alpha}
        &= \sem {\alpha + \Delta_a \alpha}
        = \sem \alpha + \sem {\Delta_a \alpha}
        = \sem \alpha + \deriveleft a \sem \alpha
        = \sigma_a \sem \alpha.
    \end{align*}

    For the last property~\cref{eq:infiltration:sem:product}, we proceed coinductively.
    Clearly, the constant term of $\sem {\alpha \cdot \beta}$ and $\sem \alpha \infiltration \sem \beta$
    is the same, namely $F \alpha \cdot F \beta$.
    For the coinductive step, we show that the left derivatives
    of $\sem {\alpha \cdot \beta}$ and $\sem \alpha \infiltration \sem \beta$ coincide.
    For readability, we write $\alpha^a$ for $\Delta_a \alpha$ when $\alpha \in \poly \Q X$
    and $f^a$ for $\deriveleft a f$ when $f \in \series \Q \Sigma$.
    \begin{align*}
        \deriveleft a \sem {\alpha \cdot \beta}
            &= \sem {\Delta_a (\alpha \cdot \beta)} = \\
            &= \sem {\alpha^a \cdot \beta + S_a \alpha \cdot \beta^a} = \\
            &= \sem {\alpha^a \cdot \beta} + \sem {S_a \alpha \cdot \beta^a} = \\
            &= \sem {\alpha^a} \infiltration \sem \beta + \sem {S_a \alpha} \infiltration \sem {\beta^a} = \\
            &= (\deriveleft a {\sem \alpha}) \infiltration \sem \beta + (\sigma_a \sem \alpha )\infiltration \deriveleft a \sem \beta = \\
            &= \deriveleft a (\sem \alpha \infiltration \sem \beta). \qedhere
    \end{align*}
\end{proof}

\subsection{Infiltration-finite series}

In order to gain insights about the class of series recognised by infiltration automata,
it is convenient to introduce the same class from a different, semantic point of view.
For series $f_1, \dots, f_k \in \series \Q \Sigma$,
we denote by $\poly \Q {f_1, \dots, f_k}_{\infiltration}$
the infiltration subalgebra generated by the $f_i$'s.
%
% This algebra is usually not free,
% since there may be nontrivial relations between the generators.
%
An infiltration algebra is \emph{differential}
if it is closed under $\deriveleft a$, for all $a \in \Sigma$.

\begin{definition}
    A series is \emph{infiltration finite} if it belongs to a
    finitely generated differential infiltration algebra of series.
\end{definition}
The following lemma unpacks the definition above
and provides our working definition for infiltration finite series.
Its proof is similar to~\cref{lem:Hadamard finite - working definition}.
\begin{lemma}
    \label{lem:infiltration finite - working definition}
    A series $f \in \series \Q \Sigma$ is infiltration finite
    iff there are generators $f_1, \dots, f_k \in \series \Q \Sigma$ \st:
    \begin{enumerate}
        \item $f \in \poly \Q {f_1, \dots, f_k}_{\infiltration}$ and
        \item $\deriveleft a f_i \in \poly \Q {f_1, \dots, f_k}_{\infiltration}$
        for every $a \in \Sigma$ and $1 \leq i \leq k$.
    \end{enumerate}
    Moreover, we can assume \wlg~that $f = f_1$.
\end{lemma}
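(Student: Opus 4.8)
The plan is to prove the two implications separately, mirroring the proof of \cref{lem:Hadamard finite - working definition}, with the one genuine difference arising from the fact that $\deriveleft a$ is a $\sigma_a$-derivation rather than an endomorphism of the infiltration algebra.

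For the forward implication, suppose $f$ is infiltration finite, so that $f$ lies in some finitely generated differential infiltration algebra $\poly \Q {f_1, \dots, f_k}_{\infiltration}$. Then condition~(1) holds by assumption, and condition~(2) holds precisely because this algebra is closed under every $\deriveleft a$. For the ``moreover'' clause, one simply adjoins $f$ itself to the list of generators: this leaves the algebra unchanged (as $f$ was already a member), preserves both conditions, and yields a presentation with $f = f_1$.

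The substance is in the backward implication. Assuming generators $f_1, \dots, f_k$ satisfying~(1) and~(2), I would set $A := \poly \Q {f_1, \dots, f_k}_{\infiltration}$ and prove that $A$ is a \emph{differential} infiltration algebra, i.e.\ closed under each $\deriveleft a$; together with~(1) this exhibits $f$ as infiltration finite. I would establish closure by structural induction over the way elements of $A$ are built from the generators using scalar multiples, sums, and infiltration products. The scalar and sum cases are immediate from linearity of $\deriveleft a$ together with the inductive hypothesis, and the base case $g = f_i$ is exactly hypothesis~(2). The only interesting case is a product $g = g' \infiltration g''$ with $g', g'' \in A$, where I would invoke the rewritten product rule~\cref{eq:infiltration:step''},
\begin{align*}
    \deriveleft a (g' \infiltration g'') = \deriveleft a g' \infiltration g'' + \sigma_a g' \infiltration \deriveleft a g'',
\end{align*}
with $\sigma_a = 1 + \deriveleft a$.

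The key point — and the place where the infiltration case departs from the Hadamard one — is that this formula involves the auxiliary endomorphism $\sigma_a$, so $\deriveleft a$ does not simply commute past the polynomial expression as it does in the Hadamard proof. One must therefore also check that $\sigma_a$ preserves $A$, and this is precisely where the induction pays off: by the inductive hypothesis $\deriveleft a g'$ and $\deriveleft a g''$ lie in $A$, whence $\sigma_a g' = g' + \deriveleft a g' \in A$ as well (using $g' \in A$). Both summands on the right-hand side are then infiltration products of elements of $A$, hence lie in $A$, giving $\deriveleft a g \in A$ and closing the induction. I expect this interplay between $\deriveleft a$ and $\sigma_a$ — rather than any hard computation — to be the main conceptual obstacle: one must resist mimicking the Hadamard argument (where $\deriveleft a$ distributes straight through $p(f_1, \dots, f_k)$) and instead carry the $\sigma_a$-terms through the induction.
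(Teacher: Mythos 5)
Your proof is correct, and it follows the overall structure the paper intends: the paper itself gives no standalone argument for this lemma, saying only that ``the proof is similar to \cref{lem:Hadamard finite - working definition}.'' Where you genuinely add value is in the backward implication. A verbatim transcription of the Hadamard argument would write $g = p(f_1,\dots,f_k)$ and conclude $\deriveleft a g = p(\deriveleft a f_1,\dots,\deriveleft a f_k)$ ``since $\deriveleft a$ is a homomorphism'' --- but that step is false in the infiltration algebra, where $\deriveleft a$ is only a $\sigma_a$-derivation. Your structural induction, carrying the term $\sigma_a g' = g' + \deriveleft a g' \in A$ through the product case via \cref{eq:infiltration:step''}, is exactly the adaptation needed, and it closes correctly because the inductive hypothesis supplies $\deriveleft a g' \in A$. (Equivalently, one could use the three-term rule \cref{eq:infiltration:step'} directly, since all three summands are infiltration products of elements of $A$; the two are interchangeable here.) The only cosmetic omission is the constant base cases $\zero$ and the identity $1\cdot\e$, whose left derivatives are $\zero \in A$, but this is trivial. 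In short: same route as the paper, with the one nontrivial step --- which the paper elides --- correctly identified and filled in.
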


% \begin{remark}
%     In the definition of infiltration-finite series,
%     we could have required that $f = f_1$ is itself a generator
%     and we would have obtaned the same class of series.
% \end{remark}

The following characterisation connects infiltration automata and infiltration-finite series.
Thanks to it, we can regard infiltration automata as finite syntactic presentations of infiltration-finite series.
\begin{restatable}{lemma}{lemInfiltrationAutomataInfiltrationFinite}
    \label{lem:infiltration automata = infiltration finite}
    A series is recognised by an infiltration automaton
    if, and only if, it is infiltration finite.
\end{restatable}
\begin{proof}
   The proof is analogous to that of~\cref{lem:Hadamard automata and series},
   relying on~\cref{lem:infiltration automata - properties of the semantics,lem:infiltration finite - working definition}.
\end{proof}

\subsection{Closure properties}

In this section, we show that the series recognised by infiltration automata
(equivalently, the infiltration-finite series by~\cref{lem:infiltration automata = infiltration finite})
have several pleasant closure properties.
There are two ways of presenting and proving such closure properties.
One way proceeds by syntactically manipulating infiltration automata.
We have opted for another, more elegant approach
relying on infiltration finiteness.

\begin{restatable}{lemma}{lemInfiltrationClosure}
    \label{lem:infiltration closure}
    %The class of infiltration-finite series is effectively closed under infiltration product.
    The class of infiltration-finite series
    is an effective differential infiltration algebra.
    In particular, infiltration-finite series are effectively closed under
    scalar product $c \cdot f$,
    sum $f + g$,
    infiltration product $f \infiltration g$,
    and left derivative $\deriveleft a f$.
\end{restatable}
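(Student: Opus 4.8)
The plan is to mirror the proof of~\cref{lem:Hadamard closure properties}, replacing the Hadamard working definition by the infiltration one~(\cref{lem:infiltration finite - working definition}). The essential ingredient is that, although $\deriveleft a$ is not an endomorphism for the infiltration product, it is a $\sigma_a$-derivation with $\sigma_a = 1 + \deriveleft a$: this is exactly the rewritten rule~\cref{eq:infiltration:step''}. Consequently, once a finite set of generators is closed under left derivatives (the second condition of~\cref{lem:infiltration finite - working definition}), the whole generated infiltration algebra is automatically differential, because $\deriveleft a$ of any infiltration product of generators reduces, via the $\sigma_a$-derivation rule, to terms built from the $\deriveleft a f_i$ and from $\sigma_a f_i = f_i + \deriveleft a f_i$, all of which already lie in the algebra. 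This is precisely the fact established inside~\cref{lem:infiltration finite - working definition}, so I would invoke it directly.

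Given this, closure under scalar product is immediate: the same generators witness infiltration-finiteness of $c \cdot f$. For sum and infiltration product I would take $f \in A := \poly \Q {f_1, \dots, f_k}_{\infiltration}$ and $g \in B := \poly \Q {g_1, \dots, g_\ell}_{\infiltration}$, both differential, and form the merged algebra $C := \poly \Q {f_1, \dots, f_k, g_1, \dots, g_\ell}_{\infiltration}$. Both $f + g$ and $f \infiltration g$ lie in $C$ since $C$ is an infiltration algebra, and $C$ satisfies the second condition of~\cref{lem:infiltration finite - working definition} because the left derivative of each generator already lies in $A$ or in $B$, hence in $C$. By that lemma $C$ is differential, so $f + g$ and $f \infiltration g$ are infiltration finite, with explicit generating sets obtained simply by concatenating the two lists of generators.

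Closure under left derivative is then immediate: if $f \in A$ with $A$ differential, then $\deriveleft a f \in A$ by the very definition of a differential infiltration algebra. Effectiveness holds throughout because each operation is realised by a purely syntactic manipulation of the finite list of generators together with the polynomials expressing the $\deriveleft a f_i$ in terms of them; these polynomials constitute the finite presentation, and merging, rescaling, or differentiating them is algorithmic. The only point genuinely requiring care --- and the one I expect to be the main obstacle --- is the inductive argument underlying~\cref{lem:infiltration finite - working definition}, namely that the $\sigma_a$-derivation rule keeps $\deriveleft a$ inside the generated algebra (the appearance of $\sigma_a = 1 + \deriveleft a$, rather than the bare $\deriveleft a$ of the shuffle case or the endomorphism of the Hadamard case, is what one must verify propagates correctly). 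Once that is in hand, every item on the list follows by the same bookkeeping as in the Hadamard proof, with no further novelty.
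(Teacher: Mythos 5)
Your proposal is correct and follows essentially the same route as the paper: merge the two generator lists into a single finitely generated infiltration algebra, observe that the left derivatives of all generators already lie in the respective sub-algebras, and invoke \cref{lem:infiltration finite - working definition} to conclude that the merged algebra is differential. The paper's proof only writes out the infiltration-product case and declares the others "similar," so your more explicit treatment of scalar product, sum, left derivative, and effectiveness (and your identification of the $\sigma_a$-derivation rule as the point where the inductive closure argument must be checked) is a faithful, slightly more detailed rendering of the same argument.
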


\begin{proof}
    We show closure under infiltration product,
    the other properties being similar.
    Let $f, g$ be infiltration finite
    with generators $f_1, \dots, f_k$, resp., $g_1, \dots, g_\ell$
    (by applying~\cref{lem:infiltration finite - working definition}).
    Consider the finitely generated infiltration algebra
    \begin{align*}
        A := \poly \Q {f_1, \dots, f_k, g_1, \dots, g_\ell}_{\infiltration}.
    \end{align*}
    Clearly $f \infiltration g \in A$.
    Moreover $A$ is closed under $\deriveleft a$, for all $a \in \Sigma$,
    since $\deriveleft a f_i$ is in $\poly \Q {f_1, \dots, f_k}_{\infiltration}$ for all $1 \leq i \leq k$
    and $\deriveleft a g_j$ is in $\poly \Q {g_1, \dots, g_\ell}_{\infiltration}$ for all $1 \leq j \leq \ell$.
\end{proof}

% \begin{lemma}[Infiltration inverse]
%     \label{lem:infiltration inverse}
%     A series $f \in \series \Q \Sigma$ has an \emph{infiltration inverse} if, and only if,
%     $f(\e) \neq 0$ ...
% \end{lemma}
% \begin{proof}
% \end{proof}

\begin{lemma}
    \label{lem:derive right infiltration}
    The right derivative operator $\deriveright a$ (with $a \in \Sigma$)
    is linear and satisfies the following property:
    \begin{align}
        \label{eq:derive right infiltration}
        \deriveright a (f \infiltration g)
        &= \deriveright a f \infiltration g
            + f \infiltration \deriveright a g
            + \deriveright a f \infiltration \deriveright a g,
        \quad \forall a \in \Sigma.
    \end{align}
\end{lemma}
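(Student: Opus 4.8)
The plan is to mirror the coinductive proof of the analogous statement for the shuffle product (\cref{lem:right derivation shuffle derivation}). Linearity of $\deriveright a$ is immediate from its defining formula $\coefficient w {\deriveright a f} = \coefficient {wa} f$, so the real content is the product rule \cref{eq:derive right infiltration}, which I would establish by coinduction: I check that both sides of \cref{eq:derive right infiltration} have the same constant term, and that they have the same $\deriveleft b$-derivative for every $b \in \Sigma$.

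For the base case a direct computation suffices. Using $\coefficient \e {\deriveright a h} = h_a$ together with the infiltration rules \cref{eq:infiltration:base} and \cref{eq:infiltration:step}, the constant term of $\deriveright a (f \infiltration g)$ is $\coefficient a {(f \infiltration g)} = f_a g_\e + f_\e g_a + f_a g_a$, while the constant term of the right-hand side is $(\deriveright a f)_\e g_\e + f_\e (\deriveright a g)_\e + (\deriveright a f)_\e (\deriveright a g)_\e$, which equals the same quantity $f_a g_\e + f_\e g_a + f_a g_a$. Hence the constant terms agree.

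For the coinductive step I would apply $\deriveleft b$ to both sides. On the left, \cref{lem:derive left right commutativity} lets me rewrite $\deriveleft b \deriveright a (f \infiltration g) = \deriveright a \deriveleft b (f \infiltration g)$; I then expand $\deriveleft b (f \infiltration g)$ by the left-derivative rule \cref{eq:infiltration:step} into three infiltration products and apply $\deriveright a$ to each via the coinductive hypothesis (namely \cref{eq:derive right infiltration} itself, applied one level down). On the right, linearity and \cref{eq:infiltration:step} expand $\deriveleft b$ of each of the three summands. Both expansions produce infiltration products in which the factors carry mixed derivatives $\deriveleft b$ and $\deriveright a$; after normalising every such factor using $\deriveleft b \circ \deriveright a = \deriveright a \circ \deriveleft b$ (again \cref{lem:derive left right commutativity}), the two sides are seen to coincide termwise.

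The only real difficulty is the bookkeeping: because the infiltration rule carries three terms (against two for shuffle and one for Hadamard), each side of the coinductive step unfolds into $3 \times 3 = 9$ summands, and one must verify that the two lists of nine terms match. This matching is routine once the commutation of left and right derivatives is used to put every factor into a canonical form, so no genuine obstacle arises; it is purely a matter of careful enumeration. As an alternative one could deduce \cref{eq:derive right infiltration} directly from the left-derivative rule \cref{eq:infiltration:step} through the reversal involution, using $\deriveright a f = (\deriveleft a f^R)^R$ together with $(f \infiltration g)^R = f^R \infiltration g^R$; this is shorter, but it first requires establishing that the infiltration product commutes with reversal, which I would again prove by an analogous coinduction.
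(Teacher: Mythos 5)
Your proposal is correct and follows essentially the same route as the paper: a coinduction in which the constant terms are checked directly and the $\deriveleft b$-derivatives of both sides are expanded via \cref{eq:infiltration:step}, the coinductive hypothesis, and \cref{lem:derive left right commutativity}, yielding nine matching summands on each side. The paper carries out exactly this nine-term bookkeeping explicitly; your alternative via reversal is not the route the paper takes.
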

\noindent
In other words, $\deriveright a$ is a $\sigma^R_a$-derivation
for the endomorphism $\sigma^R_a := 1 + \deriveright a$,
since we can rewrite~\cref{eq:derive right infiltration} as
\begin{align}
    \label{eq:derive right infiltration'}
    \deriveright a (f \infiltration g)
    &= \deriveright a f \infiltration g
        + \sigma^R_a f \infiltration \deriveright a g,
    \quad \forall a \in \Sigma.
\end{align}

\begin{proof}
    We proceed by coinduction.
    The constant terms of the two sides coincide:
    \begin{align*}
        &\coefficient \e (\deriveright a (f \infiltration g)) = \\
        &= \coefficient a (f \infiltration g) = \\
        &= \coefficient \e (\deriveleft a (f \infiltration g)) = \\
        &= \coefficient \e (f^a \infiltration g + f \infiltration g^a + f^a \infiltration g^a) = \\
        &= \coefficient \e (f^a \infiltration g) + \coefficient \e (f \infiltration g^a) + \coefficient \e (f^a \infiltration g^a) = \\
        &= f_a \cdot g_\e + f_\e \cdot g_a + f_a \cdot g_a = \\
        &= \coefficient \e (\deriveright a f \infiltration g) + \coefficient \e (f \infiltration \deriveright a g) + \coefficient \e (\deriveright a f \infiltration \deriveright a g) = \\
        &= \coefficient \e (\deriveright a f \infiltration g + f \infiltration \deriveright a g + \deriveright a f \infiltration \deriveright a g).
    \end{align*}
    Now consider an input symbol $b \in \Sigma$, and we have
    \begin{align*}
        &\deriveleft b \deriveright a (f \infiltration g) =
            &\text{(\cref{lem:derive left right commutativity})} \\
        &= \deriveright a \deriveleft b (f \infiltration g) =
            &\text{(def.~$\infiltration$)} \\
        &= \deriveright a (f^b \infiltration g + f \infiltration g^b + f^b \infiltration g^b) =
            &\text{(lin.~$\deriveright a$)} \\
        &= \deriveright a (f^b \infiltration g)
            + \deriveright a (f \infiltration g^b)
            + \deriveright a (f^b \infiltration g^b) =
            &\text{(coind.)} \\
        &=  \deriveright a f^b \infiltration g + f^b \infiltration \deriveright a g + \deriveright a f^b \infiltration \deriveright a g + \\
        &\ +\deriveright a f \infiltration g^b + f \infiltration \deriveright a g^b + \deriveright a f \infiltration \deriveright a g^b + \\
        &\ +\deriveright a f^b \infiltration g^b + f^b \infiltration \deriveright a g^b + \deriveright a f^b \infiltration \deriveright a g^b =
            &\text{(\cref{lem:derive left right commutativity})} \\
        &=  (\deriveright a f)^b \infiltration g + f^b \infiltration \deriveright a g + (\deriveright a f)^b \infiltration \deriveright a g + \\
        &\ +\deriveright a f \infiltration g^b + f \infiltration (\deriveright a g)^b + \deriveright a f \infiltration (\deriveright a g)^b + \\
        &\ +(\deriveright a f)^b \infiltration g^b + f^b \infiltration (\deriveright a g)^b + (\deriveright a f)^b \infiltration (\deriveright a g)^b =
            &\text{(def.~$\infiltration$)} \\
        &= \deriveleft b (\deriveright a f \infiltration g)
            + \deriveleft b (f \infiltration \deriveright a g)
            + \deriveleft b (\deriveright a f \infiltration \deriveright a g) =
            &\text{(lin.~$\deriveleft b$)} \\
        &= \deriveleft b (\deriveright a f \infiltration g
        + f \infiltration \deriveright a g
        + \deriveright a f \infiltration \deriveright a g). \qedhere
    \end{align*}
\end{proof}

\begin{restatable}{lemma}{lemInfiltrationClosureRightDerivative}
    \label{lem:infiltration closure - right derivative}
    The class of infiltration-finite series
    is effectively closed under right derivative $\deriveright a f$.
\end{restatable}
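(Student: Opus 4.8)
The plan is to mirror the proof of \cref{lem:shuffle finite right derivative}, substituting the infiltration product rule \cref{eq:derive right infiltration} for the shuffle one. By \cref{lem:infiltration finite - working definition}, I would fix generators $f_1, \dots, f_k$ of a differential infiltration algebra $A := \poly \Q {f_1, \dots, f_k}_{\infiltration}$ with $f \in A$, and fix $a \in \Sigma$. Since \cref{lem:derive right infiltration} shows that $\deriveright a$ is not an infiltration endomorphism but a $\sigma^R_a$-derivation, the right derivative of an infiltration product spawns terms involving both the original factors and their right derivatives; accordingly I would adjoin both families of generators, setting
\begin{align*}
    B := \poly \Q {f_1, \dots, f_k, \deriveright a f_1, \dots, \deriveright a f_k}_{\infiltration}.
\end{align*}
The goal is to show that $B$ is a finitely generated differential infiltration algebra containing $\deriveright a f$, which by \cref{lem:infiltration finite - working definition} certifies that $\deriveright a f$ is infiltration finite.

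The key step is the claim that $\deriveright a g \in B$ for every $g \in A$, which I would prove by induction on the structure of $g$ as an infiltration-polynomial expression in $f_1, \dots, f_k$. The base case $g = f_i$ and the additive case $g = g_1 + g_2$ are immediate by linearity, and the multiplicative case $g = g_1 \infiltration g_2$ follows by expanding \cref{eq:derive right infiltration} into the three summands $\deriveright a g_1 \infiltration g_2$, $g_1 \infiltration \deriveright a g_2$, and $\deriveright a g_1 \infiltration \deriveright a g_2$, each of which lies in $B$ by the inductive hypothesis together with $g_1, g_2 \in A \subseteq B$. Applying the claim to $g = f$ then gives $\deriveright a f \in B$.

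It remains to check that $B$ is closed under every left derivative $\deriveleft b$. For the old generators $f_i$ this is inherited from $A$, which is differential by assumption, so $\deriveleft b f_i \in A \subseteq B$. For the new generators $\deriveright a f_i$, I would commute the two derivatives using \cref{lem:derive left right commutativity}, writing $\deriveleft b \deriveright a f_i = \deriveright a \deriveleft b f_i$; since $\deriveleft b f_i \in A$, the claim yields $\deriveright a \deriveleft b f_i \in B$. Effectiveness follows because each of these manipulations produces explicit polynomial presentations from those of $f$.

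The main subtlety, exactly as in the shuffle case, is that $\deriveright a$ fails to be a genuine infiltration endomorphism (contrast \cref{lem:Hadamard closure properties}, where the Hadamard right derivative \emph{is} an endomorphism and one need not retain the original generators). The extra diagonal term $\deriveright a f \infiltration \deriveright a g$ in the infiltration product rule is what forces the enlarged generating set $B$, and verifying that this enlargement still closes under left derivatives — which hinges precisely on the commutation between left and right derivatives granted by \cref{lem:derive left right commutativity} — is the crux of the argument.
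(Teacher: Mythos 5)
Your proposal is correct and follows essentially the same route as the paper: the same enlarged algebra $B := \poly \Q {f_1, \dots, f_k, \deriveright a f_1, \dots, \deriveright a f_k}_{\infiltration}$, the same structural induction showing $\deriveright a$ maps $A$ into $B$ via the $\sigma^R_a$-derivation rule of \cref{lem:derive right infiltration}, and the same use of \cref{lem:derive left right commutativity} to verify that $B$ is closed under left derivatives. The only cosmetic difference is that the paper normalises $f = f_1$ so that $\deriveright a f$ is literally a generator of $B$, whereas you obtain it by applying your claim to $g = f$; both are fine.
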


\begin{proof}
    Let $f$ be an infiltration-finite series and fix an input symbol $a \in \Sigma$.
    By~\cref{lem:infiltration finite - working definition},
    there are generators $f_1, \dots, f_k$ with $f = f_1$ generating the $\deriveleft {}$-closed infiltration algebra
    \begin{align*}
        A := \poly \Q {f_1, \dots, f_k}_{\infiltration}.
    \end{align*}
    %
    % As in the proof of~\cref{},
    We adjoin new generators $\deriveright a f_1, \dots, \deriveright a f_k$,
    obtaining the larger infiltration algebra
    \begin{align*}
        B := \poly \Q {f_1, \dots, f_k, \deriveright a f_1, \dots, \deriveright a f_k}_{\infiltration}.
    \end{align*}
    By construction, $\deriveright a f = \deriveright a f_1$ is in $B$.
    It remains to show that $B$ is $\deriveleft {}$-closed.
    Since $A$ is $\deriveleft {}$-closed, it suffices to show this for the new generators.
    So consider $\deriveleft b \deriveright a f_i$
    for an arbitrary some $b \in \Sigma$ and $1 \leq i \leq k$.
    By~\cref{lem:derive left right commutativity}, left and right derivatives commute,
    so we have $\deriveleft b \deriveright a f_i = \deriveright a \deriveleft b f_i$.
    Since $A$ is $\deriveleft {}$-closed, $\deriveleft b f_i$ is in $A$,
    and thus we can write $\deriveleft b f_i = p_i^b(f_1, \dots, f_k)$
    for some polynomial $p_i^a$ (where product is interpreted as infiltration product).
    Thanks to linearity of the right derivative operator and~\cref{lem:derive right infiltration},
    a proof by structural induction on polynomials
    shows that $\deriveright a (p_i^a(f_1, \dots, f_k))$ can be written as a polynomial expression
    $q_i^{a, b}(f_1, \dots, f_k, \deriveright a f_1, \dots, \deriveright a f_k)$.
    We thus have, as required,
    \begin{align*}
        \deriveleft b \deriveright a f_i
            = \deriveright a \deriveleft b f_i
            = q_i^{a, b}(f_1, \dots, f_k, \deriveright a f_1, \dots, \deriveright a f_k)
        \in B.
        \qedhere
    \end{align*}
\end{proof}

\begin{theorem}
    \label{thm:infiltration finite - effective prevariety}
    The class of infiltration-finite series is an effective prevariety.
\end{theorem}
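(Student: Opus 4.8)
The plan is to verify the three defining conditions of an effective prevariety for the class of infiltration-finite series, mirroring the structure already established for the Hadamard case in \cref{thm:Hadamard prevariety}. First I would observe that the vector-space closure condition \cref{prevariety:A} follows immediately from \cref{lem:infiltration closure}, which states that infiltration-finite series form an effective differential infiltration algebra; in particular they are effectively closed under scalar product and addition. The left-derivative half of condition \cref{prevariety:B} is likewise supplied by \cref{lem:infiltration closure}, since a differential infiltration algebra is by definition closed under $\deriveleft a$ for every $a \in \Sigma$.

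The remaining, genuinely nontrivial half of \cref{prevariety:B} is closure under right derivatives, and this is precisely what \cref{lem:infiltration closure - right derivative} provides: if $f$ is infiltration finite, then $\deriveright a f$ is effectively infiltration finite. So conditions \cref{prevariety:A} and \cref{prevariety:B} are fully covered by \cref{lem:infiltration closure,lem:infiltration closure - right derivative}. What then remains for the effectiveness requirements is (1) finite presentability, (2) algorithmic realisability of the closure operations, and (3) decidability of equality. Items (1) and (2) follow from \cref{lem:infiltration automata = infiltration finite}, which identifies infiltration-finite series with those recognised by infiltration automata, together with the \emph{effective} qualifiers in \cref{lem:infiltration closure,lem:infiltration closure - right derivative}, meaning each closure construction manipulates finite presentations algorithmically.

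The step I expect to carry the real weight is decidability of the equality problem for infiltration-finite series, since this is the one ingredient not inherited purely formally from the preceding closure lemmas. The natural route is to reduce equality for infiltration automata to equality for a model already known to be decidable. Given the $\sigma$-differential structure noted in the remarks---where $\deriveleft a$ is a $\sigma_a$-derivation with $\sigma_a = 1 + \deriveleft a$---one can attempt to linearly transform an infiltration automaton into a shuffle automaton by a change of basis that absorbs the extra synchronising term $\deriveleft a f \infiltration \deriveleft a g$ in \cref{eq:infiltration:step} into a pure shuffle product. Concretely, the relation $\sigma_a = 1 + \deriveleft a$ suggests translating the infiltration product into the shuffle product after applying the invertible operator built from the $\sigma_a$, reducing infiltration equality to shuffle equality, which is decidable by \cite[Theorem 1]{Clemente:CONCUR:2024}.

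Thus the proof assembles as follows: cite \cref{lem:infiltration closure} for \cref{prevariety:A} and the left-derivative part of \cref{prevariety:B}; cite \cref{lem:infiltration closure - right derivative} for the right-derivative part; cite \cref{lem:infiltration automata = infiltration finite} for finite presentability and effectivity of the operations; and settle equality by the reduction to shuffle automata just sketched. The main obstacle is making the infiltration-to-shuffle translation precise and verifying that it preserves semantics exactly, so that zeroness (equivalently equality) transfers; once that reduction is in hand, the effective prevariety property of infiltration-finite series follows directly.
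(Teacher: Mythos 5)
Your handling of conditions \cref{prevariety:A} and \cref{prevariety:B} and of the effectiveness of the closure operations coincides with the paper's proof: both cite \cref{lem:infiltration closure} for the vector-space structure and left derivatives, \cref{lem:infiltration closure - right derivative} for right derivatives, and \cref{lem:infiltration automata = infiltration finite} for finite presentations. The divergence --- and the gap --- is in the one step you yourself identify as carrying the real weight, namely decidability of equality. The paper does not reduce to shuffle automata at all: it proves decidability directly (\cref{thm:infiltration finite - decidable equality}) by adapting the ideal-chain method, considering the ascending chain of polynomial ideals $I_n = \idealof{\Delta_w X_1}{w \in \Sigma^{\leq n}}$, using the $S_a$-derivation property of the transitions to show $I_{n+1}$ is determined by $I_n$ (\cref{lem:infiltration ideals}), invoking Hilbert's finite basis theorem to obtain a computable stabilisation index $N$, and then checking zeroness on all words of length $\leq N$ (\cref{lem:infiltration zeroness characterisation}).

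Your proposed reduction is not established and, as stated, rests on a false premise: the operator $\sigma_a = 1 + \deriveleft a$ is \emph{not} invertible on $\series \Q \Sigma$. For instance, the nonzero series $f$ with $f_w = (-1)^{\length w}$ satisfies $\coefficient w {(f + \deriveleft a f)} = (-1)^{\length w} + (-1)^{\length w + 1} = 0$, so $\sigma_a$ has a nontrivial kernel, and a formal inverse $1 - \deriveleft a + \deriveleft a^2 - \cdots$ does not converge coefficientwise. Moreover, the classical subword (Magnus) transform that does intertwine the infiltration product with another product relates it to the \emph{Hadamard} product rather than the shuffle product, so even the target of your reduction is doubtful. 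Finally, even granting a semantics-preserving transform, you would still need to show that it carries a finitely generated differential infiltration algebra to a finitely generated differential shuffle algebra and that a shuffle automaton can be computed from an infiltration automaton --- none of which is sketched. As it stands, the decidability of equality, and hence the effectiveness half of the theorem, is not proved by your argument.
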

\begin{proof}
    Closure property~\cref{prevariety:A} and closure under left derivatives
    have been established in~\cref{lem:infiltration closure}.
    Closure property~\cref{prevariety:B}
    has been established in~\cref{lem:infiltration closure - right derivative}.
    We will establish decidability of equality in~\cref{thm:infiltration finite - decidable equality}.
\end{proof}

We conclude this section by showing that
infiltration-finite series over disjoint alphabets are closed under shuffle product.
This will be used in~\cref{lem:infiltration commutativity hard}
to show that the commutativity problem generalises the equality problem.
\begin{restatable}{lemma}{lemInfiltrationShuffle}
    \label{lem:infiltration shuffle}
    Let $\Sigma, \Gamma$ be two finite and disjoint alphabets $\Sigma \cap \Gamma = \emptyset$.
    If $f \in \series \Q \Sigma$ and $g \in \series \Q \Gamma$ are infiltration finite,
    then $f \shuffle g$ is infiltration finite.
\end{restatable}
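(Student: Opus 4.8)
The plan is to reduce the shuffle product to the infiltration product, exploiting disjointness of the alphabets, and then invoke closure under infiltration product (\cref{lem:infiltration closure}). The crucial observation is that over disjoint alphabets the shuffle and infiltration products \emph{coincide} on operands drawn from the two sub-alphabets: for every $f \in \series \Q \Sigma$ and $g \in \series \Q \Gamma$ one has $f \shuffle g = f \infiltration g$. Intuitively, the synchronising term $\deriveleft a f \infiltration \deriveleft a g$ that distinguishes infiltration from shuffle can never contribute, since no letter $a$ lies in both $\Sigma$ and $\Gamma$, so at least one of $\deriveleft a f$, $\deriveleft a g$ vanishes. This is the clean shortcut that the Hadamard product does not enjoy.

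First I would record this coincidence by a straightforward bisimulation. Consider the relation $R := \setof{(f \shuffle g, f \infiltration g)}{f \in \series \Q \Sigma, g \in \series \Q \Gamma}$. The constant terms agree, both equal to $f_\e \cdot g_\e$. For the derivative step fix a letter $a$ and distinguish two cases. If $a \in \Sigma$ then $\deriveleft a g = \zero$, so the shuffle rule~\cref{eq:shuffle:step} gives $\deriveleft a(f \shuffle g) = \deriveleft a f \shuffle g$, while the infiltration rule~\cref{eq:infiltration:step'} collapses to $\deriveleft a(f \infiltration g) = \deriveleft a f \infiltration g$; symmetrically, if $a \in \Gamma$ then $\deriveleft a f = \zero$ and both products differentiate to the pair $(f \shuffle \deriveleft a g,\, f \infiltration \deriveleft a g)$. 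In either case the two derivatives form a new pair in $R$, so $R$ is a bisimulation, hence contained in equality, which proves $f \shuffle g = f \infiltration g$.

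Second, I would observe that $f$ and $g$, although presented over $\Sigma$ and $\Gamma$, are in fact infiltration finite over the joint alphabet $\Sigma \cup \Gamma$. Indeed, if $f_1, \dots, f_m$ are generators for $f$ as in \cref{lem:infiltration finite - working definition}, the finitely generated infiltration algebra they span is closed under $\deriveleft a$ for every $a \in \Sigma \cup \Gamma$: for $a \in \Sigma$ this is the given closure, and for $a \in \Gamma$ one has $\deriveleft a f_i = \zero$, since $f_i$ is supported on $\Sigma^*$ and thus $\coefficient w {\deriveleft a f_i} = \coefficient {aw} {f_i} = 0$. The same reasoning applies to $g$. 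Hence both $f$ and $g$ are infiltration finite over $\Sigma \cup \Gamma$.

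Combining the two steps, $f \shuffle g = f \infiltration g$ is an infiltration product of two infiltration-finite series over the common alphabet $\Sigma \cup \Gamma$, and is therefore itself infiltration finite by closure under infiltration product (\cref{lem:infiltration closure}); concatenating the two generating sets yields an explicit finite presentation, so the construction is moreover effective. The main conceptual point — and the only place where disjointness is used — is the product coincidence of the first step, which I expect to be the crux. It is worth stressing that this shortcut is special to infiltration: for the Hadamard product $\hadamard \neq \shuffle$ even over disjoint alphabets (a shuffle spreads letters across positions, whereas a Hadamard product of $\Sigma^*$- and $\Gamma^*$-series is supported on $\e$ alone), which is precisely why \cref{lem:Hadamard shuffle} required the more laborious distributivity argument rather than an analogous one-line reduction.
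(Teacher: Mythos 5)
Your proposal is correct and follows essentially the same route as the paper: the key claim that $f \shuffle g = f \infiltration g$ over disjoint alphabets, proved by the same coinductive case split on whether $a \in \Sigma$ or $a \in \Gamma$, followed by an appeal to closure under infiltration product (\cref{lem:infiltration closure}). Your second step, explicitly checking that $f$ and $g$ remain infiltration finite over the joint alphabet $\Sigma \cup \Gamma$, is a detail the paper leaves implicit, and spelling it out is a small improvement rather than a deviation.
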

\noindent
The proof is different from the case of Hadamard-finite series~(\cref{lem:Hadamard shuffle}),
and relies on the fact that over disjoint alphabets shuffle and infiltration products coincide.
\begin{proof}
    We will use the fact that infiltration and shuffle product coincide on series with disjoint alphabets.
    \begin{claim*}
        For every series $f \in \series \Q \Sigma$ and $g\in \series \Q \Gamma$
        over disjoint alphabets $\Sigma \cap \Gamma = \emptyset$,
        \begin{align}
            \label{eq:shuffle=infiltration}
            f \shuffle g = f \infiltration g.
        \end{align}
    \end{claim*}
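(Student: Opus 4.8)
The plan is to prove the identity by coinduction (equivalently, by induction on the length of words), exploiting the fact that over disjoint alphabets the ``synchronising'' correction term in the infiltration recurrence always vanishes. I would establish the statement for \emph{every} pair $f \in \series \Q \Sigma$, $g \in \series \Q \Gamma$ with $\Sigma \cap \Gamma = \emptyset$, carrying the disjointness condition through the coinductive step.

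First I would dispatch the base case: by \cref{eq:shuffle:base} and \cref{eq:infiltration:base'}, both $\coefficient \e {(f \shuffle g)}$ and $\coefficient \e {(f \infiltration g)}$ equal $f_\e \cdot g_\e$, so the constant terms agree. For the coinductive step, fix a letter $a$ and compare $\deriveleft a (f \shuffle g)$ with $\deriveleft a (f \infiltration g)$. The crucial observation is that if $a \in \Sigma$ then $\deriveleft a g = \zero$, because $\coefficient w {\deriveleft a g} = \coefficient {aw} g = 0$ for every $w$ (the word $aw$ contains $a \notin \Gamma$, hence lies outside $\support g \subseteq \Gamma^*$); symmetrically, if $a \in \Gamma$ then $\deriveleft a f = \zero$, and if $a$ lies outside $\Sigma \cup \Gamma$ both derivatives vanish. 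In every case the extra infiltration summand $\deriveleft a f \infiltration \deriveleft a g$ in \cref{eq:infiltration:step'} is $\zero$, and exactly one of the two shuffle summands in \cref{eq:shuffle:step} survives. For instance, when $a \in \Sigma$ we obtain $\deriveleft a (f \shuffle g) = \deriveleft a f \shuffle g$ and $\deriveleft a (f \infiltration g) = \deriveleft a f \infiltration g$.

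Since $\deriveleft a f$ is still supported in $\Sigma^*$ while $g$ remains supported in $\Gamma^*$, the pair $(\deriveleft a f, g)$ again has disjoint alphabets, so the coinductive hypothesis gives $\deriveleft a f \shuffle g = \deriveleft a f \infiltration g$; the case $a \in \Gamma$ is symmetric. Thus all left derivatives agree, and by the coinductive principle for series the two products coincide, establishing \cref{eq:shuffle=infiltration}.

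The argument is essentially routine once disjointness is used to kill the infiltration correction term; the only point requiring a little care is verifying that the disjoint-alphabet invariant is preserved under taking left derivatives, which it is, since $\deriveleft a$ never enlarges the alphabet appearing in a series' support, so that the coinductive hypothesis legitimately applies to the derived pairs.
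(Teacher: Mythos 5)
Your proof is correct and follows essentially the same route as the paper: coinduction on the two products, with disjointness forcing $\deriveleft a g = \zero$ (or $\deriveleft a f = \zero$) so that the extra infiltration summand vanishes and both recurrences collapse to the same single term. Your explicit check that the disjoint-support invariant is preserved under left derivatives is a detail the paper leaves implicit, but otherwise the arguments coincide.
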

    \begin{claimproof}
        We proceed by coinduction.
        By definition we have
        $\coefficient \e {(f \shuffle g)} = f_\e \cdot g_\e = \coefficient \e {(f \infiltration g)}$.
        Recall the coinductive definition of the two products:
        \begin{align*}
            \deriveleft a (f \shuffle g)
                &= \deriveleft a f \shuffle g + f \shuffle \deriveleft a g, \text{ and} \\
            \deriveleft a (f \infiltration g)
                &= \deriveleft a f \infiltration g
                    + f \infiltration \deriveleft a g
                    + \deriveleft a f \infiltration \deriveleft a g.
        \end{align*}
        There are two cases to consider.
        \begin{enumerate}
            \item In the first case, assume $a \in \Sigma$ and $a \not\in \Gamma$.
            We have $\deriveleft a g = \zero$, so the two equations reduce to
            \begin{align*}
                \deriveleft a (f \shuffle g)
                    = \deriveleft a f \shuffle g \quad\text{and}\quad
                \deriveleft a (f \infiltration g)
                    = \deriveleft a f \infiltration g,
            \end{align*}
            and we can conclude since the two \lhs~are equal by the coinductive hypothesis.

            \item The second case, $a \in \Gamma$ and $a \not\in \Sigma$, is dealt with similarly.
        \end{enumerate}
    \end{claimproof}
    Thanks to the claim and the fact that infiltration-finite series
    are closed under infiltration product (\cref{lem:infiltration closure}),
    we obtain the statement of the lemma.
\end{proof}

\subsection{Equality problem}

The equality problem for infiltration-finite series is decidable.
This can be seen as a generalisation from univariate sequences (\aka~\emph{streams}) $\N \to \Q$
to multivariate series $\series \Q \Sigma$
of the decidability result \cite[Theorem 4.2]{BorealeCollodiGorla:ACMTCL:2024}
instantiated to the stream infiltration product.
\begin{restatable}{theorem}{thmEqInfiltrationDec}
    \label{thm:infiltration finite - decidable equality}
    The equality problem is decidable for infiltration-finite series.
\end{restatable}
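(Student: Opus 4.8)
The plan is to reduce the equality problem to a zeroness test and then solve zeroness by an effective fixed-point computation over polynomial ideals, terminating via the Hilbert basis theorem; this generalises the univariate (stream) procedure of~\cite{BorealeCollodiGorla:ACMTCL:2024} to an arbitrary alphabet. Since infiltration-finite series form an effective differential infiltration algebra (\cref{lem:infiltration closure}), in particular a vector space, we have $f = g$ iff $f - g = \zero$ with $f-g$ again effectively infiltration finite; so it suffices to decide whether a given infiltration-finite series $h$ equals $\zero$.

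By \cref{lem:infiltration finite - working definition} and \cref{lem:infiltration automata = infiltration finite} I would present $h$ by an infiltration automaton, i.e.\ by generators with $h = \sem{X_1}$, transition polynomials $\Delta_a X_i = p_i^a \in \poly\Q{y_1,\dots,y_k}$, and outputs $c_i := F X_i = (\sem{X_i})_\e$, collected into a point $c = (c_1,\dots,c_k) \in \Q^k$. Write $\phi \colon \poly\Q{y_1,\dots,y_k} \to \series\Q\Sigma$ for the infiltration-algebra homomorphism sending $y_i$ to $\sem{X_i}$ and multiplication to $\infiltration$; by \cref{lem:infiltration automata - properties of the semantics} this is exactly the semantics, and $\phi(q)_\e = q(c)$ because the infiltration product multiplies constant terms. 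On the syntactic side, let $\widehat S_a \colon \poly\Q{y_1,\dots,y_k} \to \poly\Q{y_1,\dots,y_k}$ be the algebra endomorphism $y_i \mapsto y_i + p_i^a$ and $D_a := \widehat S_a - \mathrm{id}$ the associated $\widehat S_a$-derivation. A direct check on generators, using that $\deriveleft a$ is a $\sigma_a$-derivation with $\sigma_a = 1 + \deriveleft a$ (cf.\ \cref{eq:infiltration:sem:endomorphism,eq:infiltration:sem:derivation}), gives $\phi \circ \widehat S_a = \sigma_a \circ \phi$ and hence $\phi(D_a q) = \deriveleft a \phi(q)$ for all $q$. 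Consequently $\sem{X_1}_w = (D_w y_1)(c)$ for every word $w$, so $h = \zero$ holds iff the point $c$ is a common zero of the orbit $O := \{\,D_w y_1 : w \in \Sigma^*\,\}$, i.e.\ iff $c \in Z(\langle O\rangle)$, the zero set of the ideal generated by $O$.

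To make this effective I would compute $\langle O\rangle$ as the limit of the ascending chain $I_n := \langle D_w y_1 : \length w \le n\rangle$ of ideals of $\poly\Q{y_1,\dots,y_k}$. This chain stabilises at some index $n^*$ by the Hilbert basis theorem (ascending chain condition), and stabilisation is detected by ideal-membership tests via Gröbner bases~\cite[Ch.~2]{CoxLittleOShea:Ideals:2015}. The crucial point is that stabilisation is genuine, i.e.\ $I_{n^*} = I_{n^*+1}$ implies $I_{n^*} = \langle O\rangle$: arguing by induction on word length, if $D_{w'} y_1 = \sum_j r_j g_j \in I_{n^*}$ with each $g_j$ of length $\le n^*$, then applying the twisted Leibniz rule $D_a(r_j g_j) = D_a r_j \cdot \widehat S_a g_j + r_j\, D_a g_j$ and using $D_a g_j \in I_{n^*+1} = I_{n^*}$ together with $\widehat S_a g_j = g_j + D_a g_j \in I_{n^*}$ shows $D_{a w'} y_1 \in I_{n^*}$. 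Once $\langle O\rangle = I_{n^*}$ is available with an explicit finite generating set, $h = \zero$ reduces to evaluating those finitely many generators at $c$; this also completes the effectiveness claim left pending in \cref{thm:infiltration finite - effective prevariety}. The main obstacle is precisely this propagation step: because $\deriveleft a$ is only an $S_a$-derivation, and the $\deriveleft a$ do not pairwise commute, one must verify that $\widehat S_a$ maps the stabilised ideal into itself, which is what closes the induction; in the unary case this recovers the stream algorithm of~\cite{BorealeCollodiGorla:ACMTCL:2024}.
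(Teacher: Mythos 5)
Your proposal is correct and follows essentially the same route as the paper: reduce equality to zeroness, form the ascending chain of ideals generated by the iterated twisted derivatives $D_w y_1$ (the paper's reachable configurations $\Delta_w X_1$), detect stabilisation via Hilbert's basis theorem and ideal-membership tests, propagate stabilisation using the twisted Leibniz rule for the $\widehat S_a$-derivations, and finally check the finitely many coefficients indexed by words of length at most the stabilisation index. The only difference is presentational: you phrase correctness as membership of the output point $c$ in the zero set of the orbit ideal, whereas the paper applies the semantics homomorphism and takes constant terms, which amounts to the same evaluation.
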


To this end, we adapt a classic technique based on
chains of polynomial ideals and Hilbert's finite basis theorem;
\cf~analogous algorithms for polynomial automata~\cite{BenediktDuffSharadWorrell:PolyAut:2017},
shuffle-finite series~\cite{Clemente:CONCUR:2024},
and univariate infiltration-finite series~\cite{BorealeGorla:CONCUR:2021,BorealeCollodiGorla:ACMTCL:2024}.

Fix an infiltration-finite series $f \in \series \Q \Sigma$
recognised by an infiltration automaton $A = \tuple{\Sigma, X, F, \Delta}$.
Since equality reduces to zeroness, we show how to decide $\sem {X_1} = \zero$.
For a length $n \in \N$,
consider the polynomial ideal $I_n$
generated by all configurations reachable from $X_1$
by reading words of length $\leq n$:
\begin{align*}
    I_n := \idealof {\Delta_w X_1} {w \in \Sigma^{\leq n}}.
\end{align*}
(Recall that a \emph{polynomial ideal} is a set of polynomials $I \subseteq \poly \Q X$
\st~$I + I \subseteq I$ and $I \cdot \poly \Q X \subseteq I$.
For a set of polynomials $P$, we write $\ideal P$ for the smallest polynomial ideal containing $P$.
We refer to \cite{CoxLittleOShea:Ideals:2015} for more details on algebraic geometry.)
The following lemma describes the relevant properties of these ideals.
\begin{lemma}
    \label{lem:infiltration ideals}
    \begin{enumerate}
        \item $I_n \subseteq I_{n+1}$.
        \item $\Delta_a I_n \subseteq I_{n+1}$.
        \item $I_{n+1} = I_n + \idealof {\Delta_a I_n} {a \in \Sigma}$.
        \item $I_n = I_{n+1}$ implies $I_n = I_{n+1} = I_{n+2} = \cdots$.
    \end{enumerate}
\end{lemma}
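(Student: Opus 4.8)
The plan is to establish the four items in order: items (1)--(3) are structural facts relating the transition maps $\Delta_a$ to the ideals $I_n$, whereas item (4) is the stabilisation property that, together with Hilbert's basis theorem, will drive termination of the equality algorithm. Item (1) is immediate, since $\Sigma^{\leq n} \subseteq \Sigma^{\leq n+1}$ means every generator $\Delta_w X_1$ of $I_n$ is already a generator of $I_{n+1}$.

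For item (2) the essential subtlety is that $\Delta_a$ is not a ring homomorphism but an $S_a$-derivation, with $S_a = 1 + \Delta_a$, so I must check that $\Delta_a$ maps the whole ideal $I_n$ (not merely its generators) into $I_{n+1}$. By linearity it suffices to treat one summand $g \cdot \Delta_w X_1$ with $g \in \poly \Q X$ and $\length{w} \leq n$. The twisted Leibniz rule gives $\Delta_a(g \cdot \Delta_w X_1) = \Delta_a g \cdot \Delta_w X_1 + S_a(g) \cdot \Delta_a \Delta_w X_1$; the first summand is a multiple of the generator $\Delta_w X_1 \in I_n$, and since $\Delta_a \Delta_w X_1 = \Delta_{wa} X_1$ — which I verify by a short induction on $\length{w}$ from $\Delta_{a \cdot w} = \Delta_w \Delta_a$ — the second summand is a multiple of the length-$(n+1)$ generator $\Delta_{wa} X_1$. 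Both therefore lie in $I_{n+1}$. For item (3), the inclusion $\supseteq$ is exactly items (1) and (2), while for $\subseteq$ I only need to handle a generator $\Delta_w X_1$ of $I_{n+1}$ with $\length{w} = n+1$: writing $w = va$ with $\length{v} = n$ gives $\Delta_w X_1 = \Delta_a(\Delta_v X_1) \in \Delta_a I_n$, since $\Delta_v X_1 \in I_n$.

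Item (4) then follows from (3) by a one-step induction: assuming $I_n = I_{n+1}$, item (2) gives $\Delta_a I_{n+1} = \Delta_a I_n \subseteq I_{n+1}$, so item (3) yields $I_{n+2} = I_{n+1} + \idealof{\Delta_a I_{n+1}}{a \in \Sigma} = I_{n+1}$, and iterating shows the chain is constant from $I_n$ onward. I expect item (2) to be the main obstacle: the temptation is to treat $\Delta_a$ as multiplicative, but one must instead use the $S_a$-derivation identity so that the correction term $S_a(g) \cdot \Delta_a \Delta_w X_1$ is absorbed by the freshly added length-$(n+1)$ generator rather than escaping $I_{n+1}$.
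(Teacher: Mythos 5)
Your proof is correct and follows essentially the same route as the paper: item (2) via the twisted Leibniz rule for the $S_a$-derivation $\Delta_a$ applied to a general ideal element, item (3) by splitting off the length-$(n+1)$ generators and writing $\Delta_{va} X_1 = \Delta_a \Delta_v X_1$, and item (4) by observing that $I_{n+1}$ is determined by $I_n$. Your identification $\Delta_a \Delta_w X_1 = \Delta_{w a} X_1$ is the right one (the paper's display labels this term $\Delta_{a \cdot w_i} X_1$, a harmless slip since either word has length at most $n+1$ and hence yields a generator of $I_{n+1}$).
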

\begin{proof}
    The first point holds by definition.

    For the second point, let $\alpha \in \Delta_a I_n$.
    Then $\alpha$ is of the form
    \begin{align*}
        \alpha
        &= \Delta_a \sum_{i = 1}^m \beta_i \cdot \Delta_{w_i} X_1 = \\
        &= \sum_{i = 1}^m \Delta_a (\beta_i \cdot \Delta_{w_i} X_1) = \\
        &= \sum_{i = 1}^m \left(\Delta_a \beta_i \cdot \Delta_{w_i} X_1
            + S_a \beta_i \cdot \Delta_a \Delta_{w_i} X_1 \right) = \\
        &= \sum_{i = 1}^m \left(\Delta_a \beta_i \cdot \underbrace{\Delta_{w_i} X_1}_{I_n}
            + S_a \beta_i \cdot \underbrace{\Delta_{a \cdot w_i} X_1}_{I_{n+1}}\right),
    \end{align*}
    which is clearly in $I_{n+1}$.

    We now consider the third point.
    The ``$\supseteq$'' inclusion holds by the first two points
    and the fact that $I_{n+1}$ is an ideal.
    For the other inclusion, let $\alpha \in I_{n+1}$.
    We can write $\alpha$ separating words of length exactly $n+1$ from the rest, and we have
    \begin{align*}
        \alpha
            &= \underbrace \beta_{I_n}
            + \sum_{w \in \Sigma^{n+1}} \beta_w \cdot \Delta_w X_1 = \\
            &= \beta + \sum_{a \in \Sigma} \sum_{w \in \Sigma^n} \beta_{w \cdot a} \cdot \Delta_{w \cdot a} X_1 = \\
            &= \beta + \sum_{a \in \Sigma} \sum_{w \in \Sigma^n} \beta_{w \cdot a} \cdot \Delta_a \underbrace{\Delta_w X_1}_{I_n},
    \end{align*}
    where the latter quantity is clearly in the ideal generated by $\Delta_a I_n$.

    The last point follows from the previous one
    since we have shown that $I_{n+1}$ is a function of $I_n$.
\end{proof}

Now consider the chain of ideals
\begin{align}
    \label{eq:infiltration ideal chain}
    I_0 \subseteq I_1 \subseteq \cdots \subseteq \poly \Q X.
\end{align}
By \emph{Hilbert finite basis theorem}~\cite[Theorem 4, §5, Ch. 2]{CoxLittleOShea:Ideals:2015},
there is $N \in \N$ \st~the chain stabilises $I_N = I_{N+1} = \cdots$.
Thanks to the last point of~\cref{lem:infiltration ideals}, such an $N$ is computable:
Indeed, $N$ can be taken to be the smallest $n \in \N$ \st~$I_n = I_{n+1}$,
and ideal equality can be decided by checking that the generators of $I_{n+1}$ are all in $I_n$.
The latter test is an instance of the \emph{ideal membership problem},
which is decidable (in exponential space)~\cite{Mayr:STACS:1989}.

The following property is easy to check.
\begin{lemma}
    \label{lem:infiltration zeroness characterisation}
    Let $N \in \N$ be the stabilisation index
    of the ideal chain~\cref{eq:infiltration ideal chain}.
    $\sem {X_1} = \zero$ if, and only if,
    $\coefficient w \sem {X_1} = 0$ for all $w \in \Sigma^{\leq N}$.
\end{lemma}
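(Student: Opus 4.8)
The plan is to reduce everything to the fact that the output map is a \emph{point evaluation}. Write $c := \tuple{F X_1, \dots, F X_k} \in \Q^k$, so that $F \alpha = \alpha(c)$ for every configuration $\alpha \in \poly \Q X$; recall that the semantics is given by $\coefficient w \sem {X_1} = F \Delta_w X_1 = (\Delta_w X_1)(c)$ for every $w \in \Sigma^*$. The ``only if'' direction is then immediate: if $\sem {X_1} = \zero$ then all of its coefficients vanish, in particular those indexed by words $w \in \Sigma^{\leq N}$.

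For the ``if'' direction, the hypothesis is that $(\Delta_w X_1)(c) = 0$ for every $w \in \Sigma^{\leq N}$, that is, $c$ is a common zero of all the generators of the ideal $I_N = \idealof{\Delta_w X_1}{w \in \Sigma^{\leq N}}$. Since evaluation at $c$ is a ring homomorphism $\poly \Q X \to \Q$, vanishing on a generating set propagates to the whole ideal: for any $\beta_1, \dots, \beta_m \in \poly \Q X$ and words $w_1, \dots, w_m \in \Sigma^{\leq N}$ we have $\bigl(\sum_i \beta_i \cdot \Delta_{w_i} X_1\bigr)(c) = \sum_i \beta_i(c) \cdot (\Delta_{w_i} X_1)(c) = 0$. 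Hence $c$ is a zero of every element of $I_N$.

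It then remains to observe that \emph{every} configuration reachable from $X_1$ already lies in $I_N$. Indeed, by definition $\Delta_w X_1 \in I_{\length w}$ for every word $w \in \Sigma^*$; if $\length w \leq N$ this is contained in $I_N$ by monotonicity of the chain (the first point of~\cref{lem:infiltration ideals}), while if $\length w > N$ then $I_{\length w} = I_N$ by stabilisation (the last point of~\cref{lem:infiltration ideals}). Either way $\Delta_w X_1 \in I_N$, so $\coefficient w \sem {X_1} = (\Delta_w X_1)(c) = 0$ for all $w \in \Sigma^*$, i.e.\ $\sem {X_1} = \zero$.

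The only step requiring any thought is recognising that $F$ is a point evaluation, hence a ring homomorphism: this is precisely what turns ``vanishing on the finitely many generators of $I_N$'' into ``vanishing on all of $I_N$'', and thereby on all reachable configurations $\Delta_w X_1$. Everything else is bookkeeping already supplied by~\cref{lem:infiltration ideals}, so I do not anticipate a genuine obstacle, in keeping with the statement's remark that the property is easy to check.
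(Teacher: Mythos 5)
Your proof is correct, and it rests on the same two pillars as the paper's: that $\Delta_w X_1 \in I_N$ for \emph{every} word $w$ (by monotonicity of the chain up to length $N$ and stabilisation beyond), and that vanishing of the output on the generators of $I_N$ propagates to all of $I_N$. The only divergence is in how the propagation step is carried out. The paper pushes the ideal decomposition $\Delta_w X_1 = \sum_u \beta_u \cdot \Delta_u X_1$ through the semantic homomorphism of \cref{lem:infiltration automata - properties of the semantics}, obtaining $\deriveleft w \sem {X_1} = \sum_u \sem {\beta_u} \infiltration \sem {\Delta_u X_1}$, and then extracts constant terms via $\coefficient \e {(f \infiltration g)} = f_\e \cdot g_\e$. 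You instead observe that the output function $F$ is evaluation at the point $c = \tuple{F X_1, \dots, F X_k}$, hence a ring homomorphism $\poly \Q X \to \Q$, and apply it directly to the decomposition. The two computations produce literally the same numbers, since $\coefficient \e {\sem {\beta_u}} = F \beta_u = \beta_u(c)$ and $\coefficient u {\sem {X_1}} = (\Delta_u X_1)(c)$; but your version is more elementary, needing only the definition of the semantics and the fact (stated explicitly in the paper) that $F$ extends to configurations as polynomial evaluation, and bypassing the coinductively proved homomorphism property and the infiltration product altogether. What the paper's phrasing buys is uniformity with the analogous zeroness arguments for Hadamard and shuffle automata, where the same semantic machinery is deployed.
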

\begin{proof}
    One direction is trivial.
    For the other direction we have,
    for every word $w \in \Sigma^*$,
    \begin{align*}
        \deriveleft w \sem {X_1}
        &= \sem {\Delta_w X_1}
        = \sem {\sum_{u \in \Sigma^{\leq N}} \beta_u \cdot \Delta_u X_1} = \\
        &= \sum_{u \in \Sigma^{\leq N}} \sem {\beta_u} \infiltration \sem {\Delta_u X_1},
    \end{align*}
    where we have used the fact that $\Delta_w X_1$ is in $I_N$.
    By taking constant terms on both sides, we have
    \begin{align*}
        \coefficient w \sem {X_1}
        &= \coefficient \e \deriveleft w \sem {X_1}
        = \sum_{u \in \Sigma^{\leq N}} \coefficient \e \sem {\beta_u} \cdot \coefficient \e \sem {\Delta_u X_1} = \\
        &= \sum_{u \in \Sigma^{\leq N}} \coefficient \e \sem {\beta_u} \cdot \coefficient u \sem {X_1}.
    \end{align*}
    But $u$ has length $\leq N$, and thus $\coefficient u \sem {X_1} = 0$,
    implying $\coefficient w \sem {X_1} = 0$, as required.
\end{proof}
\begin{proof}[Proof of~\cref{thm:infiltration finite - decidable equality}]
    By using the effective closure properties of infiltration-finite series,
    we reduce equality $f = g$ to zeroness $f - g = \zero$.
    So let $f$ be an infiltration-finite series,
    recognised by an infiltration automaton $A$ as above.
    Compute the stabilisation index $N \in \N$ of the ideal chain~\cref{eq:infiltration ideal chain}.
    Thanks to~\cref{lem:infiltration zeroness characterisation},
    we can decide $f = \zero$
    by enumerating all words $w$ of length $\leq N$
    and checking $\coefficient w f = 0$.
    The latter test is performed by applying the semantics of the infiltration automaton.
\end{proof}

\subsection{Commutativity problem}

Having established that infiltration-finite series
are an effective prevariety by~\cref{thm:infiltration finite - effective prevariety},
it follows from \cref{thm:commutativity for effective prevarieties}
that commutativity is decidable for them.

\begin{theorem}
    The commutativity problem is decidable for infiltration-finite series.
\end{theorem}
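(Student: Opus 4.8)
The plan is to derive this theorem as an immediate consequence of the general machinery, since the substantive work has already been done. First I would invoke \cref{thm:infiltration finite - effective prevariety}, which establishes that the class of infiltration-finite series is an effective prevariety. Recall that this rests on three pillars: closure under the vector space operations and left derivatives (\cref{lem:infiltration closure}), the asymmetry-breaking closure under right derivatives (\cref{lem:infiltration closure - right derivative}), and decidability of equality (\cref{thm:infiltration finite - decidable equality}).

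With the effective prevariety property available, the theorem follows by a direct appeal to \cref{thm:commutativity for effective prevarieties}. To exhibit the algorithm explicitly, I would unpack the reduction through \cref{lem:commutativity}: for an infiltration-finite $f \in \series \Q \Sigma$, commutativity is equivalent to the finite family of equations \cref{eq:swap} and \cref{eq:rotate}, ranging over all pairs $a, b \in \Sigma$. Since infiltration-finite series are effectively closed under both $\deriveleft a$ and $\deriveright a$, I can compute finite presentations (infiltration automata) for each of $\deriveleft a \deriveleft b f$, $\deriveleft b \deriveleft a f$, $\deriveleft a f$, and $\deriveright a f$, and then decide each of the $\card \Sigma^2 + \card \Sigma$ equality tests using \cref{thm:infiltration finite - decidable equality}.

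I expect no genuine obstacle at this stage: all the difficulty has been absorbed into the effective prevariety property, and in particular into the right-derivative closure (which required the $\sigma^R_a$-derivation rule of \cref{lem:derive right infiltration} together with the commutation of left and right derivatives from \cref{lem:derive left right commutativity}) and into the ideal-chain argument underlying decidable equality. The present theorem is thus essentially a corollary of these earlier results, and its proof need be no more than a two-line invocation of \cref{thm:infiltration finite - effective prevariety} and \cref{thm:commutativity for effective prevarieties}.
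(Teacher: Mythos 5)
Your proposal matches the paper's proof exactly: the theorem is stated there as an immediate consequence of \cref{thm:infiltration finite - effective prevariety} combined with \cref{thm:commutativity for effective prevarieties}, with all the substantive work (right-derivative closure and decidable equality) done in the preceding lemmas. Your additional unpacking of the finitely many equality tests via \cref{lem:commutativity} is a correct elaboration of the same argument.
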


In fact, commutativity is polynomial time inter-reducible with the equivalence problem.
\begin{lemma}
    \label{lem:infiltration commutativity hard}
    The commutativity problem for infiltration-finite series
    is at least as hard as the equivalence problem, under polynomial time reductions.
\end{lemma}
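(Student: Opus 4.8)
The plan is to mirror the reduction used for Hadamard-finite series in the proof of~\cref{thm:Hadamard-finite series - decidable commutativity}. Since infiltration-finite series are effectively closed under subtraction by~\cref{lem:infiltration closure}, the equivalence problem $f = g$ reduces to the zeroness problem $f - g = \zero$, so it suffices to reduce zeroness to commutativity. The crucial tool is~\cref{lem:shuffle reflects commutativity}, which holds for \emph{arbitrary} series and is therefore available here: for any series $f \in \series \Q \Sigma$ and two fresh input symbols $a, b \not\in \Sigma$, the shuffle product $ab \shuffle f$ is commutative if, and only if, $f = \zero$.

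Concretely, given an infiltration-finite series $f \in \series \Q \Sigma$, I would pick fresh symbols $a, b \not\in \Sigma$ and form $g := ab \shuffle f$. To invoke~\cref{lem:shuffle reflects commutativity} we need $g$ to stay within the class, and this is exactly the content of~\cref{lem:infiltration shuffle}: because the alphabets $\set{a, b}$ and $\Sigma$ are disjoint and both $ab$ and $f$ are infiltration finite, the shuffle $ab \shuffle f$ is again infiltration finite. Here $ab$ is infiltration finite because its iterated left derivatives form a finite set of polynomials, which generate a finitely generated differential infiltration algebra containing $ab$ (equivalently, every polynomial is trivially infiltration finite). Thus $g$ is a legitimate instance of the commutativity problem, and by~\cref{lem:shuffle reflects commutativity} it is commutative precisely when $f = \zero$.

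It then remains to observe that this is a genuine polynomial-time reduction: a finite presentation (infiltration automaton) for $g$ can be built from one for $f$ by following the effective construction underlying~\cref{lem:infiltration shuffle}, and the resulting presentation has size polynomial in that of $f$. Combined with the converse direction, where commutativity reduces to finitely many equality tests via~\cref{lem:commutativity} and each test is decidable since infiltration-finite series form an effective prevariety (\cref{thm:infiltration finite - effective prevariety}), this yields that commutativity and equivalence are polynomial-time inter-reducible, which is exactly the claimed hardness.

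I expect the only substantive step to be the verification that the shuffle construction of~\cref{lem:infiltration shuffle} is size-polynomial and carried out in polynomial time; the rest is routine bookkeeping, since the two nontrivial ingredients (\cref{lem:shuffle reflects commutativity} and~\cref{lem:infiltration shuffle}) are already established. In particular, the genuinely new content of this lemma is just the transfer of the Hadamard-style argument to the infiltration setting, which goes through verbatim once closure under shuffle over disjoint alphabets is available.
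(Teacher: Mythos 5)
Your proposal matches the paper's proof essentially verbatim: both reduce zeroness of $f$ to commutativity of $g := ab \shuffle f$ with fresh symbols $a, b \not\in \Sigma$, invoking \cref{lem:shuffle reflects commutativity} for correctness and \cref{lem:infiltration shuffle} to keep $g$ inside the class, with the same polynomial-time observation about constructing a presentation for $g$. The extra remarks about the converse direction are not needed for this hardness statement but do no harm.
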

\begin{proof}
    Let $f \in \series \Q \Sigma$ be an infiltration-finite series
    and consider two fresh input symbols $a, b \not\in \Sigma$.
    Thanks to~\cref{lem:shuffle reflects commutativity},
    $g := ab \shuffle f$ is commutative if, and only if, $f = \zero$.
    Moreover, since $ab$ is infiltration finite (even rational),
    by~\cref{lem:infiltration shuffle} the series $g$ is infiltration finite.
    Additionally, a finite representation for $g$ can be constructed in polynomial time
    from a finite representation of $f$.
    We have reduced checking whether $f = \zero$
    to checking whether $g$ is commutative.
\end{proof}
% \input{mixed_product}
%!TEX root main.tex
\section{Effective varieties and commutativity}
\label{app:varieties}
In this section we provide more details about~\cref{sec:effective algebraic varieties}.
Given a set of polynomials $P \subseteq \poly \Q X$,
their \emph{zero set} $V(P) \subseteq \C^k$ is the set of their common zeroes,
\begin{align*}
    V(P) := \setof{x \in \C^k}{\forall p \in P : p(x) = 0}.
\end{align*}
Notice that $V(P)$ does not change if we close $P$ under addition, and under product with $\poly \Q X$.
This gives rise to the notion of a \emph{polynomial ideal}~\cite[§4, Ch.~1]{CoxLittleOShea:Ideals:2015},
which is a set of polynomials $I \subseteq \poly \Q X$
\st~$I + I \subseteq I$ and $I \cdot \poly \Q X \subseteq I$.
For a set of polynomials $P$, let $\ideal P$ be the smallest polynomial ideal containing $P$.
Since $V(P) = V(\ideal P)$, we can consider polynomial ideals from now on.

Recall that $\FF = V(P)$ is the zero set of the polynomials $P$ from~\cref{eq:commutativity polynomials}.
Thus, $\FF = V(\ideal P)$,
which, by definition, means that $\FF$ is an \emph{algebraic variety}~\cite[§2, Ch.~1]{CoxLittleOShea:Ideals:2015}.
By \emph{Hilbert's finite basis theorem}~\cite[Theorem 4, §5, Ch. 2]{CoxLittleOShea:Ideals:2015},
$\ideal P$ is finitely generated, however in general there is no computable bound on the number of generators.
In our case however, we can compute finitely many generators for $\ideal P$.
Thanks to the proof leading to decidability of commutativity,
there is a computable bound $N \in \N$ \st\-
the ideal $\ideal P$ is already generated by $P_N \subseteq P$.
In other words, we have $\ideal P = \ideal {P_N}$,
where
\begin{align}
    \label{eq:finitely many commutativity polynomials}
    P_N := \setof {\Delta_u \alpha - \Delta_v \alpha \in \poly \Q X}{u, v \in \Sigma^{\leq N}, u \sim v}.
\end{align}
Notice that $P_N$ is finite and computable.
The number $N$ can be extracted from the proof of decidability of commutativity.
For Hadamard automata, it is Ackermannian,
and for shuffle automata it is doubly exponential.

The existential variant of the commutativity problem
amounts to decide whether there exists an output function $F\in\C^k$
giving rise to a commutative semantics.
In other words, this asks whether $V(\ideal {P_N}) \neq \emptyset$.
By \emph{Hilbert's weak Nullstellensatz}~\cite[Theorem 1, §1, Ch.~4]{CoxLittleOShea:Ideals:2015},
this is the same as checking whether $1 \not \in \ideal {P_N}$,
which can be decided with Gröbner bases~\cite[Ch.~2]{CoxLittleOShea:Ideals:2015}.

The universal variant of commutativity
amounts to decide whether all output functions $F\in\C^k$ give rise to a commutative semantics.
This is the same as whether $V(\ideal {P_N}) = \C^k$,
and thus it means that $P_N$ contains no nontrivial constraint,
i.e., $P_N = \ideal {P_N} = \set 0$.

\end{document}